\documentclass[a4paper,UKenglish,cleveref, autoref, thm-restate]{lipics-v2021}

\title{Reachability in Fixed-Dimensional Continuous VASS} 

\titlerunning{Reachability in Fixed-Dimensional Continuous VASS} 

\author{Michal Ajdarów}{University of Liverpool, United Kingdom}{Michal.Ajdarow@liverpool.ac.uk	}{https://orcid.org/0000-0003-0694-0944}{Supported by the EPSRC project EP/X042596/1}
\author{A. R. Balasubramanian}{Max Planck Institute for Software Systems (MPI-SWS), Kaiserslautern, Germany \and \url{arbalan96@github.io}}{bayikudi@mpi-sws.org}{https://orcid.org/0000-0002-7258-5445}{This research was sponsored in part by the Deutsche Forschungsgemeinschaft project \href{https://gepris.dfg.de/project/389792660}{389792660} TRR 248–CPEC}

\author{Łukasz Orlikowski}{University of Warsaw, Poland}{l.orlikowski@mimuw.edu.pl}{https://orcid.org/0009-0001-4727-2068}{Supported by the ERC grant INFSYS, agreement no. 950398}

\authorrunning{M. Ajdarów, A. R. Balasubramanian and Ł. Orlikowski} 

\Copyright{Michal Ajdarów, A. R. Balasubramanian and Łukasz Orlikowski} 

\ccsdesc[500]{Theory of computation~Models of computation}
\ccsdesc[500]{Theory of computation~Automata over infinite objects}

\keywords{Continuous Counters, Vector Addition Systems, Complexity, Reachability} 

\category{} 

\relatedversion{} 

\nolinenumbers 

\EventEditors{Ana Sokolova and Patrick Totzke}
\EventNoEds{2}
\EventLongTitle{37th International Conference on Concurrency Theory (CONCUR 2026)}
\EventShortTitle{CONCUR 2026}
\EventAcronym{CONCUR}
\EventYear{2026}
\EventDate{September 1--4, 2026}
\EventLocation{Liverpool, UK}
\EventLogo{}
\SeriesVolume{391}
\ArticleNo{41}

\usepackage{todonotes}
\usepackage{environ}
\usepackage{apxproof}
\input{macros}

\begin{document}

\maketitle

\begin{abstract}
Vector Addition System with States (VASS) are a ubiquitous model of infinite-state systems consisting of a set of non-negative counters which can be incremented and decremented. It is known that the reachability problem for VASS is Ackermann-complete. Because of this huge complexity, various over-approximations of VASS have been studied in the literature. One such over-approximation is continuous VASS (CVASS), in which the counters are (non-negative) rational numbers and whenever a vector is added to the current counter values, it is first scaled with an arbitrarily chosen rational factor between zero and one. It is known that the reachability problem for CVASS is $\NP$-complete.

In this paper, we initiate the study of fixed-dimensional CVASS, i.e., CVASS with a fixed number of counters. We study both the reachability and coverability problems, under both unary and binary encodings as well as over both the non-negative and the rational semantics. This gives rise to a collection of eight different problems. As our main result, we prove a complexity dichotomy for all of these eight problems when the transition vectors are over the rationals: For dimension 1, all of the eight problems are in $\AC^1$, and so within $\mathsf{P}$, whereas for any dimension at least 2, all of the eight problems are $\NP$-complete. Furthermore, the hardness holds even when the underlying automaton is acyclic. This presents the first $\NP$-hardness result for acyclic unary counter systems over fixed dimensions. To achieve this result, we present a new technique called the Egyptian prime fractions technique.

Finally, we also study these problems when the transition vectors are over the integers. Except for dimension 2, we classify the complexity of these problems over the non-negative semantics: For dimension 1, all of the problems are in $\AC^1$, whereas for dimensions 3 and above, all of the problems are $\NP$-complete.
\end{abstract}
	
\section{Introduction}

Vector Addition System with States (VASS) (or equivalently Petri nets) are one of the most well-studied models of infinite-state systems with applications in logic, automata, verification and modeling concurrent and business process~\cite{siglog/Schmitz16,EsparzaGLM17,GermanS92,Aalst98}. A VASS consists of finitely many states along with finitely many counters ranging over the natural numbers. The transitions of a VASS allow it to move from one state to another, whilst incremeting or decrementing the values of the counters (provided the new values are also natural numbers). One of the most important problems studied for VASS is the \emph{reachability} problem, which asks if a given source configuration $(p,\bu)$ can reach a given target configuration $(q,\bv)$. After decades of research, the complexity of the reachability problem for VASS was finally settled to be Ackermann-complete~\cite{Leroux21,CzerwinskiO21}.

Due to this huge complexity of the reachability problem, various \emph{over-approximations} of the reachability relation for VASS have been studied in the literature (See~\cite{Blondin20} for a survey). Since usually the target configuration in the reachability problem is an error configuration that we do not want the source configuration to reach, if we show that this error is not reachable in an over-approximation, then the same applies for the original VASS as well. If the over-approximation is tractable, then this provides a method to prove safety that trades completeness for efficient analysability.

One particular over-approximation that has proven to be quite useful in recent years is the \emph{continuous semantics} (equivalently also called as continuous VASS or CVASS)~\cite{david2005discrete,BlondinH17}. In this semantics, the transitions of a VASS are allowed to be fired \emph{fractionally}. This means that, to fire a transition $t$, we first pick a fraction $\alpha \in (0,1]$ and then execute the updates of $t$ by the fraction $\alpha$, i.e.,
if the transition $t$ originally incremented (resp. decremented) some counter by a value $c$, it will now be incremented (resp. decremented) by the value $\alpha \cdot c$. (Hence picking $\alpha = 1$ is the same as performing the original transition). Because of this \emph{fractional firing}, counter values can now be (non-negative) rational numbers.
Depending on whether we allow the counters to only be non-negative or allow them to go below zero, we get two types of continuous semantics, namely the $\qnz$-semantics and $\qn$-semantics respectively. 

It turns out that both types of continuous semantics are remarkably well-behaved in terms of complexity. Reachability for both of them is $\NP$-complete and is hence much lower than the usual semantics of VASS~\cite{BlondinH17}. Besides this tractability result, the continuous semantics is quite useful in analysing practical benchmarks for problems related to reachability~\cite{EsparzaLMMN14,BlondinFHH17}, has applications in modeling client-server systems~\cite{client-server}, biological networks and biochemical settings~\cite{HaarK25,HeinerGD08,HerajyH18} and in parameterized verification~\cite{BalasubramanianER23,Balasubramanian20}.
However, despite all of this wealth of positive results on continuous VASS, not much research has been devoted to studying them over fixed-dimensions, i.e., studying continuous VASS over a fixed number of counters. This situation is especially surprising, considering the fact that for VASS and many of its over-approximations, better complexity results are known for small dimensions in different settings~\cite{BlondinEFGHLMT21,CzerwinskiJ0O25,GurariI81,BaumannDGIMSZ23}.

In this paper, we initiate the systematic study of the reachability as well as the coverability problem in fixed-dimensional continuous VASS. In the latter problem,
we are given two configurations $(p,\bu)$ and $(q,\bv)$ and we want to check 
if $(p,\bu)$ can reach a configuration of the form $(q,\bv')$ for some $\bv' \ge \bv$.
We study both of these problems for both the $\qnz$-semantics as well as the $\qn$-semantics, under both the unary and the binary encoding of numbers. (Hence, we get eight different problems, by picking one attribute in each of the three categories).

Our main result is a complete complexity classification of all of the eight problems for any fixed dimension, when the updates on the transitions are allowed to be rational numbers.
Namely, first we show that for dimension 1, all of the above eight problems are in $\AC^1$, and so within $\mathsf{P}$. Then, we show that 
for any dimension $d \ge 2$, all of the above eight problems are $\NP$-complete, and hence
as hard as the case of an arbitrary number of counters. To achieve the desired $\NP$-hardness result, we introduce a novel technique called the \emph{Egyptian prime fractions} technique, where we use \emph{sums of reciprocals of primes} to uniquely encode the assignments of a propositional formula into a single rational-valued counter.

Prior to our results, only an $\mathsf{NC}^2$ upper bound was known for 1-dimensional CVASS~\cite[Theorem 1]{BlondinLMOP23}, which we improve to an $\AC^1$ upper bound in this paper. Also, prior to our results, $\NP$-hardness was only known for reachability for 2-dimensional CVASS over binary encoding~\cite[Lemma 4.13]{BlondinH17}. 
We strengthen this result to apply even for coverability over the unary encoding. Furthermore, our $\NP$-hardness result, for all of the eight problems, already applies to \emph{acyclic} continuous VASS, i.e., even when the underlying control structure of the continuous VASS is acyclic. This presents the first $\NP$-hardness result for acyclic unary fixed-dimensional counter systems, as well as the first $\NP$-hardness result for 2-dimensional unary VASS  and its over-approximations. 

Finally, we also study these problems for the $\qnz$-semantics when the updates on the transitions are restricted to be integers. In this case, we present a complexity classification of these problems (reachability/coverability, unary/binary encoding) for any fixed dimension, except for the case of dimension 2: For dimension 1, all four problems are in $\AC^1$, whereas for dimensions 3 and above, all four problems are $\NP$-complete. Once again, the hardness results already hold for 3-dimensional acyclic unary continuous VASS. This is in stark contrast to VASS (and some of its other over-approximations and variants), where coverability over the unary encoding for any fixed dimension is in $\NL$~\cite{GurariI81,BaumannDGIMSZ23,Rackoff78}.

\section{Preliminaries}
\subparagraph*{Basic Notions. }
We use $\nn$, $\zz$, $\qq$ and \(\qq_+ \)  for the sets of non-negative integers, integers, rational numbers and non-negative rational numbers, respectively. For $a \ge 1$, let $[a] := \{1,2,\dots,a\}$. 
For a vector $\vr v \in \qq^d$, we denote its $j$-th coordinate by $\vr v[j]$. 
For $\vr u, \vr v \in \qq^d$, we write $\vr u \ge \vr v$ if $\vr u[i] \ge \vr v[i]$ for all $i \in [d]$.



\subparagraph*{Continuous Vector Addition System with States. }
Let $d \in \nn$. A $d$-dimensional continuous vector addition system with states ($d$-CVASS) is a tuple $\mach = (Q, T)$ where $Q$ is a finite set of control states and $T \subseteq Q \times \qq^d \times Q$ is a finite set of transitions. $\mach$ is called acyclic, if the directed graph induced by $\mach$ with $Q$ as vertices and edges corresponding to the transitions $T$, is an acyclic graph. 

Intuitively, the $d$-CVASS $\mach$ is a system with $d$ counters, each of them storing a rational value. Its transitions update the counters by incrementing or decrementing them in a specific manner, as described below.
A \emph{configuration} of $\mach$ is a pair  $p(\vr v)$ where $p \in Q$ is a state and $\vr v \in \qq^d$ is a vector representing the current values of the $d$ counters. 
Given two configurations $p(\vr u)$ and $q(\vr v)$, we say that there is a \emph{step} from $p(\vr u)$ to $q(\vr v)$ if there exists a transition $t = (p, \vr w, q)$ and a non-zero fraction $\alpha \in (0,1]$ such that $\vr v = \vr u + \alpha \cdot \vr w$. In this case,
we denote it by either $p(\vr u) \actqn{\alpha t} q(\vr v)$, or just as $p(\vr u) \actqn{} q(\vr v)$. Moreover, we use $p(\vr u) \actqnz{} q(\vr v)$ to denote that $p(\vr u) \actqn{} q(\vr v)$ and additionally both \(\vr u\geq 0\) and \(\vr v \geq 0 \). These step relations are called the $\qn$-semantics ($\actqn{}$) and the $\qnz$-semantics ($\actqnz{}$), respectively.

A \(\qq \)-run (also called a run over the $\qn$-semantics) is a sequence of configurations $q_0(\vr u_0), q_1(\vr u_1), \ldots, q_l(\vr u_l)$ such that $q_{i-1}(\vr u_{i-1}) \actqn{} q_i(\vr u_i)$ for every $i \in [l]$. The notion of a $\qnz$-run is defined analogously. We say that $p(\bu)$ can reach $q(\bv)$ under the $\qn$-semantics (resp. $\qnz$-semantics) if there exists a $\qn$-run (resp. $\qnz$-run) from $p(\bu)$ to $q(\bv)$. We use $\actqn{*}$ (resp $\actqnz{*}$) to denote the $\qn$-reachability relation (resp. $\qnz$-reachability relation) between configurations.
Finally, we say that $p(\bu)$ can cover $q(\bv)$ over the $\qn$-semantics (resp. $\qnz$-semantics) if there is a $\qn$-run (resp. $\qnz$-run) from 
$p(\bu)$ to some $q(\bv')$ with $\bv' \ge \bv$.


\subparagraph*{Reachability and Coverability}
The main focus of this paper are the reachability and coverability problems for CVASS in fixed dimension, where the number of counters $d$ is fixed and not part of the input.  We now define these problems formally.

\tcbset{
	problembox/.style={
		colback=gray!8,
		colframe=gray!50,
		boxrule=0.4pt,
		arc=5pt,
		left=5pt, right=5pt, top=3pt, bottom=0pt,
		width=0.47\textwidth,
		equal height group=problems,
		nobeforeafter
	}
}
\noindent
\begin{tcolorbox}[problembox]
	\begin{statement}
		\begin{center}$d$-$\Reach_{\qn}$\end{center}
		\vspace{-28pt}
		\probleminput{A $d$-CVASS and two of its configurations $p(\bu), q(\bv)$.}
		\problemquestion{Can $p(\bu)$ reach $q(\bv)$ over the $\qn$-semantics?}
	\end{statement}\vspace{-7pt}
\end{tcolorbox}
\hfill
\begin{tcolorbox}[problembox]
	\begin{statement}
		\begin{center}$d$-$\Cover_{\qn}$\end{center}
		\vspace{-28pt}
		\probleminput{A $d$-CVASS and two of its configurations $p(\bu), q(\bv)$.}
		\problemquestion{Can $p(\bu)$ cover $q(\bv)$ over the $\qn$-semantics?}
	\end{statement}\vspace{-7pt}
\end{tcolorbox}


The problems $d$-$\Reach_{\qnz}$ and $d$-$\Cover_{\qnz}$ are defined analogously, with $\actqn{*}$ replaced by $\actqnz{*}$. We further distinguish these problems by \emph{unary} and \emph{binary} encodings, depending on how the input is encoded. Thus, for each fixed $d$, we obtain eight variants, determined by reachability vs.\ coverability, $\qn$- vs.\ $\qnz$-semantics, and unary vs.\ binary encoding.


The main result of this paper is a complete complexity classification of these eight problems. This requires the (logspace-uniform) complexity class $\AC^1$, defined as the set of decision problems solvable by polynomial-sized unbounded fan-in Boolean circuits with depth $O(\log n)$.  It is known that  $\NL \subseteq \AC^1 \subseteq \mathsf{P}$~\cite[Fig. 4.6]{Vollmer}. Also, iterated integer multiplication, i.e., the task of multiplying a given set of $n$ integers, each encoded in binary using at most $n$ bits, is also known to be in $\AC^1$~\cite[Corollary 4.1]{HesseAB02}. 

We are now ready to state the two main results of the paper, which are as follows. (See the top part of Table~\ref{tab:cvass_complexity} for a succinct description of these results).

\begin{theorem}\label{thm:main-result-1-CVASS}
     1-$\Reach_{\qn}$, 1-$\Cover_{\qn}$, 1-$\Reach_{\qnz}$ and 1-$\Cover_{\qnz}$ are all in $\AC^1$ for both unary and binary encoding.
\end{theorem}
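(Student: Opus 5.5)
The plan is to reduce every variant to a single question: does there exist a path from $p$ to $q$ in the control graph whose positive or negative weight sum crosses a threshold? Such a question can then be answered by max-plus matrix powering, which fits in $\AC^1$.

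\emph{Eliminating coverability.} I first reduce coverability to reachability, so that only 1-$\Reach_{\qn}$ and 1-$\Reach_{\qnz}$ need to be handled, in both encodings. Given $q$, I add a fresh state $q'$, the transitions $(q,0,q')$ and $(q,-1,q')$, and a self-loop $(q',-1,q')$. Then $q(\bv)$ is coverable iff $q'(\bv)$ is reachable: by firing the loop with suitable fractions, any surplus can be removed, and in the $\qnz$-semantics this is done without going below $\bv\ge 0$. The construction is in logspace, so it preserves membership in $\AC^1$.

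\emph{Characterising $\qn$-reachability along a fixed path.} Fix a path $\pi=t_1\cdots t_m$ with weights $w_i$. Let $P_\pi=\sum_{w_i>0}w_i$ and $N_\pi=\sum_{w_i<0}w_i$. Because each $\alpha_i\in(0,1]$, the set $\{\sum_i\alpha_i w_i\}$ is an interval:
\begin{itemize}
\item its upper end is $P_\pi$, excluded if $\pi$ contains a negative transition;
\item its lower end is $N_\pi$, excluded if $\pi$ contains a positive transition;
\item it equals $\{0\}$ if all weights are $0$.
\end{itemize}
Every such interval contains values arbitrarily close to $0$ of each sign that occurs on $\pi$. Extending a path only enlarges $P_\pi$ and $|N_\pi|$. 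Hence, for $\delta=\bv-\bu>0$, reachability holds iff some $p$--$q$ path has $P_\pi>\delta$, or has $P_\pi\ge\delta$ and no negative transition; the case $\delta<0$ is symmetric. Each condition reduces to deciding, for a graph with nonnegative weights $\max(w,0)$ (possibly restricted to the nonnegative transitions), whether the maximum weight of a $p$--$q$ path is $>\delta$ or $\ge\delta$. It is $+\infty$ exactly when some $p$--$q$ path passes through an SCC containing a positive transition, which is an $\NL$ check.

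\emph{Computing the threshold in $\AC^1$; this is the main obstacle.} Prior work obtained only $\mathsf{NC}^2$ at this step. I first bring all weights to a common denominator $L$, the product of all denominators. $L$ is computed by iterated multiplication, which is in $\AC^1$ by \cite{HesseAB02}, and the resulting integers $w\cdot L$ are obtained by division in $\mathsf{TC}^0\subseteq\AC^1$. After removing infinite cases, a longest path can be taken of length at most $n$, with integer weights of polynomial bit-length. A max-plus matrix product needs only additions of two binary numbers and an $n$-ary max. Both are $\AC^0$ with unbounded fan-in, so $\lceil\log n\rceil$ repeated squarings give $O(\log n)$ depth. Composing a constant number of $\AC^1$ stages stays within $\AC^1$. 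In the unary case this already lies in $\NL$.

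\emph{The $\qnz$-semantics.} I show that the only extra constraints appear at the boundary values $0$. Fix a path and consider the set of fraction vectors $(\alpha_i)$ for which the run stays nonnegative. This set is convex, since the constraints are linear in $(\alpha_i)$. Two kinds of runs in it are available:
\begin{itemize}
\item runs where negative transitions get fraction $\varepsilon$: when $\bu>0$ these remain nonnegative, and they achieve values up to about $P_\pi$;
\item runs where positive transitions get fraction $\varepsilon$: they stay nonnegative as long as the final value does, giving every target $\bv\ge 0$ down to the $\qn$ bound.
\end{itemize}
By convexity, when $\bu,\bv>0$ the $\qnz$-reachable targets coincide with the $\qn$-reachable ones along that path. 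If $\bu=0$, I additionally require that the first non-zero transition on $\pi$ is positive; if $\bv=0$, that the last non-zero one is negative (or everything is zero). These are regular constraints on the path, so a product of the graph with a constant-size automaton tracking them keeps the same max-plus computation in $\AC^1$.
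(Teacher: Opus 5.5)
Your proof is correct, and it takes a genuinely different route from the paper's.

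\textbf{How the two routes differ.} The paper works in three stages.
\begin{enumerate}
\item It reduces $\qnz$-reachability, and in fact any interval semantics, to $\qn$-reachability. This is a case analysis on whether source and target lie at a closed endpoint, using rescaling surgery on runs and product machines that track the first or last non-neutral transition.
\item It characterises $\qn$-reachability for $x<y$ by knots (a path through a cycle with a positive label), or by a path of length at most $|Q|$ with large weight.
\item It decides the weight condition by citing the $\AC^1$ bound for coverability in one-dimensional integer VASS, after clearing denominators.
\end{enumerate}
You instead compute, for a fixed path, the exact set of achievable effects as an interval with endpoints $N_\pi$ and $P_\pi$. This turns every variant into a max-plus longest-path question, whose unbounded case is exactly the paper's knot condition. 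You then prove the $\AC^1$ bound yourself by $\lceil\log n\rceil$ tropical squarings, each in $\mathsf{AC}^0$. For $\qnz$ you replace the paper's surgery by convexity of the set of feasible fraction vectors; this ends up with the same boundary conditions the paper's product machines encode. Your route is more self-contained and uniform. The paper's is more modular: it handles general intervals, which it reuses for the zero-test remark, and it gives $\NL$ bounds for the subconditions it isolates.

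\textbf{Where your criterion is the correct one.} Your threshold is the positive-part sum $P_\pi$, strict when a negative transition occurs. The paper's Theorem~\ref{thm:charac-inequality} instead uses the net weight $W(P)\ge y-x$. Its proof relies on $\sum_i w_i\ge\sum_i\alpha_i w_i$, which fails as soon as a negative transition is fired with fraction below $1$.
\begin{itemize}
\item Concretely, take only the transitions $(p,2,r)$ and $(r,-1,q)$. The run $p(0)\to r(2)\to q(3/2)$, with fractions $1$ and $1/2$, exists. Yet there is no cycle, and the unique path has $W=1<3/2$.
\item Likewise, your $\qnz$ condition for $u=v=0$ requires the first non-zero transition to be positive \emph{and} the last one negative. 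Case~2 of the proof of Theorem~\ref{thm:inte-to-qn} misses this corner case. On a chain with labels $+1,-1,+1$ it accepts, although no $\qnz$-run from $p(0)$ to $q(0)$ exists.
\end{itemize}
The paper's $\AC^1$ argument is easily repaired along your lines: run it on weights $\max(w,0)$, or respectively on the nonnegative transitions only.

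\textbf{Small points in your write-up.} None of these affects the $\AC^1$ bound.
\begin{itemize}
\item You never state the $\qn$ criterion for $v=u$: a path containing both signs, or only zero labels. It is immediate from your interval description and is an $\NL$ check.
\item For $u=0$, sufficiency of your extra condition needs a one-line argument. Peel off the first positive transition and apply the case $u,v>0$ to the remainder.
\item $L$ should also include the denominators of $u$ and $v$.
\item The aside that the unary case is in $\NL$ is justified only for integer updates. With unary rationals, the common denominator, and hence the running sum, can need polynomially many bits.
\end{itemize}
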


\begin{theorem}\label{thm:main-result-2-CVASS}
     For any $d \ge 2$, $d$-$\Reach_{\qn}$, $d$-$\Cover_{\qn}$, $d$-$\Reach_{\qnz}$ and $d$-$\Cover_{\qnz}$ are all $\NP$-complete for both unary and binary encoding.
\end{theorem}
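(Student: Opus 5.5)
The proof splits into membership in $\NP$ and $\NP$-hardness. For membership, I would invoke the known result of Blondin and Haase~\cite{BlondinH17} that reachability in CVASS of arbitrary dimension is in $\NP$ under both the $\qnz$- and the $\qn$-semantics with binary encoding. Unary inputs are special cases of binary ones. Coverability reduces to reachability in polynomial time: add a fresh final state $q'$ with a transition $q \to q'$ carrying the zero vector, and $d$ self-loops on $q'$ carrying $-\vec e_i$. Since fractions are arbitrary and loops may be repeated, these loops can decrement each counter by any non-negative rational amount, so covering $q(\bv)$ is equivalent to reaching $q'(\bv)$. Hence all eight problems lie in $\NP$ for every $d$.

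For hardness, it suffices to treat $d=2$. Higher dimensions follow by padding with counters whose transition updates are all $0$ and whose source and target values are $0$. It is also enough to prove hardness of unary coverability and unary reachability for an acyclic 2-CVASS under both semantics.

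I would reduce from 3-SAT using the Egyptian prime fractions idea. Let the variables be $x_1,\dots,x_n$. Assign distinct primes to the literals, say $p_i$ to $x_i$ and $p'_i$ to $\neg x_i$, taken among the first $2n$ primes so that they are polynomially bounded and unary encoding is fine. The first phase is an acyclic chain of choice gadgets. For each $i$, the run branches: it either adds $1/p_i$ or adds $1/p'_i$ to counter $1$. With a second counter tracking the complement, it must add these values with fraction exactly $1$. To force full firing, I would make each transition add $(1/p, -1/p)$ against a budget. Counter $2$ starts at a value $B$ and must end at $0$, and since every fraction is at most $1$, reaching the target forces every fraction to be exactly $1$.

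The clause-checking phase then has, for each clause, branches that subtract from counter $1$ a quantity tied to one of its literals. At the end, the target condition requires counter $1$ to equal a prescribed value. The key arithmetic lemma is uniqueness: a sum $\sum_{p\in S} c_p/p$ with $0\le c_p<p$ determines each $c_p$, because reducing modulo $p$ after multiplying by $\prod p$ isolates each term. This lets one counter store the whole assignment, with clause gadgets "consuming" literal fractions. With fractional scaling, the same lemma shows that only the intended integral combination matches the target exactly.

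The main obstacle is controlling the fractional scaling $\alpha\in(0,1]$ in the clause phase, where partial firings could fake a satisfying assignment. The second counter handles this: each transition moves a mass with a prescribed ratio between the two counters, and the final totals are tight only when every chosen transition fires fully. I would first try to make the two counters complementary and equalise their sums, then check that coverability also forces tightness. Coverability only gives lower bounds, so I would pair them with the upper bounds that come from all fractions being at most $1$. For the $\qnz$-semantics I would keep all values non-negative by design, so the same construction serves both semantics.
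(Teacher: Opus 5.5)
\textbf{Gap: the mechanism that forces every fraction to be $1$.} Several parts of your plan match the paper: the $\NP$ membership argument, the reduction from coverability to reachability, the padding to $d>2$, and storing the assignment in one counter as a sum of prime reciprocals. The gap is in the crux of the hardness proof, forcing $\alpha=1$.

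With updates $(1/p,-1/p)$, counter~2 drops by $\alpha_i/p_{c(i)}$ at step $i$. This amount depends on which branch $c(i)$ was taken, so no single budget $B$ works for all assignments. If $B=\sum_i\max(1/p_i,1/p_i')$, reaching $0$ forces one particular assignment. For smaller $B$, the equation $\sum_i\alpha_i/p_{c(i)}=B$ has solutions with some $\alpha_i<1$.

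More generally, any update of the form $(w,c\,w)$ (moving mass with a prescribed ratio) leaves $x_2-c\,x_1$ invariant. The second counter then carries no information about the fractions beyond what counter~1 already has. What you need is a quantity that every transition changes by exactly $\alpha$, independently of the branch. The paper replaces each update $w$ by $(w,1-w)$, so the sum of the two counters grows by exactly $\alpha$ per step. Every run of the acyclic machine has exactly $K=n+m$ steps, so from $(0,3K)$ the final sum is at most $4K$. Covering $(0,4K)$ then forces all fractions to be $1$ and counter~1 to end at exactly $0$. This is the "lower bound paired with an upper bound" you gesture at in your last paragraph, but your concrete construction does not realise it.

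\textbf{The clause phase is underspecified in a way that matters.} Suppose the assignment phase adds $1/P(\ell)$ once per true literal and each clause subtracts $1/P(\ell)$ for a single chosen literal. By uniqueness, a final value $0$ then says every true literal is chosen by exactly one clause. That is not satisfiability, since a true literal may satisfy several clauses or none. You do not name another fixed target that would work. The paper instead adds $\#(\ell)/P(\ell)$ for each true literal. It lets clause $C_i$ subtract $\sum_{\ell\in S}1/P(\ell)$ for any non-empty $S\subseteq\{\ell_i^1,\ell_i^2,\ell_i^3\}$, with target $0$, so all occurrences of true literals can be consumed exactly.

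Your uniqueness lemma also needs the coefficients to stay below the primes. This fails with the first $2n$ primes (including $2$ and $3$) and unbounded occurrences. The paper therefore uses 3-SAT with each literal in at most $4$ clauses and primes $\ge 5$; choosing primes larger than $m$ would also work.
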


In addition to this main result, we also consider a restricted version of CVASS where we only allow the transition vectors to range over the integers. Formally, we say that 
a $d$-CVASS $\mach = (Q,T)$ is a CVASS \emph{with integer updates} if $T \subseteq Q \times \zn^d \times Q$, i.e., all the vectors appearing in the transitions are over $\zn^d$. Our results in this regard are the following: (See the bottom part of Table~\ref{tab:cvass_complexity} for a succinct description of these results).

\begin{theorem}\label{thm:main-result-integer-updates-1-CVASS}
     For CVASS with integer updates, 1-$\Reach_{\qn}$, 1-$\Cover_{\qn}$, 1-$\Reach_{\qnz}$ and 1-$\Cover_{\qnz}$ are $\NL$-complete over the unary encoding and in $\AC^1$ over the binary encoding. 
\end{theorem}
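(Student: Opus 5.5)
The statement has three parts: the $\AC^1$ bound for binary encoding, $\NL$-hardness for unary encoding, and an $\NL$ upper bound for unary encoding. I would dispatch the first two quickly and spend the effort on the third.

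\emph{$\AC^1$ for binary encoding.} A CVASS with integer updates is a special case of a $1$-CVASS. Hence Theorem~\ref{thm:main-result-1-CVASS} already gives the $\AC^1$ bound for both encodings, and in particular for the binary one.

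\emph{$\NL$-hardness for unary encoding.} I would reduce from directed graph reachability. Label every edge with the update $0$, and take $u = v = 0$. Then $p(0)$ reaches (equivalently, covers) $q(0)$ under either semantics iff $q$ is reachable from $p$ in the graph. The reduction is clearly logspace.

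\emph{$\NL$ upper bound for unary encoding: characterisation.} I would first characterise one-dimensional reachability along a fixed walk. Fix a walk $t_1\cdots t_k$ with integer updates $w_1,\dots,w_k$. Let $P$ be the sum of the positive $w_i$ and $N$ the sum of the negative $w_i$. Since each $\alpha_i$ ranges over $(0,1]$ independently, the set of achievable displacements $\sum_i \alpha_i w_i$ is a sum of two sets:
\begin{itemize}
\item the positive contributions range over $(0,P]$, or over $\{0\}$ if no $w_i$ is positive;
\item the negative contributions range over $[N,0)$, or over $\{0\}$ if no $w_i$ is negative.
\end{itemize}
So the displacement set is an interval determined by $P$, $N$ and two sign flags. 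This settles the $\qn$-semantics.

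For the $\qnz$-semantics I would show the following. If $u>0$, any $\qn$-achievable displacement can also be realised staying non-negative: whenever the counter must stay small, fire decrements with tiny fractions. The only obstructions come from zero endpoints.
\begin{itemize}
\item If $u=0$, the first non-zero-update transition on the walk must be positive.
\item Symmetrically, if $v=0$, the last non-zero-update transition must be negative.
\end{itemize}
Turning this into a rigorous \enquote{small fractions} argument is the step I expect to be the main obstacle. I would approach it by choosing all fractions proportionally, scaling each increment up and each decrement down. The counter stays within a narrow band, so it remains positive in the interior of the walk. Finally I would adjust a single fraction to hit $v$ exactly.

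For coverability, the target condition is replaced by \enquote{$v - u$ lies in the displacement set shifted upward}, i.e.\ some displacement is at least $v-u$. The analysis is otherwise the same.

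\emph{$\NL$ upper bound for unary encoding: algorithm.} The characterisation depends only on the current state, the two sign flags, the two side conditions at the endpoints, and the values $P$ and $-N$. The values $P$ and $-N$ only need to be compared with $v-u$, which is polynomially bounded in unary. Because updates are integers, we may therefore cap $P$ and $-N$ at $|v-u|+1$ without losing information; this is where integrality is used.

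An $\NL$ machine then guesses the walk one transition at a time. It keeps in memory the current state, the capped sums, the flags, and whether a non-zero transition has been seen yet. At a nondeterministically chosen moment it checks the final condition. Arbitrary walk lengths are allowed, since the state space of the machine is polynomial, so this is an $\NL$ procedure.
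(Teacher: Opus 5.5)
Your proposal is correct, but for the unary upper bound it takes a different route from the paper.

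\textbf{The paper's route.} It first reduces $\qnz$-reachability to $\qn$-reachability globally (Theorem~\ref{thm:inte-to-qn}). It builds a modified automaton with extra copies of the states, which forces the first (resp.\ last) non-neutral transition to be positive when an endpoint lies on the boundary of the interval. It then characterises $\qn$-reachability structurally (Theorems~\ref{thm:charac-equality} and~\ref{thm:charac-inequality}): either a knot containing a pumpable cycle, or a path of length at most $|Q|$ with $W(P)\ge y-x$. Finally it tests these conditions by guessing bounded-length witnesses (Lemma~\ref{lem:testing-conditions}), and handles coverability through Remark~\ref{rem:cov-to-reach}.

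\textbf{Your route.} You describe exactly the displacement set of a fixed walk: $(N,P)$, $(0,P]$, $[N,0)$ or $\{0\}$, according to the sign flags. You push the $\qnz$ endpoint conditions down to the individual walk. Your NL machine then guesses walks of arbitrary length with capped sums. Pumping happens implicitly, since going around a cycle with a positive edge often enough drives $P$ past the cap. This replaces Proposition~\ref{prop:pumpable-cycle}, the knot analysis and the reversal trick of Remark~\ref{rem:source-target}.

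The ``small fractions'' step you flagged as the main obstacle is exactly Case~1 in the proof of Theorem~\ref{thm:inte-to-qn}. That argument only rescales the fractions along the same sequence of transitions, so it does go through walk by walk. Your boundary cases ($u=0$ or $v=0$) then follow by splitting off the first or last non-zero step.

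\textbf{What each approach buys.} Your argument is shorter and more elementary for this case, but it depends on unary integer updates. The cap yields only polynomially many values because the sums are integers of polynomial magnitude. This fails for binary numbers, and for unary rational updates, where denominators blow up. The paper's bounded-length witnesses are what make its $\AC^1$ argument work in those settings. That is why you rightly import the binary bound from Theorem~\ref{thm:main-result-1-CVASS} rather than from your own characterisation.

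\textbf{Two small fixes.}
\begin{enumerate}
\item Since $u,v$ may be rational, cap at $\lfloor |v-u|\rfloor+1$; any integer exceeding $|v-u|$ works.
\item For coverability, it is cleaner to invoke Remark~\ref{rem:cov-to-reach}, which preserves unary integer updates, than to say the analysis is ``otherwise the same''. The $\qnz$ end condition interacts with the choice of $v'\ge v$.
\end{enumerate}
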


\begin{theorem}\label{thm:main-result-integer-updates-2-CVASS} 
    For CVASS with integer updates, 2-$\Reach_{\qn}$, 2-$\Cover_{\qn}$, 2-$\Reach_{\qnz}$ and 2-$\Cover_{\qnz}$ are $\NP$-complete over the binary encoding.
\end{theorem}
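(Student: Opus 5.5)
The statement to prove is Theorem~\ref{thm:main-result-integer-updates-2-CVASS}. It has two parts: membership in $\NP$, and $\NP$-hardness for 2-dimensional CVASS with integer updates over binary encoding.

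\textbf{Upper bound.} Membership in $\NP$ for all four problems follows from the known $\NP$ upper bound for CVASS reachability in arbitrary dimension under both semantics~\cite{BlondinH17}. Coverability reduces to reachability by adding, at the target state $q$, a fresh state $q'$ with self-loops $-\mathbf{e}_1$ and $-\mathbf{e}_2$, and asking whether $q'(\bv)$ is reachable. Under the $\qnz$-semantics the decrements keep the counters at or above $\bv\ge 0$. Under the $\qn$-semantics coverability is trivial anyway: scaling the loops allows any decrease.

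\textbf{Hardness.} The plan is to reduce from Subset Sum, or better from the known $\NP$-hardness of 2-dimensional binary CVASS reachability~\cite[Lemma 4.13]{BlondinH17}, and to check or adapt that construction so that it uses only integer vectors. The key difficulty with fractional firing is forcing a transition to fire with $\alpha=1$ or not at all. With two counters and binary integers, I would use the following gadget.

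Keep a ``budget'' counter $y$ and a ``sum'' counter $x$. Let $a_1,\dots,a_n$ with target $s$ be the Subset Sum instance. Build an acyclic chain of states $p_0,\dots,p_n$. From $p_{i-1}$ to $p_i$ there are two paths:
\begin{itemize}
\item a ``skip'' path, with a single transition $(0,-M_i)$;
\item a ``take'' path, with a single transition $(a_i,-M_i)$.
\end{itemize}
Each path is additionally preceded by a transition $(0,M_i)$ paired with a matching mechanism, so that $y$ must return exactly to its original value. Concretely, $y$ starts at $\sum_i M_i$ and must end at $0$. Since each step from $p_{i-1}$ to $p_i$ decrements $y$ by at most $M_i$, reaching $y=0$ forces every $\alpha=1$. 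Then $x=\sum_{i\in S}a_i$, and the target is $(s,0)$.

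Under the $\qnz$-semantics this needs no extra care. Under the $\qn$-semantics the argument works identically, because the acyclic chain uses each transition once and $y$ only decreases. This already gives reachability hardness for both semantics.

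\textbf{Main obstacle: coverability.} Coverability of $(s,0)$ only yields $x\ge s$, which is trivially achievable and so gives no hardness. To fix this, I would make the two counters complementary. Each ``take'' transition becomes $(a_i,-a_i-M_i)$ and each ``skip'' becomes $(0,-M_i)$. Start from $(0,\sum_i M_i+s)$ and cover $(s,0)$.

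The covering condition then reads $x\ge s$ together with $y=\sum_i(1-\alpha_i)M_i+s-\sum_{i\in S}\alpha_i a_i\ge 0$. Choosing $M_i$ large and $\alpha$ forced is problematic: fractions would let $y$ stay large. So I must instead argue that fractional choices never help. Since $x+y$ decreases by $\sum_i\alpha_iM_i$, choose all $M_i$ equal to a value $M$ exceeding $\sum_i a_i$. Then $x+y\ge s$ forces $\sum_i\alpha_i\ge n$, hence every $\alpha_i=1$, and $x\ge s$, $y\ge 0$ together give exactly $\sum_{i\in S}a_i=s$.

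Under the $\qn$-semantics the intermediate negativity is harmless, so the same argument applies. This completes hardness for all four problems.
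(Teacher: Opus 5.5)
The theorem also claims $\NP$-hardness of coverability, and your coverability reduction does not establish it. (Your reachability gadget is fine in its concrete form: $y$ starts at $\sum_i M_i$, is only decremented, and the target is $(s,0)$, so reaching $y=0$ forces every $\alpha_i=1$. Drop the extra transitions $(0,M_i)$, though: they add nothing, and fired fractionally they would break the count.)

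The coverability step has the inequality reversed. With take $=(a_i,-a_i-M)$, skip $=(0,-M)$ and start $(0,nM+s)$, you get $x+y=nM+s-M\sum_i\alpha_i$. So $x+y\ge s$ is equivalent to $\sum_i\alpha_i\le n$, which always holds; it does not force $\sum_i\alpha_i\ge n$. Concretely, fire every take transition with $\alpha=s/\sum_j a_j$. Then $x=s$ and $y=nM(1-\alpha)\ge 0$, and $y$ decreases monotonically, so all intermediate values are non-negative. Hence $(s,0)$ is covered under both semantics for every instance with $0<s\le\sum_j a_j$, solvable or not. Adjusting the constants will not fix this. In coverability, a quantity that the run only consumes can never force full firing, because firing less always leaves more.

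The missing idea is the one the paper uses in its Stage~2: every transition must \emph{add} to a quantity, and the target must demand the maximum achievable amount. The paper appends a second counter with update $1-w$, so the sum of the two counters grows by exactly $\alpha$ per step. Covering $(0,4K)$ from $(0,3K)$ then forces all $K$ fractions to equal $1$.

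In your setting this becomes: use the take transition $(a_i,M-a_i)$ and the skip transition $(0,M)$ with $M\ge\max(s,a_1,\dots,a_n)$, start at $(0,0)$, and cover $(s,nM-s)$. Then $x+y=M\sum_i\alpha_i\ge nM$ forces all $\alpha_i=1$. After that, $x\ge s$ and $y\ge nM-s$ give $\sum_{i\in S}a_i=s$. All updates are non-negative, so the $\qnz$-semantics causes no difficulty. Coverability hardness then gives reachability hardness through the paper's coverability-to-reachability reduction, so your separate reachability gadget becomes unnecessary.

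Repaired this way, your Subset Sum route is a valid and more elementary alternative to the paper's. The paper instead obtains this theorem by scaling its unary rational Egyptian-prime-fraction construction by the product of all denominators, a route it needs anyway for the stronger unary result.
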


\begin{theorem}\label{thm:main-result-integer-updates-3-CVASS}
    For CVASS with integer updates, for any $d \ge 3$, $d$-$\Reach_{\qnz}$ and $d$-$\Cover_{\qnz}$ are  $\NP$-complete for both unary and binary encoding.
\end{theorem}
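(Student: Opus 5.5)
The upper bound needs no new work. Every $d$-CVASS with integer updates is in particular a CVASS, and reachability and coverability for CVASS over the $\qnz$-semantics are in $\NP$ for an arbitrary number of counters and binary encoding~\cite{BlondinH17}. This covers all four problems for every $d \ge 3$. For hardness it suffices to treat $d=3$ with unary encoding: extra counters that are never touched do not change the answer, and a unary instance is in particular a binary one. I would establish hardness for reachability and coverability at the same time, by building a reduction whose target is covered only when it is reached exactly.

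For the lower bound I would not treat the rational-update hardness of Theorem~\ref{thm:main-result-2-CVASS} as a black box. The naive translation multiplies all updates by the lcm of their denominators, which turns a unary instance into one with exponentially large numbers. Instead I would reuse its Egyptian prime fractions construction. There, an assignment of a 3-SAT formula is encoded into one counter as a sum of reciprocals of distinct primes $1/p$, with the primes polynomially bounded. The rational updates are needed only to add quantities such as $1/p$ exactly. The idea is to let the third counter $z$ supply these denominators. Consider an integer transition with update $(1,0,-p)$ or $(0,1,-p)$, taken when $z$ holds exactly $1$. Draining $z$ to $0$ then adds exactly $1/p$ to the chosen counter, because the fraction is forced to be $1/p$. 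The constraint $z \ge 0$ enforces the upper bound, and the requirement that the run ends with $z=0$ enforces the lower bound. The gadget for each prime or clause step therefore has three phases: put one unit into $z$ (a transition $(0,0,1)$), drain $z$ through a transition $(\cdot,\cdot,-p)$, and move on. Since the numbers $p$ are polynomially bounded, the resulting acyclic $3$-CVASS has unary-sized integer updates.

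The main obstacle is that we cannot force the refilling transition $(0,0,1)$ to be fired with fraction $1$. Likewise, a partial drain can leave slack in $z$ that a later gadget uses to add more than $1/p$. To control this, I would pair the two counters: each refill of $z$ by $\alpha$ is coupled to a matching decrement $-\alpha\cdot c$ of a counter $x$ that carries a known, globally conserved budget. Concretely, $x$ starts at $m$, one unit per gadget, and must end at $0$. A weighted invariant such as $x + z + (\text{encoded contributions}) = m$ then shows that every refill must have been complete and every drain total. Otherwise the final target $(0,\ast,0)$ is not attained, and for coverability the target is not covered, because the invariant bounds the relevant counter from above. With this in place, the run semantics of the gadget coincide with the rational gadget of the $d=2$ proof, and correctness follows from the uniqueness property of the Egyptian prime fraction sums established there. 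I expect the bulk of the work to be verifying that no interleaving of partial firings can break the invariant.
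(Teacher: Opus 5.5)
\textbf{Gap: the budget counter does not force each drain of $z$ to be total.} The upper bound and the restriction to $d=3$ with unary encoding are fine. The hardness argument fails where you claim the conserved budget shows that ``every drain'' is total. The sentence ``the requirement that the run ends with $z=0$ enforces the lower bound'' confuses the end of the run with the end of a gadget.

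Under reachability, the budget does force every refill to be complete, since each refill lowers $x$ by at most one unit and $x$ must drop by $m$. But the drains never touch $x$. They are constrained only by $z\ge 0$ throughout and $z=0$ at the very end. Together these fix the total drained mass $\sum_k p_k\beta_k$, not how it is split among gadgets.

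A concrete run shows this:
\begin{itemize}
\item refill $z$ to $1$, then drain with $\beta_1=\frac{1}{2p_1}$, contributing $\frac{1/2}{p_1}$ and leaving $z=\frac12$;
\item refill to $z=\frac32$, then drain with $\beta_2=\frac{3}{2p_2}\le 1$, contributing $\frac{3/2}{p_2}$ and leaving $z=0$.
\end{itemize}
Every refill was complete, $x$ and $z$ end at $0$, and your invariant holds. Yet neither gadget added its intended $1/p$.

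In general, gadget $k$ contributes $c_k/p_k$ for an arbitrary non-negative rational $c_k$, subject only to prefix constraints. Proposition~\ref{prop:Egyptian-prime-encoding} does not apply to such sums, because its divisibility argument needs integer numerators. For example, a drain of mass $p'/p$ through prime $p'$ contributes exactly $1/p$, so a false literal can impersonate a true one. Shifting mass between drains with different primes keeps the total fixed, so the reachable final values of the encoding counter fill intervals rather than the discrete set the uniqueness argument needs. This is exactly the slack problem you identified. A single linear conservation law checked at the end of the run cannot exclude it.

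\textbf{What is missing.} You need a check that $z=0$ at \emph{every} gadget boundary, which amounts to simulating zero-tests. That is the real content of the hardness proof.

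The paper gets it from a controlling counter. The sum of all quantities that must vanish is non-negative and is an affine function of the fractions with \emph{position-dependent} coefficients (Proposition~\ref{prop:cc}). A third counter starts at $L$ minus a constant, subtracts each step's contribution, and must be covered at $\ge L$ (Lemma~\ref{lem:satisfiable-with-controlling-counter}). This also gives coverability for free.

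In your design, two counters are already taken by the encoding counter $y$ and the supplier $z$. So this controlling counter would have to be $x$ itself, tracking $\sum_j(1-\alpha_j)+\sum_k z_k+y_{\mathrm{fin}}$. The gadgets would also have to be padded to a fixed shape so the coefficients do not depend on the path. The term $y_{\mathrm{fin}}$ is needed because the encoding counter must end at exactly $0$; it cannot be left as $\ast$ in the target. A plain budget does none of this.

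For comparison, the paper avoids the supplier counter entirely. It encodes an assignment multiplicatively as $1/\prod_i P(\ell_i)^{\#(\ell_i)}$ in a 2-CVASS with zero-tests, where the zero-tests force every transfer fraction exactly (Proposition~\ref{prop:assignment-gadget}). Only afterwards does it remove the zero-tests with the controlling counter.
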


We note that all of the $\NP$-hardness results hold even when the CVASS is acyclic. 





\begin{table}[htbp]
\resizebox{\textwidth}{!}{%
\begin{tabular}{lllllll}
\hline
\textbf{Problem} & \textbf{Semantics} & \textbf{Encoding} & \textbf{Dimension} &  \textbf{Transitions} & \textbf{Complexity} & \textbf{Theorem} \\
\hline
Coverability/Reachability & $\qnz$ / $\qn$ & Unary / Binary & 1 & Rational Updates & in $\AC^1$ & Theorem~\ref{thm:main-result-1-CVASS} \\
Coverability/Reachability & $\qnz$ / $\qn$ &  Unary / Binary & $\geq 2$ & Rational Updates & $\NP$-complete & Theorem~\ref{thm:main-result-2-CVASS} \\
\hline
Coverability/Reachability & $\qnz$ / $\qn$ & Unary & 1 & Integer Updates & $\NL$-complete & Theorem~\ref{thm:main-result-integer-updates-1-CVASS} \\
Coverability/Reachability & $\qnz$ / $\qn$ & Binary & 1 & Integer Updates & in $\AC^1$  & Theorem~\ref{thm:main-result-integer-updates-1-CVASS} \\
Coverability/Reachability & $\qnz$ / $\qn$ &  Binary & $\geq 2$ & Integer Updates & $\NP$-complete & Theorem~\ref{thm:main-result-integer-updates-2-CVASS} \\
Coverability/Reachability & $\qnz$ &  Unary / Binary & $\geq 3$ & Integer Updates & $\NP$-complete & Theorem~\ref{thm:main-result-integer-updates-3-CVASS}\\
\hline
\end{tabular}%
}
\captionsetup{justification=raggedright, singlelinecheck=false}
\caption{Complexity of reachability and coverability in fixed-dimensional CVASS; A row depicts all possible combinations of problems that can be obtained by the entries in that row; for instance the first row depicts all of the eight possible problems for dimension 1, and for all of them, the upper bound is $\AC^1$.}
\label{tab:cvass_complexity}
\end{table}

All of the $\NP$ membership results in the above theorems follow from the fact that reachability and coverability for $d$-CVASS (over both types of encoding and both semantics) is in $\NP$ for any $d$~\cite[Section IV.B]{BlondinH17}. Hence, the rest of this paper is dedicated to proving all of the other results of the theorems (or equivalently all of the other results in Table~\ref{tab:cvass_complexity}). Since there are a lot of problems that we need to consider, we make a remark that roughly reduces the number of problems by half.

\begin{remark}\label{rem:cov-to-reach}
    	We can easily reduce coverability to reachability in logarithmic space: Given a $d$-CVASS $\mach = (Q,T)$ with configurations $p(\bu)$ and $q(\bv)$, we modify $\mach$ by adding a new transition which moves from $q$ to a new state $q'$. At $q'$, we add $d$ self-loops, each decrementing one of the counters by 1. It is easy to see that
	(over any of the semantics) $p(\bu)$ can cover $q(\bv)$ in $\mach$ iff (over the corresponding semantics) $p(\bu)$ can reach $q'(\bv)$. 
\end{remark}

The rest of this paper is structured as follows: In Section~\ref{sec:1-CVASS}, we prove Theorems~\ref{thm:main-result-1-CVASS} and~\ref{thm:main-result-integer-updates-1-CVASS}. 
In Section~\ref{sec-2CVASS}, we prove Theorems~\ref{thm:main-result-2-CVASS} and~\ref{thm:main-result-integer-updates-2-CVASS}. Finally, in Section~\ref{sec-3CVASS},
we prove Theorem~\ref{thm:main-result-integer-updates-3-CVASS}.

\section{Upper Bounds for 1-CVASS}\label{sec:1-CVASS}

In this section, we prove Theorems~\ref{thm:main-result-1-CVASS} and~\ref{thm:main-result-integer-updates-1-CVASS}. To this end, we  prove the following main result.

\begin{theorem}\label{thm:1-CVASS}
    1-$\Reach_\qn$ and 1-$\Reach_{\qn_+}$ are both 
    \begin{itemize}
        \item $\NL$-complete for 1-CVASS with integer updates over the unary encoding
        \item and in $\AC^1$ for 1-CVASS over both unary and binary encoding.
    \end{itemize}
\end{theorem}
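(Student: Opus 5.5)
We plan to reduce reachability in a 1-CVASS to a graph problem on the control states, so that the remaining computation is either a directed reachability query (in $\NL$) or a polynomial number of such queries combined with arithmetic on rationals that is in $\AC^1$.

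\textbf{Structure of $\qn$-runs.} In dimension one, a path $\pi = t_1\cdots t_k$ with updates $w_1,\dots,w_k$ and fractions $\alpha_i\in(0,1]$ has total effect $\sum \alpha_i w_i$. This is the open--closed interval from $\sum_i \min(w_i,0)$ to $\sum_i \max(w_i,0)$, with the following convention: each endpoint is included or excluded according to whether all transitions with update of the corresponding sign are zero or not. Concretely, the set of achievable effects is $(N,P)$ widened by $0$ if appropriate, or $\{0\}$ for a zero path, or $(0,P]$ for a path with only positive updates, and so on.

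For the $\qn$-semantics, we only need to know which three qualitative features a path can have: whether it uses a positive transition, whether it uses a negative transition, and how large the sums can be. Once a positive (resp. negative) transition lies on a cycle reachable between $p$ and $q$, the corresponding bound becomes unbounded. The plan for $\qn$ is therefore:
\begin{enumerate}
\item Compute the strongly connected components.
\item Determine whether some path from $p$ to $q$ passes through an SCC containing a positive (resp. negative) edge, or uses a positive or negative edge at all. This takes a constant number of reachability queries over a product of the graph with a flag recording the signs seen, which is in $\NL$.
\item If no such cycle exists, compute the maximal positive sum and minimal negative sum over the relevant paths by a longest-path computation.
\end{enumerate}
Longest paths in a graph where the useful part is effectively acyclic can be done by $(\max,+)$ matrix powering in $O(\log n)$ squarings. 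For rational weights in binary, we first bring everything to a common denominator via iterated multiplication, which is in $\AC^1$ by \cite{HesseAB02}. Each $(\max,+)$ product of polynomially-bounded binary numbers is in $\AC^0$-ish depth. This gives $\AC^1$ in total.

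For integer updates in unary, the bounds are polynomial. Moreover, the target lies in an open or half-open interval with integer endpoints, so only a polynomially large window of sums matters. We can then guess a path while tracking the sum capped at $|v-u|+1$ in logspace, which gives $\NL$. Hardness in $\NL$ follows trivially from graph reachability with zero updates.

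\textbf{The $\qnz$-semantics.} Here we would use the known characterisation of Blondin--Haase. $\qnz$-reachability is equivalent to $\qn$-reachability by some path with the same support, combined with forward coverability from $p(u)$ and backward coverability to $q(v)$ (admissibility of the firing set). In one dimension this simplifies a great deal. With a nonnegative counter, any transition can be fired with a tiny fraction as long as the current value is positive. The only real obstruction is leaving value $0$, which requires firing a transition with positive (or zero) update first. Symmetrically, arriving at $0$ requires the last nonzero move to be negative.

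So we would guess the pattern of the run: the first and last transitions that change the counter, and the sign pattern. We would then show that, apart from the constraint at zero endpoints, a $\qnz$-run exists iff a $\qn$-run exists along paths meeting these local constraints. The reason is that all intermediate values can be kept strictly positive by scaling fractions down. This turns the problem into a constant number of instances of the $\qn$ procedure on modified product graphs.

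\textbf{Main obstacle.} The main obstacle is this last step. We need a careful argument that in dimension one positivity of intermediate values can always be restored by rescaling, uniformly for both endpoints $u=0$ and $v=0$. We also need to show that the resulting case distinction needs only constantly many reachability and longest-path calls. I would first try to prove a lemma of the form: if $u,v>0$ then $\qnz$- and $\qn$-reachability coincide on paths with bounded total effect. The proof would scale all $\alpha_i$ by a small $\varepsilon$, then add back the needed net effect using one positive and one negative edge on the path.
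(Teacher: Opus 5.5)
Your plan is correct and has the same skeleton as the paper's proof, but it differs at the key step, and there your version is the one that works. The shared skeleton has three parts:
\begin{itemize}
\item reduce the $\qnz$-semantics to the $\qn$-semantics by constraining the first/last counter-changing transition and rescaling fractions in between (the paper's Stage~1);
\item detect a cycle with a positive edge on some $p$--$q$ path (the paper's knots);
\item otherwise solve a bounded-length path problem, in $\NL$ for unary integer updates and in $\AC^1$ after clearing denominators by iterated multiplication.
\end{itemize}

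The difference is the quantity optimised in the knot-free case. You use the exact effect set of a fixed path: $(N,P)$ when both signs occur, $(0,P]$ with only positive updates, and so on. So for $x<y$ and no knot you need either a path $P$ with $W_+(P)>y-x$, or one with $T_-(P)=\emptyset$ and $W(P)\ge y-x$, where $W_+$ is the sum of the positive labels.

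The paper instead uses the net weight $W(P)$. It claims reachability iff there is a knot or a short path with $W(P)\ge y-x$. Its left-to-right step relies on $\sum_i w_i\ge\sum_i\alpha_i w_i$, which fails as soon as some $w_i<0$. For instance, with $p\act{10}r\act{-9}q$ one reaches $q(5)$ from $p(0)$ using fractions $1$ and $5/9$. Yet there is no cycle, and the only path has $W=1$.

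So make the strict/non-strict split explicit in your step~3. It amounts to two longest-path computations:
\begin{itemize}
\item one over the relevant graph with weights $\max(w,0)$;
\item one over the graph with negative edges deleted.
\end{itemize}
Your $(\max,+)$ matrix powering is more elementary than the paper's appeal to $\AC^1$-coverability of one-dimensional integer VASS. It suffices because, absent a knot, walks of length at most $|Q|$ attain the optimum.

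Likewise, imposing both endpoint constraints at once, together with the all-neutral pattern, correctly covers $u=v=0$. The paper's Case~2 construction does not handle this case. It neither forces the last non-neutral transition to be negative nor admits runs without any non-neutral transition.

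One detail in your rescaling lemma needs fixing. Adding the missing net effect back on a single positive edge can exceed that edge's capacity. Instead, when $v>u$, multiply all negative and neutral fractions by $1/N$, and all positive fractions by one common factor $m\le 1$ chosen so that the total effect is again $v-u$; treat $v<u$ symmetrically. The counter then stays within $\varepsilon$ of the segment between $u$ and $v$, hence positive when $u,v>0$.
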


By Theorem~\ref{thm:1-CVASS} and Remark~\ref{rem:cov-to-reach}, Theorems~\ref{thm:main-result-1-CVASS}
and~\ref{thm:main-result-integer-updates-1-CVASS} immediately follow. (The $\NL$-hardness mentioned in Theorem~\ref{thm:main-result-integer-updates-1-CVASS} is inherited from graph reachability). So, the rest of this section is dedicated to proving Theorem~\ref{thm:1-CVASS}. For this, we first make a remark.

\begin{remark}\label{rem:source-target}
    Since we will be dealing with a single counter for the rest of this section,
    we will use variables like $x,y$ instead of their boldface counterparts $\bx,\by$ to denote counter values.
    
    Now, to prove the upper bounds for 1-CVASS, it suffices to give the algorithms in the case when the input instance is of the form $(\mach,p(x),q(y))$ with $x \le y$. Indeed, if we are given an instance where $x > y$,
    we can consider the \emph{reverse} CVASS $\mach^\dagger$ of $\mach$ which has the same states as $\mach$ with the transitions reversed: $(p',w,q')$ is a transition of $\mach$ iff $(q',-w,p')$ is a transition of $\mach^\dagger$. It is immediately seen that there is a $\qn$-run (resp. $\qnz$-run) from $p(x)$ to $q(y)$ in $\mach$ iff there is a $\qn$-run (resp. $\qnz$-run) from $q(y)$ to $p(x)$ in $\mach^\dagger$.
\end{remark}

We now move on to proving Theorem~\ref{thm:1-CVASS}. We do this in three stages: First,
we show that reachability for 1-CVASS over $\qnz$-semantics (and indeed over a slight generalization of the $\qnz$-semantics) 
can be reduced to $\qn$-semantics. In the second stage, we give a characterization of reachability in 1-CVASS over $\qn$-semantics. Then, in the final stage, we use this characterization to give the required upper bounds for 1-CVASS.
We note that some of the arguments in the first two stages are similar to some of the arguments in~\cite{BlondinLMOP23}, but our presentation makes it simpler to arrive at the upper bound. Furthermore, in the third stage, we improve upon the $\mathsf{NC}^2$ bound in~\cite{BlondinLMOP23} to an $\AC^1$ bound.

\subsection*{Stage 1: Reducing $\qnz$-semantics to $\qn$-semantics}

In the first stage, we shall show that reachability for 1-CVASS over the $\qnz$-semantics
reduces to reachability over the $\qn$-semantics. To this end, we consider a slight generalization of the $\qnz$-semantics so that it also covers an extension of 1-CVASS as considered by~\cite{BlondinLMOP23}.

Let $\inte$ be an interval over $\qn$, i.e., $\inte$ is of the form $(a,b), (a,b], [a,b)$ or $[a,b]$ for some $a \le b$ with $a, b \in \{-\infty,\infty\} \cup \qn$, with each of them interpreted the usual way. Given a 1-CVASS $\mach = (Q,T)$, the $\inte$-semantics of $\mach$ is the restriction of the $\qn$-semantics to only configurations of the form $p(x)$ with $p \in Q$ and $x \in \inte$, i.e.,
we say that $p(x) \act{}_\inte q(y)$ if $p(x) \actqn{} q(y)$ and $x, y \in \inte$.
Similar to $\qn$-semantics, we can now define runs over the $\inte$-semantics and the reachability problem 1-$\Reach_\inte$ for it.
Note that the $\qnz$-semantics is simply the $\inte$-semantics where $\inte = [0,\infty)$.
As our main result of this stage, we show that

\begin{restatable}{theorem}{intetoqn}\label{thm:inte-to-qn}
	For any interval $\inte$, 1-$\Reach_\inte$ can be reduced in logarithmic space to
	1-$\Reach_\qn$. Furthermore the reduction preserves the type of encoding (unary or binary) and the domain of the update vectors (rationals or integers).
\end{restatable}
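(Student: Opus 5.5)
The plan is to reduce in two steps. First, normalise the interval. Then show that, once the run is pinned away from the boundary of $\inte$, the interval constraint can be traded for a purely structural constraint that a $\qn$-instance can check.

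\textbf{Normalising the interval.} Intervals with $a=-\infty$ and $b=\infty$ are trivial. By negating all updates, the source and the target (which preserves encoding and integrality), every half-infinite interval becomes one of the form $[a,\infty)$ or $(a,\infty)$. By shifting, a finite endpoint can be moved to $0$; this is a translation of the source and target only, so the transitions are untouched. A bounded interval $[a,b]$ is treated with the same lemmas applied at both endpoints. So the core cases are $[0,\infty)$, $(0,\infty)$ and bounded intervals. By Remark~\ref{rem:source-target}, we may also assume that the source value is at most the target value whenever it helps.

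\textbf{Key structural lemma.} I would prove a one-dimensional analogue of the Blondin--Haase characterisation of $\qnz$-reachability. Namely, $p(x) \act{*}_\inte q(y)$ iff there is a path $\pi$ from $p$ to $q$ such that three conditions hold:
\begin{enumerate}
\item[(i)] $y-x$ lies in the $\qn$-displacement set $\{\sum_i \alpha_i w_i : \alpha_i\in(0,1]\}$ of $\pi$;
\item[(ii)] every transition of $\pi$ can be fired \emph{forward} from $p(x)$ within $\inte$;
\item[(iii)] every transition of $\pi$ can be fired \emph{backward} from $q(y)$ within $\inte$.
\end{enumerate}
Conditions (ii) and (iii) are only genuine conditions when $x$ or $y$ lies on the boundary of $\inte$. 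From an interior point every transition can be fired with a tiny fraction, so (ii) and (iii) hold automatically. The hard direction is to turn a $\qn$-run satisfying (i)--(iii) into an $\inte$-run. The argument would be the usual rescaling trick: first use the forward admissibility to drift into the interior with tiny fractions. Then execute the bulk of the run with fractions scaled towards a convex combination, which is legitimate because the set of admissible fraction vectors for a fixed path is convex, being cut out by linear constraints. Finally, use the backward admissibility to land exactly on $y$.

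I expect this to be the main obstacle, in particular for bounded intervals. There a single acyclic path may need a large excursion (for example $+10$ followed by $-10$) that no fraction choice keeps inside $[a,b]$. The proof must therefore exploit that reaching a \emph{net} displacement only requires choosing fractions whose partial sums stay in the interval. I would first try to show that, for a path whose displacement set contains $y-x$, the fractions can be chosen greedily so that partial sums stay between $\min(x,y)-\varepsilon$ and $\max(x,y)+\varepsilon$. If this fails, I would add a guess of a ``last boundary visit'' decomposition, splitting the run at its first and last times near the endpoints.

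\textbf{The reduction.} Given the lemma, the logspace reduction builds a $\qn$-CVASS $\mach'$ with three copies of $Q$: a forward-admissibility copy, a main copy and a backward-admissibility copy. In the first copy, only transitions that are fireable from the boundary in the inward direction are available (this is decided locally from the sign of the update and the position of $x$). Symmetrically, the third copy only has the reversed-inward transitions near $y$. In addition, $\mach'$ keeps track in its states of which transitions have been ``unlocked'', so that the main copy only uses unlocked transitions. Since the counter is one-dimensional, ``unlocking'' should reduce to remembering the direction (sign) that has become available, which is a constant amount of information. This keeps the construction logspace and leaves the updates, and hence their encoding and integrality, untouched. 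Correctness is then exactly the structural lemma.
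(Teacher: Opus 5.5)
\textbf{Verdict: genuine gap.} The key step is left unproved, and the convexity argument offered for it does not work.

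Your reduction is essentially the paper's construction. A boundary copy allows only neutral transitions until an inward step unlocks everything, and symmetrically the last non-neutral step is forced to arrive inward. What makes this correct is the step you leave open: turning a $\qn$-run whose first and last non-neutral steps point inward into an $\inte$-run.

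The convexity remark does not supply this step. For a fixed path, the fraction vectors whose partial sums stay in $\inte$ form a convex set, and so do those with total displacement $y-x$. But you only hold one point of each: a forward-admissible firing and the $\qn$-witness. Mixing them moves the endpoint off $y$, and convexity gives no reason for the two sets to meet. The Fraca--Haddad/Blondin--Haase argument you are imitating works with Parikh vectors and freely reorders and repeats transitions, which a single fixed path in a VASS with states does not allow.

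The obstacle you single out is also not real. An excursion such as $+10$ followed by $-10$ can always be shrunk, since fractions may be arbitrarily small. The only genuine constraint is the net displacement.

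The missing idea is the sign-asymmetric rescaling of the paper's Case~1. Take a $\qn$-run with net displacement $\Delta>0$, and let $W_+$ (resp.\ $W_-$) be its total positive (resp.\ absolute negative) effect, so $W_+=\Delta+W_-$.
\begin{itemize}
\item Divide the fraction of every negative step by an integer $N$ with $W_-/N<\varepsilon/2$.
\item Multiply the fraction of every positive step by the single factor $\mu=(\Delta+W_-/N)/W_+\in(0,1]$.
\item The net effect is still $\Delta$, and every partial sum lies in $[-W_-/N,\Delta+W_-/N]$.
\end{itemize}
The case $\Delta<0$ is symmetric, and $\Delta=0$ needs only uniform division by $N$. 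This is exactly your ``greedy'' claim.

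At a closed endpoint with $\Delta\neq 0$, apply the same rescaling to the whole run. Since $\mu\ge\Delta/W_+$, the inward first (resp.\ last) non-neutral step keeps a margin independent of $N$. Taking $N$ larger keeps the counter on the correct side of that endpoint. For $x=y$ on an endpoint, an even more direct choice of small fractions suffices.

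With this lemma, your conditions (ii) and (iii) collapse to the signs of the first and last non-neutral transitions. The normalisation step and the general Blondin--Haase-style lemma then become unnecessary.

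One point in your favour: you impose (iii) relative to whichever endpoint $y$ lies on. So your conditions correctly cover $x=y$ at a closed endpoint of a non-degenerate interval. An example is $\qnz$ with $x=y=0$, where the path must be all neutral or end with a negative non-neutral step. The paper's asymmetric notion of ``extreme point'' sends this case to its Case~2, whose construction only looks at the first non-neutral step and requires one to exist.
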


\begin{proof}[Proof Sketch]
    Let $\mach = (Q,T)$ be an instance of reachability in 1-CVASS with two configurations $p(x)$ and $q(y)$ and let $\inte$ be an interval. The same argument mentioned in Remark~\ref{rem:source-target} allows us to assume that $x \le y$. We shall now construct another 1-CVASS $\mach'$ with two configurations $p'(x')$ and $q'(y')$ such that $p(x) \act{*}_{\inte} q(y)$ in $\mach$ iff $p'(x') \actqn{*} q'(y')$ in $\mach'$.
    
    First note that if at least one of $x,y \notin \inte$, then there can be no $\inte$-run from $p(x)$ to $q(y)$ in $\mach$. In that case, we can simply take $(\mach',p'(x'),q'(y'))$ to be some trivial no-instance of 1-$\Reach_\qn$. Hence, we can assume that $x, y \in \inte$. The required construction of $\mach'$ is now obtained by means of a case analysis on $x,y$. To this end, we say that $x$ (resp. $y$) is an extreme point of $\inte$ if $\inte = [x,b)$ or $\inte = [x,b]$ for some $b$ (resp. if $\inte = (a,y]$ or $\inte = [a,y]$ for some $a$). Our case analysis  considers all four possibilities.
    
    \textbf{Case 1: $x,y$ are not extreme points of $\inte$. } 
    Therefore, there exists an $\epsilon > 0$ such that $[x-\epsilon,y+\epsilon] \subseteq \inte$. In this case, we show that $p(x) \act{*}_\inte q(y)$ is possible in $\mach$ iff
 	$p(x) \actqn{*} q(y)$ is possible in $\mach$, which then trivially gives the desired reduction.
 	The left-to-right implication is immediate. For the other side, suppose $p(x) := p_0(x_0) \actqn{\alpha_1 t_1} p_1(x_1) \actqn{\alpha_2 t_2} p_2(x_2) \dots \actqn{\alpha_n t_n} p_n(x_n) := q(y)$ is  a run over the $\qn$-semantics.
 	We say that the transition $t_i := (p_{i-1},w_i,p_i)$ is positive, negative or neutral, if $w_i > 0, w_i < 0$ or $w_i = 0$, respectively.
 	
 	Now, if all the intermediate counter values are already in $[x-\epsilon/2,y+\epsilon/2] \subseteq \inte$, then we are done. Otherwise, (since $x_0 = x_0$ and $x_n = y$), let $i$ be the first position and $j$ be the last position such that $x_{i+1}, x_{j-1} \notin [x-\epsilon/2,y+\epsilon/2]$. Hence, 
 	all the counter values up till $x_i$ and all the counter values from $x_j$ onwards belong to $[x-\epsilon/2,y+\epsilon/2]$. We now give another run between $p_i(x_i)$ and $p_j(x_j)$ so that all the intermediate counter values remain in $[x-\epsilon,y+\epsilon]$. Since $[x-\epsilon,y+\epsilon] \subseteq \inte$, plugging this new run between $p_i(x_i)$ and $p_j(x_j)$ into the old run will then give a $\inte$-run from $p(x)$ to $q(y)$, thereby completing the proof of this case.
 	
 	We accomplish this by further considering three different subcases:
    The first subcase is when $x_i = x_j$. Here, it suffices to simply fire
    each positive and negative transition in $\{t_{i+1},\dots,t_j\}$ with a sufficiently small fraction so that
    each of their combined effects is equal and at most $\epsilon/2$, which
    will ensure that we always stay in the range $[x_i-\epsilon/2,x_j+\epsilon/2] \subseteq [x-\epsilon,y+\epsilon]$. The second subcase is when $x_i < x_j$.
    Hence, intuitively, negative transitions can be suppressed as much as needed. So, we fire each negative transition in $\{t_{i+1},\dots,t_j\}$
    with a sufficiently small fraction so that their combined effect (say $E_-$) is at most $\epsilon/2$. Then we scale each positive transition with a suitable fraction
    so that its combined effect equals \emph{exactly} $x_j - x_i + E_-$. This
    will ensure that at any point in the run, the counter values stay in the range 
    $[x_i-\epsilon/2,x_j+\epsilon/2] \subseteq [x-\epsilon,y+\epsilon]$. Furthermore,
    by construction, the combined effect of the positive and negative transitions
    is exactly $x_j - x_i$ and so we reach counter value $x_j$ at the end.
    Finally, the case of $x_i > x_j$ is dual to the previous case, with the roles of positive and negative transitions swapped. 
    
    \textbf{Case 2: $x$ is an extreme point, $y$ is not.} Therefore, $\inte$ is of the form $[x,b)$ or $[x,b]$
	with $y < b$. This means that in any $\inte$-run from $p(x)$ and $q(y)$ in $\mach$, the first non-neutral transition must be a positive transition, as otherwise the counter value would have to leave the interval $\inte$. 
	Hence, any $\inte$-run from $p(x)$ to $q(y)$ in $\mach$ must necessarily be of the form $p(x) \act{*}_\qn d(x) \act{}_\qn e(z) \act{*}_\inte q(y)$ where $z > x$ and the run between $p(x)$ and $d(x)$ consists only of neutral transitions. 
	Furthermore, since $z > x$, it follows that there is an $\epsilon > 0 $ such that $[z-\epsilon,y+\epsilon] \subseteq \inte$. Applying the previous case to the run
	$e(z) \act{*}_\inte q(y)$, we can conclude that this can happen iff $e(z) \actqn{*} q(y)$. Hence, $p(x) \act{*}_\inte q(y)$ is possible iff there is a run of the form $p(x) \act{*}_\qn d(x) \act{}_\qn e(z) \act{*}_\qn q(y)$ where $z > x$ and the run between $p(x)$ and $d(x)$ consists only of neutral transitions.
	We can now easily construct another 1-CVASS $\mach'$ that explicitly tracks $\qn$-runs of this form, by ensuring that the first non-neutral transition that is fired is a positive transition. This can be done by allowing for only neutral transitions until a positive transition is fired, after which any transition is allowed. 
	
    \textbf{Case 3: $x$ is not an extreme point, $y$ is. } This case is a dual to Case 2, where the last non-neutral transition fired must be a positive transition. Hence, we now have to track the last non-neutral transition that is fired,
    which can be done in a manner dual to Case 2.
    
    \textbf{Case 4: $x,y$ are both extreme points. } Therefore, $\inte = [x,y]$. If $x = y$, we modify $\mach$ so that it has only neutral transitions. Otherwise,  we have to track both the first and last fired non-neutral transitions. So we combine the constructions from the above two cases.
\end{proof}

The full proof can be found in Appendix~\ref{app-subsec:inte-to-qn}. We now move on to the second stage.

\subsection*{Stage 2: Characterizing Reachability over $\qn$-semantics.}

In this stage, we give a precise characterization of reachability over the $\qn$-semantics. To state our characterization, we need to set up some notation.

Let $\mach = (Q,T)$ be a 1-CVASS. First, notice that $\mach$ naturally defines a finite labelled graph $\graph_\mach$ whose vertices are $Q$ and whose edges are given by the transition relation $T$, i.e., any transition $(p,w,q)$ corresponds to a labelled edge between $p$ and $q$ with label $w$.
Given a path $P$ in this graph $p_0 \act{w_1} p_1 \act{w_2} \dots p_{n-1} \act{w_n} p_n$ we define the sets $T_{+}(P), T_0(P)$ and $T_-(P)$ as $ \{i : w_i > 0\}, \{i : w_i = 0\}$ and $\{i : w_i < 0\}$, respectively.

Now, let $p(x)$ and $q(y)$ be two configurations of $\mach$. By Remark~\ref{rem:source-target}, we can assume that $x \le y$. We now show how to decide if $p(x)$ can reach $q(y)$, first in the case when $x = y$ and then in the case when $x < y$.

\subparagraph*{Case 1: $x = y$. } In this case, the characterization for reachability is simple and is as follows.

\begin{restatable}{theorem}{characequality}\label{thm:charac-equality}
	There is a $\qn$-run from $p(x)$ to $q(x)$ in $\mach$ iff there is a path $P$ in $\graph_\mach$ from $p$ to $q$
	such that either $T_+(P), T_-(P) \neq \emptyset$ or $T_+(P) = T_-(P) = \emptyset$.
\end{restatable}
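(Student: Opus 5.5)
The plan is to prove the two directions separately. Both rest on the fact that a $\qn$-run is determined by a path $P$ in $\graph_\mach$ together with a choice of fractions $\alpha_i \in (0,1]$, one per edge. Its net effect is $\sum_i \alpha_i w_i$, and intermediate values are unconstrained under the $\qn$-semantics. So $p(x) \actqn{*} q(x)$ holds iff some path $P$ from $p$ to $q$ admits fractions $\alpha_i \in (0,1]$ with $\sum_{i} \alpha_i w_i = 0$.

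For the left-to-right direction, take a $\qn$-run from $p(x)$ to $q(x)$ and let $P$ be its underlying path. Then $\sum_i \alpha_i w_i = 0$ with every $\alpha_i > 0$. Suppose $T_+(P) \neq \emptyset$. Its terms contribute a strictly positive amount, since each $\alpha_i > 0$ and each such $w_i > 0$. So some term with $w_i < 0$ must exist, and $T_-(P) \neq \emptyset$. Symmetrically, $T_-(P) \neq \emptyset$ forces $T_+(P) \neq \emptyset$. Hence either both sets are nonempty or both are empty.

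For the right-to-left direction, fix such a path $P$. If $T_+(P) = T_-(P) = \emptyset$, all labels are $0$, so firing every edge with $\alpha_i = 1$ keeps the counter at $x$. Otherwise, let $S_+ = \sum_{i \in T_+(P)} w_i > 0$ and $S_- = \sum_{i \in T_-(P)} |w_i| > 0$, and put $m = \min(S_+, S_-)$. Fire every positive edge with fraction $m/S_+$ and every negative edge with fraction $m/S_-$; both fractions lie in $(0,1]$. Fire neutral edges with fraction $1$. The total effect is $m - m = 0$, so the run ends in $q(x)$. The fractions are rational because the $w_i$ are.

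The only subtle point is this direction: the fractions must be nonzero, so a single sign class cannot simply be switched off. The uniform rescaling by $m/S_\pm$ handles this. No step requires earlier results beyond the definition of the $\qn$-semantics.
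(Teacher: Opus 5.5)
Your proof is correct and follows the same route as the paper. The forward direction is the same sign argument on $\sum_i \alpha_i w_i$ with all $\alpha_i>0$, and the backward direction likewise rescales the positive and negative edges so that their contributions cancel. Your normalization by $m=\min(S_+,S_-)$ is also more careful than the paper's choice $\alpha_i = 1/(w_i\,|T_+(P)|)$ (resp.\ $-1/(w_i\,|T_-(P)|)$): the paper's fraction can exceed $1$ when an update is a small rational such as $w_i = 1/10$, whereas yours always lies in $(0,1]$.
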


    

The proof of this theorem can be found in Appendix~\ref{app-subsec:charac-equality}. We now move to the next case.

\subparagraph*{Case 2: $x < y$.} To state the characterization for this case, we need some notation and a couple of preparatory intermediate results. For a path $P$ in $\graph_\mach$ given by $p_0 \act{w_1} p_1 \act{w_2} \cdots p_{n-1} \act{w_n} p_n$, we let $W(P) = \sum_{1 \le i \le n} w_i$. Having introduced this notation, we prove the following proposition, which gives a simple sufficient condition for reachability.

\begin{proposition}\label{prop:simple-path}
    Suppose $P$ is a path from $d$ to $e$ in $\graph_\mach$ such that $W(P) \neq 0$, $W(P) \ge v-u$ for some $u, v$ and $W(P)$ has the same sign as $v - u$, i.e., $W(P) < 0 \iff v-u < 0$. Then there is a $\qn$-run from $d(u)$ to $e(v)$ in $\mach$.
\end{proposition}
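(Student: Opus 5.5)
The plan is to fire \emph{every} transition of $P$, in order, with one common fraction $\alpha$. Let $P$ be $d = p_0 \act{w_1} p_1 \act{w_2} \cdots \act{w_n} p_n = e$. For any $\alpha \in (0,1]$, the sequence
\[
d(u) = p_0(u) \actqn{\alpha t_1} p_1(u+\alpha w_1) \actqn{\alpha t_2} \cdots \actqn{\alpha t_n} p_n\Big(u + \alpha \textstyle\sum_{i} w_i\Big)
\]
is a valid $\qn$-run. This is because the $\qn$-semantics puts no constraint on the intermediate counter values. Its final value is $u + \alpha W(P)$. So it suffices to find $\alpha \in (0,1]$ with $\alpha W(P) = v-u$.

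The natural choice is $\alpha := (v-u)/W(P)$, which is well defined because $W(P)\neq 0$. It remains to check that $0 < \alpha \le 1$. The sign hypothesis gives $\alpha \ge 0$. The magnitude hypothesis is meant to say that $|v-u| \le |W(P)|$, which gives $\alpha \le 1$.

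\textbf{Case $v-u>0$} (the case used later, with $x<y$). Here $W(P)>0$ and $W(P)\ge v-u$, so $\alpha\in(0,1]$ immediately.

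\textbf{Case $v-u<0$.} The sign condition gives $W(P)<0$. I read the hypothesis as bounding the magnitude, that is, $W(P)\le v-u<0$, by symmetry with the previous case; this again gives $\alpha\in(0,1]$. If the literal inequality $W(P)\ge v-u$ is intended in this case, I would instead handle it through the reversal of Remark~\ref{rem:source-target}, which reduces it to the positive case.

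The only delicate point, and the step I expect to need attention, is $\alpha>0$. Every step must use a strictly positive fraction, so the argument needs $v\neq u$. If $v=u$ and $W(P)>0$, a path consisting only of positive transitions shows that the conclusion can genuinely fail. I would therefore state explicitly that the proposition is applied with $v-u\neq 0$, matching the setting $x<y$ of Case 2, where the case $x=y$ is handled separately by Theorem~\ref{thm:charac-equality}. With that in place, the uniform-fraction run above is the whole proof, and no rearrangement of transitions is needed.
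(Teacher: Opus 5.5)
Your uniform-fraction argument is exactly the paper's proof (fire each $t_i$ with fraction $(v-u)/W(P)$). You are right that it needs $v\neq u$ and $|W(P)|\ge|v-u|$, because the literal statement admits counterexamples otherwise: take a CVASS with a single edge $d\to e$ of weight $+1$ and $u=v$, or a single edge of weight $-1$ with $u=0$, $v=-2$. Drop your fallback via the reversal of Remark~\ref{rem:source-target}, though: reversal turns the requirement into $-W(P)\ge u-v$, i.e.\ $W(P)\le v-u$, so it cannot rescue the literal inequality $W(P)\ge v-u$ in the negative case, which the second example shows is false there.
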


\begin{proof}
    Let $P$ be a path with $W(P) \ge v-u$ given by $d := p_0 \act{w_1} p_1 \act{w_2} \dots p_{n-1} \act{w_n} p_n := e$.
    Let $t_i := (p_{i-1},w_i,p_i)$ be the transition in $\mach$ corresponding to the $i^{th}$ edge of $P$.
    Then, firing each $t_i$ with fraction $\frac{v-u}{W(P)}$ is easily seen to be a $\qn$-run from $d(u)$ to $e(v)$ in $\mach$.
\end{proof}

The next preparatory result that we will need is regarding cycles in $\graph_\mach$. To this end, a cycle $C$ of $\graph_\mach$ is said to be \emph{pumpable} if at least one of the labels in $C$ is strictly positive, i.e., $T_+(C) \neq \emptyset$. As the name suggests, we can use pumpable cycles to increase the counter value.

\begin{restatable}{proposition}{pumpablecycle}\label{prop:pumpable-cycle}
	Suppose $C$ is a pumpable cycle from $r$ to $r$, for some state $r$. Then, for any $u, v$ with $u < v$,
	we have a $\qn$-run from $r(u)$ to $r(v)$.
\end{restatable}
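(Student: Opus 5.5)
The plan is to traverse the cycle $C$ several times, using the fractions on each transition to control the net effect precisely. Write $C$ as $r = p_0 \act{w_1} p_1 \cdots p_{n-1} \act{w_n} p_n = r$. Let $P$ be the set of indices $i$ with $w_i > 0$, which is non-empty because $C$ is pumpable, and let $N$ be the set of indices with $w_i < 0$. Since we work in the $\qn$-semantics, intermediate counter values are unconstrained. So the only things to arrange are that every fraction lies in $(0,1]$ and that the total effect equals exactly $v-u > 0$.

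The first step treats one traversal of $C$. Fix a small $\delta \in (0,1]$. Fire every transition with index in $N$ or with zero label using fraction $\delta$. Fire every positive transition with some fraction $\beta \in (0,1]$. The net effect of this traversal is $\beta S_+ + \delta S_-$, where $S_+ = \sum_{i \in P} w_i > 0$ and $S_- = \sum_{i\in N} w_i \le 0$. The set of achievable net effects of one traversal therefore contains the interval $(\delta S_-,\, S_+ + \delta S_-]$. Once $\delta$ is small enough that $S_+ + \delta S_- > 0$, this interval contains $(0, S_+/2]$.

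The second step handles targets larger than one traversal allows. Choose an integer $k \ge 1$ with $(v-u)/k \le S_+/2$. Traverse $C$ exactly $k$ times, each time realising net effect $(v-u)/k$ with the choice above. Concretely, take $\beta = ((v-u)/k - \delta S_-)/S_+$, which lies in $(0,1]$ for small $\delta$. The total effect is then $v-u$, and since $C$ is a cycle at $r$ we end in $r(v)$. All chosen values are rational, so the run is a valid $\qn$-run.

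The only delicate point is checking that $\beta$ lands in $(0,1]$. Positivity follows from $v-u>0$ and $S_- \le 0$. The upper bound follows from choosing $\delta$ with $\delta |S_-| \le S_+/2$. There is no real obstacle beyond this bookkeeping, since the unconstrained $\qn$-semantics removes any need to watch intermediate values.
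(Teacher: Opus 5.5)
Your proof is correct and follows essentially the paper's argument: fire the non-positive transitions of $C$ with a small fraction, scale the positive ones so that one traversal of $C$ has net effect $(v-u)/k$, and traverse $C$ exactly $k$ times. Your choice of $k$ in terms of $S_+$, which guarantees $\beta\le 1$, is slightly more careful than the paper's. The paper takes $m=\lceil v-u+\epsilon\rceil$ independently of $W_+(C)$, so its positive fractions exceed $1$ when $W_+(C)<1$, which can happen with rational updates.
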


\begin{proof}[Proof Sketch]
	Let $C$ be a pumpable cycle given by $r := r_0 \act{w_1} r_1 \act{w_2} \dots r_{n-1} \act{w_n} r_n := r$. 
	Let $t_i := (r_{i-1},w_i,r_i)$ be the transitions in $\mach$ corresponding to this path.	
	We will fire each
	negative transition corresponding to $T_-(C)$ with a very tiny fraction so that their combined effect is of the form $-\epsilon/m$ for a small $\epsilon$ and large $m$. We then fire each positive transition corresponding to $T_+(C)$ with an appropriately chosen fraction so that their
	combined effect is exactly equal to $(v-u+\epsilon)/m$. 
	This enables us to construct a run from $r(u)$ to $r(u+\Delta)$ where $\Delta = \frac{y-x}{m}$. 
	Repeating this same run over and over, we can reach 
	$r(u+2\Delta), r(u+3\Delta)$ and so on till $r(u+m\Delta) := r(v)$.
\end{proof}

The full proof of this proposition can be found in Appendix~\ref{app-subsec:pumpable-cycle}. Now, with the above two propositions in hand, we are almost ready to state our characterization of reachability over $\qn$-semantics.
The final definition we need is the following: We say that a triple $(I,C,F)$ is a \emph{knot} of $\graph_\mach$ from $i$ to $f$, if there is a state $c$ such that $I$ is a path from $i$ to $c$, $C$ is a pumpable cycle from $c$ to $c$
and $F$ is a path from $c$ to $f$.
We are now ready to state our characterization of reachability.

\begin{restatable}{theorem}{characinequality}\label{thm:charac-inequality}
	For any $x < y$, there is a $\qn$-run from $p(x)$ to $q(y)$ in $\mach$ iff either there is a knot $K$ from $p$ to $q$ in $\graph_\mach$ or there is a path $P$ of length at most $|Q|$ from $p$ to $q$ in $\graph_\mach$ such that $W(P) \ge y-x$.
\end{restatable}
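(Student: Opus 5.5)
The plan is to prove the two directions separately, using Proposition~\ref{prop:simple-path} and Proposition~\ref{prop:pumpable-cycle} for the ``if'' direction and a cycle-removal argument on the path underlying a run for the ``only if'' direction. There is one point about $W$ that I expect to be the main obstacle, and I address it in the last paragraph.

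\emph{Sufficiency.} If there is a path $P$ from $p$ to $q$ with $W(P) \ge y-x > 0$, then Proposition~\ref{prop:simple-path} applies directly with $u=x$, $v=y$, since the signs agree. Now suppose instead that $K=(I,C,F)$ is a knot, with $C$ a pumpable cycle at $c$. Fix a small $\delta>0$ with $4\delta < y-x$. Fire every transition of $I$ and of $F$ with a fraction small enough that the total absolute effect of $I$ is at most $\delta$, and likewise for $F$. This takes us from $p(x)$ to $c(x')$ with $|x'-x|\le\delta$. Let $e_F$ denote the (fixed) effect of the chosen firing of $F$, and set $y' := y - e_F$, so that $|y'-y| \le \delta$. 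Then $x' < y'$, and Proposition~\ref{prop:pumpable-cycle} gives a $\qn$-run from $c(x')$ to $c(y')$. Appending the chosen firing of $F$ then reaches $q(y)$.

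\emph{Necessity.} Take a $\qn$-run $p(x)=p_0(x_0)\actqn{\alpha_1 t_1}\cdots\actqn{\alpha_n t_n}p_n(x_n)=q(y)$ and let $P$ be its underlying path, so that $\sum_i \alpha_i w_i = y-x>0$. If some cycle occurring as a contiguous segment of $P$ contains a positive label, then splitting $P$ around that segment as $I\,C\,F$ yields a knot, and we are done. Otherwise every cycle inside $P$ has only non-positive labels. In that case I would repeatedly excise cycles from $P$ to obtain a simple path $P'$ from $p$ to $q$, hence of length at most $|Q|$. Each excision deletes only non-positive terms $\alpha_i w_i$, so the weighted sum over the surviving edges is still at least $y-x$. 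Moreover, every positive edge of $P$ survives, because positive edges never lie inside an excised cycle.

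\emph{The obstacle.} It remains to compare $\sum_{i\in P'}\alpha_i w_i$ with $W(P')$. On positive edges we have $\alpha_i w_i\le w_i$, which goes the right way. On negative edges, however, $\alpha_i w_i \ge w_i$, which goes the wrong way. This is a genuine gap, not just a technicality. Consider the path $p\xrightarrow{-100} r\xrightarrow{+1} q$ with $x=0$, $y=1/2$. It is realizable: fire the first edge with a tiny fraction and the second with a fraction close to $1/2$. Yet it has no knot and $W=-99$. So I would first check the intended definition. The quantity that makes both directions go through is the positive mass $W_+(P):=\sum_{i\in T_+(P)} w_i$.
\begin{itemize}
\item With $W_+$, necessity follows from $W_+(P')\ge\sum_{i\in P'}\alpha_i w_i \ge y-x$.
\item Sufficiency is the argument of Proposition~\ref{prop:simple-path}, modified in the same way as in the proof of Proposition~\ref{prop:pumpable-cycle}: suppress the negative edges to a total effect $\epsilon$, and scale the positive edges to a total effect of exactly $y-x+\epsilon$. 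This is possible because $W_+(P')\ge y-x$ and the inequality can be made strict where needed by choosing $\epsilon$ small (if $W_+(P') = y-x$ exactly, one needs $T_-(P')=\emptyset$, or else a slightly finer case distinction).
\end{itemize}
So the plan is to carry out the argument above with $W_+$ in place of $W$, keeping the knot case and the cycle-excision step unchanged.
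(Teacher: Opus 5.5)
Your counterexample is correct, and it shows the statement as written is false. With transitions $(p,-100,r)$ and $(r,1,q)$, firing them with fractions $1/1000$ and $3/5$ takes $p(0)$ to $q(1/2)$. Yet there is no cycle, hence no knot, and the only path from $p$ to $q$ has $W=-99$.

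The paper's proof of the left-to-right direction follows your route: extract a knot from a cycle with a positive label, otherwise excise the non-positive cycles. It breaks where you said. It asserts $W(P')\ge W(P)=\sum_i w_i\ge\sum_i\alpha_i w_i=y-x$, and the middle inequality fails as soon as $T_-(P)\neq\emptyset$. In particular, on your instance Conditions~1--3 of Stage~3 all fail although the answer is positive.

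Your right-to-left argument is the paper's and is correct for the statement as written. Your knot case (fix the firing of $F$ first, then pump from $x'$ to $y-e_F$) is cleaner than the paper's. The paper invokes Proposition~\ref{prop:simple-path} on $I$ and $F$ even when $W(I)$ or $W(F)$ is non-positive, where its hypotheses do not hold.

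What remains is to state the repair precisely. Your parenthetical is right that the equality case needs $T_-(P')=\emptyset$, but there is nothing finer to distinguish. So ``$W_+$ in place of $W$'' is itself false. For $p\act{-1}r\act{+1}q$ with $x=0$, $y=1$, there is no knot and $W_+=1=y-x$, yet reaching $q(1)$ would need $\alpha_2=1+\alpha_1>1$. Along a fixed path $P$ with $T_+(P)\neq\emptyset$, the achievable total effects are exactly $(-W_-(P),W_+(P))$ if $T_-(P)\neq\emptyset$, and $(0,W_+(P)]$ otherwise, where $W_-(P):=\sum_{i\in T_-(P)}|w_i|$.

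So the second disjunct should read: there is a path $P$ of length at most $|Q|$ from $p$ to $q$ with either $W_+(P)>y-x$, or $T_-(P)=\emptyset$ and $W(P)\ge y-x$.
\begin{itemize}
\item Necessity: your argument already gives this. If $P'$ retains a negative edge $k$, then $y-x\le\sum_{i\in P'}\alpha_i w_i\le W_+(P')+\alpha_k w_k<W_+(P')$.
\item Sufficiency: the second alternative is Proposition~\ref{prop:simple-path}. For the first alternative with $T_-(P)\neq\emptyset$, take $0<\epsilon\le\min\{W_-(P),\,W_+(P)-(y-x)\}$. Fire every negative edge with fraction $\epsilon/W_-(P)$ and every positive edge with fraction $(y-x+\epsilon)/W_+(P)$.
\end{itemize}

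Condition~3 and Lemma~\ref{lem:testing-conditions} must be amended in the same way. Both new tests are again bounded-length path-weight questions: replace negative labels by $0$ and ask for weight $>y-x$, or delete negative edges and ask for weight $\ge y-x$. So the complexity bounds should survive.
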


For the left-to-right direction, given any $\qn$-run from $p(x)$ to $q(y)$, either
it contains a cycle with a positive transition, and so we can extract a knot from it,
or it only contains cycles with no positive transitions, each of which can be safely deleted
to get a path $P$ which must satisfy $W(P) \ge y-x$. For the other direction,
we use Propositions~\ref{prop:simple-path} and~\ref{prop:pumpable-cycle}.
The full proof can be found in Appendix~\ref{app-subsec:charac-inequality}.

Finally, by Remark~\ref{rem:source-target}, we do not have to consider separately the case when $x > y$. Hence, this allows us to complete the second stage and move on to the third stage.

\subsection*{Stage 3: Upper Bounds for Reachability in 1-CVASS}

In this final stage, we use the characterization from the previous stage to give upper bounds for $\qn$-reachability in 1-CVASS. By the characterizations given in Theorem~\ref{thm:charac-equality} and Theorem~\ref{thm:charac-inequality}, the existence of a $\qn$-run between $p(x)$ and $q(y)$ in a 1-CVASS $\mach$
with $x \le y$ is governed by one of the following three conditions:
\begin{itemize}
    \item Condition 1: $x = y$ and there is a path $P$ in $\graph_\mach$ from $p$ to $q$ such that either $T_+(P), T_-(P) \neq \emptyset$ or $T_+(P) = T_-(P) = \emptyset$.
    \item Condition 2:  $x < y$ and there is a knot $K$ in $\graph_\mach$ from $p$ to $q$.
    \item Condition 3: $x < y$ and there is a path $P$ of length at most $|Q|$ in $\graph_\mach$ from $p$ to $q$ such that $W(P) \ge y-x$.
\end{itemize}

We prove that
\begin{restatable}{lemma}{testingconditions}\label{lem:testing-conditions}
    Conditions 1 and 2 can be decided in $\NL$. Condition 3 can be decided in $\NL$ if $\mach$ is a 1-CVASS with integer updates over unary encoding and in $\AC^1$ otherwise. 
\end{restatable}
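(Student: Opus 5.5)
Conditions 1 and 2 are pure graph-theoretic conditions and will be handled by nondeterministic logspace guessing. For Condition 1 (with $x=y$), I will guess a path from $p$ to $q$ edge by edge, keeping in memory only the current state and two bits recording whether a positive and whether a negative label has been seen so far. Accept on reaching $q$ if both bits are set or both are unset. Since $T_+(P),T_-(P)$ depend only on label signs, and comparing a rational label with $0$ is easy in logspace, this is an $\NL$ algorithm. For Condition 2 (with $x<y$), I will guess the intermediate state $c$ and then check three things by standard $\NL$ reachability: $p$ reaches $c$; $c$ reaches $q$; and there is a transition $(r,w,s)$ with $w>0$ such that $c$ reaches $r$ and $s$ reaches $c$. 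The last check is exactly the existence of a pumpable cycle through $c$. It suffices to have a closed walk, since Proposition~\ref{prop:pumpable-cycle} only needs a cyclic path containing a positive label, and the paper's cycles are presumably not required to be simple.

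For Condition 3 with integer updates in unary, I will guess a path of length at most $|Q|$ edge by edge. Alongside it I maintain the running sum of weights, a counter of the path length, and the current state. Each weight has absolute value at most the input size $n$ (unary), so the partial sums have absolute value at most $|Q|\cdot n$, i.e.\ $O(\log n)$ bits. At $q$ I compare the sum with $y-x$; here $x,y$ are given in unary, and if they are rational they have polynomial-size numerators and denominators, so the comparison is a logspace arithmetic test. This gives $\NL$.

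The general case of Condition 3 (rational weights, possibly binary) is the main obstacle, because partial sums of rationals have denominators that are products of up to $|Q|$ input denominators. I will reduce it to a max-weight bounded-length path problem, which is computed by iterated $(\max,+)$ matrix products. First, clear denominators: let $D$ be the product of all denominators appearing in the weights and in $x,y$. By the cited $\AC^1$ bound for iterated multiplication~\cite{HesseAB02}, $D$ and the scaled integer weights $w\cdot D$ are computable in $\AC^1$, and each has polynomially many bits. I will then define $M_k[r][s]$ as the maximum integer weight of a walk from $r$ to $s$ of length exactly $k$, and compute $\max_{k\le |Q|}M_k[p][q]$ by repeated squaring over the $(\max,+)$ semiring, taking $O(\log|Q|)$ squarings. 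Each squaring is a max over $|Q|$ sums of polynomial-bit integers, which is in $\AC^0$ (sums and comparisons of polynomial-bit integers are in $\AC^0$, as is the max of polynomially many of them). The total depth is therefore $O(\log n)$, and I accept iff the maximum is at least $(y-x)D$.

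One subtlety is that Condition 3 speaks of paths while the computation maximizes over walks of length at most $|Q|$. If the paper's ``path'' means walk, there is nothing to do. Otherwise I will argue that a walk of length at most $|Q|$ with weight at least $y-x$ still suffices for reachability through Proposition~\ref{prop:simple-path} together with the knot alternative. If the walk repeats a state, it contains a cycle. If that cycle has a positive label, a knot exists and Condition 2 holds. If not, the cycle's weight is nonpositive, so deleting it leaves a shorter path whose weight is still at least $y-x$. So the two formulations are equivalent, and the overall algorithm for Condition 3 runs in $\AC^1$ as required.
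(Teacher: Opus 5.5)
Your proof is correct. For Conditions 1 and 2, and for the unary integer case of Condition 3, it is the paper's guess-and-check argument up to presentation. The paper first proves small-witness bounds by cycle removal (a path of length at most $3|Q|$ for Condition 1, a knot of length at most $4|Q|$ for Condition 2) and then guesses. You instead phrase Condition 1 as reachability in the product graph $Q\times\{0,1\}^2$, and Condition 2 as plain reachability queries through a guessed positive edge, which makes those bounds unnecessary.

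The genuine divergence is the general case of Condition 3. The paper unrolls the graph into $|Q|+1$ layers with zero-weight stay edges. It then notes that the question becomes coverability in a 1-dimensional integer VASS and invokes~\cite[Theorem III.2]{ShakibaSZ25} as a black box, after clearing denominators via~\cite{HesseAB02} as you do. You compute the maximum weight of a walk of length at most $|Q|$ directly, by $O(\log |Q|)$ rounds of $(\max,+)$ matrix multiplication. Each round is in $\AC^0$ because it needs only two-operand additions, comparisons and a polynomial fan-in maximum. Your route is self-contained and shows exactly where the logarithmic depth comes from. The paper's route is shorter but leans on a far more general theorem than is needed.

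Two small remarks. First, to capture lengths \emph{at most} $|Q|$, you should power $I\oplus M$ (with $I$ the $(\max,+)$ identity) to exactly $|Q|$ via its binary expansion, rather than only computing powers of two. Second, your paths-versus-walks hedge is unnecessary: in the paper ``path'' means walk. Its own algorithm for Condition 3 guesses walks, and it writes ``simple path'' when it means one. This matters, because your fallback argument only preserves the disjunction of Conditions 2 and 3, not Condition 3 by itself.
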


Indeed, the first two conditions can be easily seen to be decidable in $\NL$. Indeed, for the first condition, if there is such a path, there has to be one of length at most $3|Q|$. Similarly, we can show that for the second condition, if there is such a knot, there has to be one of length at most $4|Q|$. These two characterizations immediately place the problem in $\NL$, by allowing for a guess-and-check algorithm on the graph $\graph_\mach$. Similarly, the third condition
can be decided in $\NL$ if $\mach$ is a 1-CVASS with integer updates and unary encoding.

If $\mach$ is a 1-CVASS with integer updates and binary encoding, we note that
the problem of checking for a path $P$ in $\graph_\mach$ of length at most $|Q|$ with $W(P) \ge y-x$ is equivalent to an instance of \emph{coverability in 1-dimensional integer VASS}, i.e., 1-CVASS over the $\qn$-semantics where all the fractions have to be 1. This latter problem is known to be in $\AC^1$~\cite[Theorem III.2]{ShakibaSZ25}. Finally, if $\mach$ has rational updates, we can first multiply the labels of $\graph_\mach$ with the product $B$ of all of its denominators, as well as the denominators of $x,y$. In this new graph (which only has integer labels) we have to find a path $P$ with $W(P) \ge B(y-x)$.
Hence, once again the problem is equivalent to an instance of coverability in 1-dimensional integer VASS. Since the product of a given collection of numbers
can be computed in $\AC^1$~\cite[Corollary 4.1]{HesseAB02}, this proves that the third condition can be tested in $\AC^1$
when $\mach$ has rational updates as well. A full proof of Lemma~\ref{lem:testing-conditions} conditions can be found in Appendix~\ref{app-subsec:testing-conditions}.

Finally, since $\NL \subseteq \AC^1$, by Remark~\ref{rem:source-target} and by Theorem~\ref{thm:inte-to-qn}, this then proves Theorem~\ref{thm:1-CVASS}, which completes the main result of this section.

\begin{remark}
    Our upper bound can also be extended to work for 1-CVASS \emph{with zero-tests}. In this extension, a subset of states $Z$ of the 1-CVASS is designated as the set of zero-test states. A run is then considered valid iff every configuration of the form $q(x)$ appearing in it with $q \in Z$ satisfies $x = 0$. In~\cite[Section 3.5]{BlondinLMOP23}, it is shown how to reduce the reachability problem for this extension to reachability over 1-CVASS. That reduction, along with our improved upper bound for 1-CVASS,
    allows us to conclude that reachability for 1-CVASS with zero-tests is also in $\AC^1$.
\end{remark}

\section{Lower Bounds for 2-CVASS}
\label{sec-2CVASS}
In this section, we prove Theorems~\ref{thm:main-result-2-CVASS} and~\ref{thm:main-result-integer-updates-2-CVASS}. To this end, we prove the following main result.

\begin{theorem}\label{thm:2-CVASS}
	2-$\Cover_\qn$ and 2-$\Cover_{\qnz}$ are both $\NP$-hard for acyclic 2-CVASS 
	\begin{itemize}
		\item with integer updates encoded in binary;
		\item with rational updates encoded in unary.
	\end{itemize}
\end{theorem}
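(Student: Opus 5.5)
We reduce from 3-SAT. By Remark~\ref{rem:cov-to-reach} it would suffice to show hardness of reachability, but we target coverability directly since that is the stronger statement. The plan is to build, from a formula $\varphi$ with variables $z_1,\dots,z_n$ and clauses $K_1,\dots,K_m$, an acyclic 2-CVASS that is a chain of gadgets: first one gadget per variable, then one gadget per clause. The intended run through the variable gadgets picks a truth assignment. The clause gadgets then check that it satisfies $\varphi$.

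\textbf{Encoding (Egyptian prime fractions).} Fix the first $m$ primes $p_1<\dots<p_m$; they have polynomially many bits and can be found in logspace. Clause $K_j$ gets weight $1/p_j$. In the variable gadget for $z_i$ there are two branches, one for $z_i$ true and one for false. The branch for literal $\ell$ adds to counter~1 the amount $\sum_{j : \ell \in K_j} 1/p_j$; we split this into single transitions with updates $1/p_j$, so the updates have unary size. The clause gadget for $K_j$ has a slack transition that may add up to $2/p_j$ more. The target on counter~1 is $\sum_j 3/p_j$. With integral firings ($\alpha=1$) this target is met exactly when every clause receives between $1$ and $3$ contributions in total, i.e., when every clause is satisfied. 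The point of using reciprocals of distinct primes is uniqueness. If $\sum_j c_j/p_j=\sum_j c'_j/p_j$ with integers $|c_j-c'_j|<p_j$, then reducing modulo each $p_j$ gives $c_j=c'_j$. So a total on counter~1 can be read back clause by clause, and no clause can borrow from another.

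\textbf{Forcing integrality with counter 2.} The real difficulty is that every transition may be fired with any $\alpha\in(0,1]$. We use the second counter as a budget that forces the important transitions to fire with $\alpha=1$. The first idea to try: every literal transition with update $1/p_j$ on counter~1 also subtracts a fixed amount on counter~2. The budget on counter~2 is tuned so that covering the target on counter~1 is only possible if the total counter~2 budget is spent exactly, with no fraction left unused. Because a transition's effects on the two counters scale together, we want the set of reachable pairs (counter~1, counter~2) to be a convex set whose extreme points correspond to integral firings. The target should then be reachable only at such an extreme point. Two routes seem workable.
\begin{itemize}
\item Give each transition update vector $(a,-b)$ with the ratios $a/b$ chosen so that the target lies on a supporting line.
\item Make the target meet counter~1 from above via a decrement and counter~2 from below; under coverability, the two inequalities then squeeze the total to an equality.
\end{itemize}
Once integrality of all literal transitions is enforced, the modular argument above concludes: every clause is hit, and the choice of branches gives a satisfying assignment. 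The converse direction, that a satisfying assignment gives a covering run, follows by firing everything with $\alpha=1$ and using slack fractions to fill up each clause to exactly $3/p_j$.

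\textbf{Integer updates in binary, and the $\qnz$ case.} For the binary integer variant, multiply all updates by $P=\prod_j p_j$. The result has polynomially many bits, so the construction becomes an integer-update instance. For the $\qnz$-semantics, start both counters at a large enough offset that they never go negative along any run, or order the gadgets so that increments come before decrements. Acyclicity holds by construction. I expect the main obstacle to be the integrality-forcing step: making sure that no clever mix of fractional firings across different branches can fake the exact target on both counters. The modular uniqueness argument only applies once integrality is secured, and fractional coefficients could otherwise give the same sum in many ways.
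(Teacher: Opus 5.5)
\textbf{Genuine gap: the integrality-forcing step.} Your encoding is a workable variant of the paper's. You attach primes to clauses and ask each clause to total exactly $3/p_j$; the paper attaches primes to literals, lets the chosen literal $\ell$ add $\#(\ell)/P(\ell)$, and lets each clause subtract a guessed non-empty subset of its literals' reciprocals. But the step you flag as ``the main obstacle'' is the heart of the proof, and you have not supplied it.

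Your one concrete idea also points the wrong way. Subtracting a fixed amount from counter~2 on each literal transition, with counter~2 required to stay above a threshold, gives only an \emph{upper} bound on $\sum_t \alpha_t$. That bound is vacuous, since every $\alpha_t \le 1$ already. It also does nothing against fractional firings that trade heavy against light transitions to reach the same counter-1 total.

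What you need is a quantity that every step increases by exactly its fraction $\alpha$, plus a target that forces this quantity to its maximum. The paper replaces each transition $(p,w,q)$ by $(p,(w,1-w),q)$, so $x_1+x_2$ grows by exactly $\alpha$ per step. Every source-to-target path has exactly $K=n+m$ steps. Starting at $(0,3K)$ and covering $(0,4K)$ then gives $4K \le x_1+x_2 = 3K+\sum\alpha \le 4K$. This forces every $\alpha=1$ and, at the same time, $x_1=0$ exactly. This sum invariant is the ``squeeze'' you wanted for turning coverability into an exact target; for your design, start at $(0,c)$ and cover $(T,c+K-T)$ with $T=\sum_j 3/p_j$. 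The offset (here $3K$ or $c$) bounds $|x_1|$, which keeps counter~2 non-negative under the non-negative semantics.

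\textbf{Changes needed once you adopt this mechanism.}
First, every transition, slack included, is now forced to fire with $\alpha=1$, so slack cannot be realised by fractions. Use parallel transitions adding $0$, $1/p_j$ or $2/p_j$ instead. Fractional slack would break soundness anyway, because a clause with surplus could absorb another clause's deficit.

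Second, the counting argument needs every path to have the same length. Your two branches for $z_i$ have $\#(z_i)$ and $\#(\overline{z}_i)$ transitions, so pad them with zero-update transitions. The paper avoids this by using one transition per literal, which works because it puts primes on literals and uses 3-SAT with bounded occurrences.

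Third, the first $m$ primes include $2$ and $3$. A clause total ranges over $\{0,\dots,5\}$ against the target $3$, so your bound $|c_j-c'_j|<p_j$ fails for these primes. Concretely, $5/2+0/3=3/2+3/3$: clause~2 can be unsatisfied while the target is still met. Take all $p_j\ge 5$.

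With these repairs your reduction goes through. Your step for integer updates in binary, multiplying through by the product of denominators, matches the paper's.
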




Let us first see how this result implies Theorems~\ref{thm:main-result-2-CVASS} and~\ref{thm:main-result-integer-updates-2-CVASS}. First, by Remark~\ref{rem:cov-to-reach}, all of the $\NP$-hardness results in the statements of these two theorems follow from Theorem~\ref{thm:2-CVASS}. Finally, all of the $\NP$ membership results in the statements of these two theorems follow from the fact that reachability for $d$-CVASS (over both types of encoding and both semantics) is in $\NP$
for any $d$~\cite[Section IV.B]{BlondinH17}. So, the rest of this section is dedicated to proving Theorem~\ref{thm:2-CVASS}. We begin by showing that the first part of Theorem~\ref{thm:2-CVASS} follows from the second part.


		
		 Let $\mach = (Q, \transitions)$ be an acyclic unary $2$-CVASS and \(p(\bu) \), \(q(\bv) \) be some configurations of \(\mach \). Let $B$ be the product of all the denominators of the numbers appearing in $T$ as well as in $\bu,\bv$. 
		 Let $\mach'$ be the acyclic \emph{binary} 2-CVASS that we obtain by multiplying all
		 of the transition vectors of $\mach$ by $B$. It is immediately clear that
		 $p(\bu)$ can cover $q(\bv)$ in $\mach$ (over either of the semantics) iff 
		 $p(B\bu)$ can cover $q(B\bv)$ in $\mach'$ (over the corresponding semantics).
		 Since it is easily seen that $B$ can be computed in binary in polynomial time,
		 this proves that the first part of Theorem~\ref{thm:2-CVASS} follows from the second part.

 Hence it suffices to prove the second part of Theorem~\ref{thm:2-CVASS}, which we do in two stages. In the first stage, we consider a restricted version of the reachability problem and show that this restricted reachability problem is $\NP$-hard already for acyclic unary 1-CVASS (over both the semantics). In the second stage, we then reduce restricted reachability to coverability by the addition of an extra counter.


\subsection*{Stage 1: Hardness of Restricted Reachability of Acyclic Unary 1-CVASS }

In this stage, we show $\NP$-hardness of a restricted version of the reachability problem (over both the semantics). To define this problem formally, we set up some notation. 
A restricted $\qn$-run of a CVASS $\mach$ is a $\qn$-run in which all of the fractions appearing in each step are exactly 1. More precisely, a restricted $\qn$-run is any run of the form $q_0(\bu_0) \actqn{\alpha_1 t_1 } q_1(\bu_1) \actqn{\alpha_2 t_2} q_2(\bu_2) \cdots \actqn{\alpha_n t_n} q_n(\bu_n)$ where each $\alpha_i = 1$.
In the restricted $\qn$-reachability problem, we are given a CVASS $\mach$ and two configurations $p(\bx)$ and $q(\by)$ and we have to decide if $p(\bx)$ can reach $q(\by)$ by a restricted $\qn$-run. Similarly, we can define restricted $\qnz$-runs and restricted $\qnz$-reachability. The main result of this stage is to show that

\begin{theorem}\label{thm-restrictedreach-onecvass-unary-np-hard}
	Restricted $\qn$-reachability and restricted $\qnz$-reachability is $\NP$-hard for acyclic unary \(1\)-CVASS.
\end{theorem}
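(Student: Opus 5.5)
We reduce from 3-SAT using the Egyptian prime fractions idea mentioned in the introduction. Let $\varphi$ be a 3-CNF formula with variables $x_1,\dots,x_n$ and clauses $C_1,\dots,C_m$. For each literal occurrence structure we assign distinct small primes: for every clause $C_j$ and every literal $\ell$ of $C_j$ (at most three per clause) we pick a fresh prime $p_{j,\ell}$, so that we use $O(m)$ distinct primes, each of size $O(m\log m)$ by the prime number theorem. They can be listed in logarithmic space by trial division, and their reciprocals $1/p$ have unary-encodable numerators and denominators.

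The acyclic 1-CVASS has two phases. In the \emph{assignment phase}, for each variable $x_i$ in order we branch: choosing $x_i$ true takes a transition adding $\sum_{j : x_i \in C_j} 1/p_{j,x_i}$ (split into one transition per prime, so each label is a single reciprocal of polynomial size), and choosing false adds $\sum_{j:\neg x_i\in C_j} 1/p_{j,\neg x_i}$. After this phase the counter holds $S=\sum_{(j,\ell)\in A} 1/p_{j,\ell}$, where $A$ is the set of pairs with $\ell$ true under the guessed assignment. In the \emph{checking phase}, for each clause $C_j$ we branch over its literals $\ell$ and take a transition subtracting $1/p_{j,\ell}$; then we add a gadget that, for each $(j,\ell)$, offers an optional self-contained "skip or subtract $1/p_{j,\ell}$" choice (two parallel edges, one labelled $0$, one labelled $-1/p_{j,\ell}$). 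The source is $s(0)$ and the target $t(0)$. Under restricted runs all fractions are $1$, so the run reaches $0$ exactly when the multiset of subtracted reciprocals equals $A$ as a sum.

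Correctness. If $\varphi$ is satisfiable, guess a satisfying assignment, pick in each clause a true literal to subtract, and subtract the remaining elements of $A$ through the optional gadget; the counter goes nonnegative throughout if the checking phase subtracts in order (the running value always equals the sum over a subset of $A$ still not removed, hence $\ge 0$), giving both $\qn$ and $\qnz$ restricted runs. Conversely, a restricted run reaching $0$ yields $\sum_{(j,\ell)\in A}1/p_{j,\ell} = \sum_{(j,\ell)\in D}1/p_{j,\ell}$, where $D$ is the multiset of subtracted pairs; multiplicities in $D$ are at most $2$ (once from the clause choice, once from the gadget). The key lemma is uniqueness: if $\sum_{p} c_p/p = 0$ with integers $|c_p|\le 2$ and $p$ distinct primes each exceeding $2$, then all $c_p=0$. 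Indeed multiplying by $P=\prod p$ and reducing modulo a given prime $q$ leaves $c_q\cdot P/q\equiv 0 \pmod q$, and since $P/q$ is coprime to $q$ and $|c_q|<q$, we get $c_q=0$. Thus $D=A$ as multisets, so each clause-chosen literal $(j,\ell)$ lies in $A$, i.e. is true, so every clause is satisfied.

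The main obstacle I expect is precisely ensuring this rigidity while keeping the instance unary and the run nonnegative: the choice of primes $>2$ (to dominate coefficient bounds) and the ordering of the checking gadget so that $\qnz$ runs exist in the satisfiable case must be arranged carefully. I would first try to order subtractions so that each subtraction of $1/p_{j,\ell}$ happens after it was added, which in an acyclic layout (assignment fully before checking) is automatic, giving the nonnegativity.
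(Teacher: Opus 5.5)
Your proof is correct. It shares the paper's core idea: an acyclic two-phase 1-CVASS, an Egyptian-prime-fraction encoding of the assignment, and a divisibility argument for uniqueness. The gadgets, however, are organised differently.

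The paper uses one prime $P(\ell)\ge 5$ per literal. It encodes an assignment as $\sum \#(\ell)/P(\ell)$ over the true literals, and lets each clause subtract $\sum_{\ell\in S}1/P(\ell)$ for any non-empty subset $S$ of its literals. To keep the multiplicities $\#(\ell)$ below the smallest prime, it reduces from Tovey's variant of 3-SAT in which each variable occurs at most four times. Its uniqueness lemma then only has to show that two positive sums with coefficients $\le 4$ have the same prime support.

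You instead use one prime per occurrence $(j,\ell)$, let each clause subtract a single chosen literal, and add a skip-or-subtract cleanup gadget to cancel the remaining true occurrences. This keeps multiplicities at $0$ or $1$ on the assignment side and at most $2$ on the checking side. Your signed lemma ($|c_p|<p\Rightarrow c_p=0$) therefore applies with primes $\ge 3$, and the reduction works from unrestricted 3-SAT.

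The cost is more primes ($\le 3m$ rather than $2n$), a longer assignment phase, and an extra gadget. You correctly split each variable's contribution into single-reciprocal edges, since a combined label would have an exponentially large denominator. The paper's version is more compact but depends on the bounded-occurrence restriction.

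In both constructions nonnegativity is automatic: all increments precede all decrements and the run ends at $0$. So any restricted $\qn$-run to the target is already a restricted $\qnz$-run.
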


Theorem~\ref{thm-restrictedreach-onecvass-unary-np-hard}  is surprising because, by restricting the fractions to always be 1, the model behaves similar to a 1-dimensional unary (integer) VASS, for which reachability is in $\NL$~\cite{GurariI81,BlondinEFGHLMT21}. However, unlike those machines, the  difference here is that transition updates in a 1-CVASS are allowed to have rational values, which as we will see allows for much more power.

To prove the $\NP$-hardness, we give a reduction from a constrained version of 3-SAT. More specifically, we consider as input 3-CNF formulas where each variable appears at most 4 times, and we want to decide if such a formula is satisfiable. This problem is  $\NP$-complete~\cite{tovey}.

Hence, let $\varphi$ be a given 3-CNF formula over $n$ variables $x_1,\dots,x_n$ and $m$ clauses $C_1, \dots, C_m$ where each $C_i = \ell_i^1 \lor \ell_i^2 \lor \ell_i^3$
with $\ell_i^1, \ell_i^2, \ell_i^3 \in \{x_1,\dots,x_n\} \cup \{\overline{x}_1,\dots,\overline{x}_n\}$. Furthermore each literal $x_i, \overline{x}_i$ appears in at most 4 clauses. For each $1 \le i \le n$, let $\#(x_i), \#(\overline{x}_i)$ be the number of times the literals $x_i, \overline{x}_i$ appear in the formula $\varphi$, respectively. Additionally, to each literal \(\ell\in \{x_1,\dots,x_n\} \cup \{\overline{x}_1,\dots,\overline{x}_n\} \) we assign a distinct prime number $P(\ell) \ge 5$. By the prime number theorem the $n^{th}$ prime is of magnitude $O(n \log n)$. Hence all these primes can be written down in unary in polynomial time in the size of $n$, by running a sieve algorithm such as the Sieve of Eratosthenes.

We now encode an assignment to the variables $x_1,\dots,x_n$ by means of a \emph{sum of reciprocal of the prime numbers}. That is, we map an assignment $A$ to the number $f(A) := \sum_{i=1}^n \frac{f_i}{r_i}$ where $(f_i, r_i) = (\#(x_i),P(x_i))$ if $A(x_i)$ is true and $(f_i, r_i) = (\#(\overline{x}_i),P(\overline{x}_i))$ if $A(x_i)$ is false. We call this the \emph{Egyptian prime fraction} encoding, named after the Egyptian fractions technique of using sums of fractions of the form $1/n$ to encode non-negative rational numbers.
As the following proposition shows, this way of using sums of reciprocals of primes to encode assignments is unique. 

\begin{proposition}\label{prop:Egyptian-prime-encoding}
	Suppose $\sum_{i=1}^k \frac{f_i}{r_i} = \sum_{j=1}^\ell \frac{g_j}{s_j}$ where $\{r_1,\dots,r_k\}$ (resp. $\{s_1,\dots,s_\ell\}$) is a set of pairwise distinct prime numbers greater or equal to \(5\)
	and $f_i,g_j \le 4$ for all \( i\in [k], j\in [\ell]\). Then, $\{r_1,\dots,r_k\} = \{s_1,\dots,s_\ell\}$. 
\end{proposition}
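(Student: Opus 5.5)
The plan is a divisibility (equivalently, $r$-adic valuation) argument, run one prime at a time. I read the statement with $f_i, g_j$ positive integers, $1 \le f_i, g_j \le 4$; this is how it will be used for the encoding $f(A)$, where every literal that carries a prime actually occurs in $\varphi$. Positivity is genuinely needed: if some $f_i = 0$, the prime $r_i$ could be added or removed without changing the sum. By symmetry between the two sides it suffices to show $\{r_1,\dots,r_k\} \subseteq \{s_1,\dots,s_\ell\}$.

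Suppose, for contradiction, that some $r := r_{i_0}$ does not occur among $s_1,\dots,s_\ell$. Rewrite the equality as
\[ \frac{f_{i_0}}{r} \;=\; \sum_{j=1}^{\ell} \frac{g_j}{s_j} \;-\; \sum_{i \neq i_0} \frac{f_i}{r_i}. \]
Let $D$ be the product of the distinct primes in $\{r_i : i \ne i_0\} \cup \{s_1,\dots,s_\ell\}$. The $r_i$ are pairwise distinct and $r \notin \{s_j\}$, so $r$ does not occur in this set. Since all of these numbers are primes, $\gcd(D, r) = 1$.

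Multiplying both sides by $D$ turns the right-hand side into an integer, because every denominator there divides $D$. Hence $D f_{i_0} / r \in \zz$, so $r \mid D f_{i_0}$. As $r$ is prime and $r \nmid D$, we get $r \mid f_{i_0}$. But $1 \le f_{i_0} \le 4 < 5 \le r$, which is a contradiction. The same argument with the roles of the two sides swapped gives the reverse inclusion, and hence $\{r_1,\dots,r_k\} = \{s_1,\dots,s_\ell\}$.

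The only delicate step is the coprimality bookkeeping. One must exclude $r$ from $D$, which uses both that $r \notin \{s_j\}$ and that the $r_i$ are pairwise distinct, so that $r$ cannot reappear through another term on the same side. Beyond that, the bound $f_i, g_j \le 4 < 5 \le$ every prime is exactly what rules out cancellation in the numerator. This also explains the paper's choice to restrict 3-SAT to formulas in which each variable occurs at most 4 times and to use primes at least $5$.
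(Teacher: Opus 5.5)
Your proof is correct and is essentially the paper's argument. The paper also isolates the term $f_d/r_d$ for a prime $r_d$ absent from the other side and clears denominators. It multiplies by the full product $P$ (which includes $r_d$) and compares divisibility by $r_d$, whereas you multiply by the product $D$ excluding $r$ and use integrality; these are the same argument. Your explicit assumption $f_i, g_j \ge 1$ is also silently needed in the paper's proof: if $f_d = 0$, the left-hand side is divisible by $r_d$. Flagging it is a gain in precision, not a deviation.
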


\begin{proof}
	Suppose $\sum_{i=1}^k \frac{f_i}{r_i} = \sum_{j=1}^\ell \frac{g_j}{s_j}$ and suppose $r_d \notin \{s_1,\dots,s_\ell\}$ for some $d$.
	Hence, we get that $\frac{f_d}{r_d} = \sum_{j=1}^\ell \frac{g_j}{s_j} - \sum_{i=1, i\neq d}^k \frac{f_i}{r_i}$. Multiplying both sides  by $P = \prod_{i=1}^k r_i \cdot \prod_{j=1}^\ell s_j$ we get
	$\frac{Pf_d}{r_d} = \sum_{j=1}^\ell \frac{Pg_j}{s_j} - \sum_{i=1, i \neq d}^k \frac{Pf_i}{r_i}$. Notice that each summand on the RHS is a natural number divisible by $r_d$, but the LHS is not divisible by $r_d$. Hence $\{r_1,\dots,r_k\} \subseteq \{s_1,\dots,s_\ell\}$. By a symmetric proof, we also have $\{s_1,\dots,s_\ell\} \subseteq \{r_1,\dots,r_k\}$ and so they are the same.
\end{proof}

This proposition immediately implies that $f(A) = f(A')$ iff $A = A'$ and therefore, our encoding of the assignments is unique. With this encoding in mind, we now construct an acyclic unary 1-CVASS $\mach$  which consists of two gadgets, an \emph{assignment gadget} $\mach_A$ which encodes an assignment using the above encoding, and a \emph{clause gadget} $\mach_C$ which checks if this assignment satisfies all the clauses (see also Fig.~\ref{fig-mach-CVASS}).

%
%

\begin{figure}[htbp]
	\centering
	\begin{tikzpicture}[scale=.95, every node/.style={scale=0.95}, x=2.1cm, y=2.1cm, font=\footnotesize]
		
		\pgfmathsetmacro{\nodedist}{0.75}
		\pgfmathsetmacro{\bnodedist}{1.5}
		\pgfmathsetmacro{\dotsdist}{0.5}
		
		\pgfmathsetmacro{\dotTempNodessDist}{0.35}
		
		\pgfmathsetmacro{\bdotsdist}{0.5}
		\pgfmathsetmacro{\xb}{0.75+\nodedist+2*\dotsdist}
		
		\node[state] (a0) at (0,0) {\(a_0\)};
		\node[state] (a1) at (\nodedist,0) {\(a_1\)};
		\node[] (dots) at (\nodedist+\dotsdist-0.05,0) {\(\dots\)};
		\node[state] (an) at (\nodedist+2*\dotsdist,0) {\(a_{n-1}\)};
		
		\node[state, minimum width=0.9cm, minimum height=0.9cm] (b1) at (\xb,0) {};
		\node[anchor=north, yshift = -4.8pt] at (b1.north) {\(a_n :=\)};
		\node[anchor=south, yshift = 0.7pt] at (b1.south) {\(b_0\)};
		

		\node[state] (b2) at (\xb+\bnodedist,0) {\(b_2\)};
		\node[] (bdots) at (\xb+\bnodedist+\bdotsdist,0) {\(\dots\)};
		\node[state] (bm) at (\xb+\bnodedist+2*\bdotsdist,0) {\(b_{m-1}\)};	
		\node[state] (bmplus) at (\xb+2*\bnodedist+2*\bdotsdist,0) {\(b_{m}\)};

		\draw [tran,bend left] (a0) to node[above] {\parbox{3.3em}{\centering \(+\frac{\#(x_1)}{P(x_1)}\)}} (a1);
		\draw [tran,bend right] (a0) to node[below] {\parbox{3.3em}{\centering \(+\frac{\#(\overline{x}_1)}{P(\overline{x}_1)}\)}} (a1);
		
		\draw [tran,bend left] (an) to node[above] {\parbox{3.3em}{\centering \(+\frac{\#(x_n)}{P(x_n)}\)}} (b1);
		\draw [tran,bend right] (an) to node[below] {\parbox{3.3em}{\centering \(+\frac{\#(\overline{x}_n)}{P(\overline{x}_n)}\)}} (b1);
		
		%

		\node[] (a1uptemp) at (\nodedist+\dotTempNodessDist,0.12) {};
		\node[] (a1downtemp) at (\nodedist+\dotTempNodessDist,-0.12) {};
		
		\node[] (anuptemp) at (\nodedist+2*\dotsdist-\dotTempNodessDist-0.1,0.12) {};
		\node[] (andowntemp) at (\nodedist+2*\dotsdist-\dotTempNodessDist-0.1,-0.12) {};
		
		\draw [tran,out=30, in=180] (a1) to  (a1uptemp);
		\draw [tran,out=-30, in=180] (a1) to  (a1downtemp);
		
		\draw [tran,out=0, in=162] (anuptemp) to  (an);
		\draw [tran,out=0, in=198] (andowntemp) to  (an);

		\pgfmathsetmacro{\xtikz}{0.275}
		\pgfmathsetmacro{\ytikz}{\xb+\bnodedist}
		\pgfmathsetmacro{\ymtikz}{\xb+2*\bnodedist+2*\bdotsdist}
		\pgfmathsetmacro{\heighttikz}{1}
		\pgfmathsetmacro{\steptikz}{\heighttikz/3}
		
		\draw[tran,rounded corners] (b1) -- (\xb+\xtikz,\heighttikz-\steptikz*6)   -- node[above] {\parbox{3.3em}{\centering \(-\frac{1}{P(\ell_1^1)}\)}} (\ytikz-\xtikz,\heighttikz-\steptikz*6) -- (b2);
		
		\draw[tran,rounded corners] (b1) -- (\xb+\xtikz,\heighttikz-\steptikz*5)   -- node[above] {\parbox{3.3em}{\centering \(-\frac{1}{P(\ell_1^2)}\)}} (\ytikz-\xtikz,\heighttikz-\steptikz*5) -- (b2);
		
		\draw[tran,rounded corners] (b1) -- (\xb+\xtikz,\heighttikz-\steptikz*4)   -- node[above] {\parbox{3.3em}{\centering \(-\frac{1}{P(\ell_1^3)}\)}} (\ytikz-\xtikz,\heighttikz-\steptikz*4) -- (b2);	
		
		\draw[tran,rounded corners] (b1) -- (\xb+\xtikz,\heighttikz-\steptikz*3)   -- node[above] {\parbox{20em}{\centering \(-\frac{1}{P(\ell_1^1)}-\frac{1}{P(\ell_1^2)}\)}} (\ytikz-\xtikz,\heighttikz-\steptikz*3) -- (b2);	
		
		\draw[tran,rounded corners] (b1) -- (\xb+\xtikz,\heighttikz-\steptikz*2)   -- node[above] {\parbox{20em}{\centering \(-\frac{1}{P(\ell_1^1)}-\frac{1}{P(\ell_1^3)}\)}} (\ytikz-\xtikz,\heighttikz-\steptikz*2) -- (b2);	
		
		\draw[tran,rounded corners] (b1) -- (\xb+\xtikz,\heighttikz-\steptikz*1)   -- node[above] {\parbox{20em}{\centering \(-\frac{1}{P(\ell_1^2)}-\frac{1}{P(\ell_1^3)}\)}} (\ytikz-\xtikz,\heighttikz-\steptikz*1) -- (b2);	
		
		\draw[tran,rounded corners] (b1) -- (\xb+\xtikz,\heighttikz-\steptikz*0)   -- node[above] {\parbox{20em}{\centering \(-\frac{1}{P(\ell_1^1)}-\frac{1}{P(\ell_1^2)}-\frac{1}{P(\ell_1^3)}\)}} (\ytikz-\xtikz,\heighttikz-\steptikz*0) -- (b2);

		\draw[tran,rounded corners] (bm) -- (\xb+\bnodedist+2*\bdotsdist+\xtikz,\heighttikz-\steptikz*6)   -- node[above] {\parbox{3.3em}{\centering \(-\frac{1}{P(\ell_m^1)}\)}} (\ymtikz-\xtikz,\heighttikz-\steptikz*6) -- (bmplus);
		
		\draw[tran,rounded corners] (bm) -- (\xb+\bnodedist+2*\bdotsdist+\xtikz,\heighttikz-\steptikz*5)   -- node[above] {\parbox{3.3em}{\centering \(-\frac{1}{P(\ell_m^2)}\)}} (\ymtikz-\xtikz,\heighttikz-\steptikz*5) -- (bmplus);
		
		\draw[tran,rounded corners] (bm) -- (\xb+\bnodedist+2*\bdotsdist+\xtikz,\heighttikz-\steptikz*4)   -- node[above] {\parbox{3.3em}{\centering \(-\frac{1}{P(\ell_m^3)}\)}} (\ymtikz-\xtikz,\heighttikz-\steptikz*4) -- (bmplus);	
		
		\draw[tran,rounded corners] (bm) -- (\xb+\bnodedist+2*\bdotsdist+\xtikz,\heighttikz-\steptikz*3)   -- node[above] {\parbox{20em}{\centering \(-\frac{1}{P(\ell_m^1)}-\frac{1}{P(\ell_m^2)}\)}} (\ymtikz-\xtikz,\heighttikz-\steptikz*3) -- (bmplus);

		\draw[tran,rounded corners] (bm) -- (\xb+\bnodedist+2*\bdotsdist+\xtikz,\heighttikz-\steptikz*2)   -- node[above] {\parbox{20em}{\centering \(-\frac{1}{P(\ell_m^1)}-\frac{1}{P(\ell_m^3)}\)}} (\ymtikz-\xtikz,\heighttikz-\steptikz*2) -- (bmplus);	
		
		\draw[tran,rounded corners] (bm) -- (\xb+\bnodedist+2*\bdotsdist+\xtikz,\heighttikz-\steptikz*1)   -- node[above] {\parbox{20em}{\centering \(-\frac{1}{P(\ell_m^2)}-\frac{1}{P(\ell_m^3)}\)}} (\ymtikz-\xtikz,\heighttikz-\steptikz*1) -- (bmplus);	
		
		\draw[tran,rounded corners] (bm) -- (\xb+\bnodedist+2*\bdotsdist+\xtikz,\heighttikz-\steptikz*0)   -- node[above] {\parbox{20em}{\centering \(-\frac{1}{P(\ell_m^1)}-\frac{1}{P(\ell_m^2)}-\frac{1}{P(\ell_m^3)}\)}} (\ymtikz-\xtikz,\heighttikz-\steptikz*0) -- (bmplus);

		\draw[tran,rounded corners] (b2)  -- (\ytikz+\xtikz,\heighttikz-\steptikz*0) ;	
		\draw[tran,rounded corners] (b2)  -- (\ytikz+\xtikz,\heighttikz-\steptikz*1) ;			
		\draw[tran,rounded corners] (b2)  -- (\ytikz+\xtikz,\heighttikz-\steptikz*2) ;			
		\draw[tran,rounded corners] (b2)  -- (\ytikz+\xtikz,\heighttikz-\steptikz*3) ;			
		\draw[tran,rounded corners] (b2)  -- (\ytikz+\xtikz,\heighttikz-\steptikz*4) ;			
		\draw[tran,rounded corners] (b2)  -- (\ytikz+\xtikz,\heighttikz-\steptikz*5) ;			
		\draw[tran,rounded corners] (b2)  -- (\ytikz+\xtikz,\heighttikz-\steptikz*6) ;

		\draw[tran,rounded corners]  (\xb+\bnodedist+2*\bdotsdist-\xtikz,\heighttikz)   -- (bm);		
		\draw[tran,rounded corners]  (\xb+\bnodedist+2*\bdotsdist-\xtikz,\heighttikz-\steptikz)  -- (bm);		
		\draw[tran,rounded corners]  (\xb+\bnodedist+2*\bdotsdist-\xtikz,\heighttikz-2*\steptikz)  -- (bm);		
		\draw[tran,rounded corners]  (\xb+\bnodedist+2*\bdotsdist-\xtikz,\heighttikz-3*\steptikz)  -- (bm);		
		\draw[tran,rounded corners]  (\xb+\bnodedist+2*\bdotsdist-\xtikz,\heighttikz-4*\steptikz)  -- (bm);		
		\draw[tran,rounded corners]  (\xb+\bnodedist+2*\bdotsdist-\xtikz,\heighttikz-5*\steptikz)  -- (bm);		
		\draw[tran,rounded corners]  (\xb+\bnodedist+2*\bdotsdist-\xtikz,\heighttikz-6*\steptikz)  -- (bm);		
	\end{tikzpicture}
	\caption{The CVASS $\mach$. The first half (from $a_0$ to $a_n$) depicts the assignment gadget $\mach_A$. The second half (from $a_n$ to $b_m$) depicts the clause gadget.}
	\label{fig-mach-CVASS}
\end{figure}

\subparagraph*{The assignment gadget.}

 $\mach_A$ has states \(a_0,\dots,a_n \), and for each $0 \le i < n$ there are two transitions \((a_i,\#(x_i)/P(x_i),a_{i+1}) \) and \((a_i,\#(\overline{x}_i)/P(\overline{x}_i),a_{i+1}) \). See the first half of Fig.~\ref{fig-mach-CVASS} for a representation of $\mach_A$. It is then immediately clear that there is a restricted $\qn$-run from $a_0(0)$ to $a_n(k)$ iff  
$k = f(A)$ for some assignment $A$. 



\subparagraph*{The clause gadget.}


 The clause gadget $\mach_C$ checks whether the assignment $A$ generated by the assignment gadget $\mach_A$ satisfies the formula $\varphi$. It does this, by guessing for each clause $C_i$, the subset of literals $\ell_i^1, \ell_i^2, \ell_i^3$ that are satisfied by $A$ and then decrementing the counter by the appropriate inverse of the primes.
For example, if it guesses that the assignment satisfies only $\ell_i^2, \ell_i^3$, it will decrement
the counter by $1/P(\ell_i^2) + 1/P(\ell_i^3)$. 

More precisely $\mach_C$ has the
states \( b_0, b_1,\dots,b_m\) and for each $0 \le i < m$ and for each of the 7 non-empty subsets \(S_i\subseteq\{\ell_i^1,\ell_i^2,\ell^3_i \} \), $\mach_C$ has a transition $(b_i,-\sum_{\ell\in S_i}1/P(\ell), b_{i+1})$. See the second half of Fig.~\ref{fig-mach-CVASS} for a representation of $\mach_C$.
We have the following proposition
which asserts the main property of the clause gadget that we want.


\begin{restatable}{proposition}{propositionrestrictedreachabilityclausegadget}
\label{proposition-restricted-reachability-clause-gadget}
	There is a restricted $\qn$-run from $b_0(f(A))$ to $b_m(0)$ iff $A$ satisfies the formula $\varphi$.
\end{restatable}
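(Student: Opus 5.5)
A restricted $\qn$-run from $b_0(f(A))$ passes through $b_0,\dots,b_m$ in order, picking for each clause $C_i$ one non-empty subset $S_i \subseteq \{\ell_i^1,\ell_i^2,\ell_i^3\}$. The run ends at $b_m(0)$ iff
\[
f(A) = \sum_{i=1}^m \sum_{\ell \in S_i} \frac{1}{P(\ell)}.
\]
Grouping by literal, the right-hand side equals $\sum_{\ell} \frac{c_\ell}{P(\ell)}$. Here $c_\ell$ is the number of clauses $C_i$ with $\ell \in S_i$, so $0 \le c_\ell \le \#(\ell) \le 4$. (If a literal occurs twice in one clause, we treat the occurrences as positions, so the count is still bounded by $\#(\ell)$.) The left-hand side is $\sum_{\ell \in L_A} \frac{\#(\ell)}{P(\ell)}$, where $L_A$ is the set of literals made true by $A$. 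So the proposition reduces to an identity between two sums of reciprocals of distinct primes, and I will use Proposition~\ref{prop:Egyptian-prime-encoding}.

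\textbf{($\Leftarrow$)} Suppose $A$ satisfies $\varphi$. For each $i$, let $S_i$ be the set of literals of $C_i$ that are true under $A$; it is non-empty because $A$ satisfies $C_i$. Then for every true literal $\ell$, every clause containing $\ell$ puts $\ell$ into its $S_i$, so $c_\ell = \#(\ell)$. For every false literal, $c_\ell = 0$. Hence the total decrement is exactly $f(A)$. Since this is the $\qn$-semantics, intermediate values are unconstrained, and the run reaches $b_m(0)$.

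\textbf{($\Rightarrow$)} Suppose such a run exists with subsets $S_i$, so the two sums above are equal. On the right, drop the literals with $c_\ell = 0$. On the left, drop literals with $\#(\ell) = 0$; their terms are zero anyway. Both sides are now sums of $f/r$ with distinct primes $r \ge 5$ and numerators between $1$ and $4$. Proposition~\ref{prop:Egyptian-prime-encoding} then says the two sets of primes coincide. Since $P$ is injective, every literal $\ell$ with $c_\ell > 0$ lies in $L_A$, i.e.\ is true under $A$. Each $S_i$ is non-empty, so each clause contains some literal $\ell \in S_i$, which has $c_\ell > 0$ and hence is true. Therefore $A$ satisfies every clause.

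I expect the main obstacle to be the bookkeeping in ($\Rightarrow$). One has to make sure zero-numerator terms are removed so the proposition's hypotheses hold, and that the bound $c_\ell \le 4$ really holds. The coefficient equality itself ($c_\ell = \#(\ell)$) is not needed; only the inclusion of the supports is. Both points follow from the at-most-4-occurrences restriction on $\varphi$.
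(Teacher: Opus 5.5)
Your proposal is correct and follows the paper's proof essentially verbatim. For ($\Leftarrow$) you pick, for each clause, the transition for its set of true literals; for ($\Rightarrow$) you regroup the decrements as $\sum_\ell c_\ell/P(\ell)$ with $c_\ell \le \#(\ell) \le 4$ and apply Proposition~\ref{prop:Egyptian-prime-encoding}. Your explicit removal of zero-numerator terms is a welcome bit of extra care, which the paper omits, since that proposition's divisibility argument genuinely needs nonzero numerators.
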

 The \(\Rightarrow\) direction holds from the uniqueness of encodings given by Proposition~\ref{prop:Egyptian-prime-encoding}. For the	\(\Leftarrow\) direction it suffices to always pick the transition decrementing the counter by $\sum_{\ell \in S_i} \frac{1}{P(\ell)}$ for the non-empty set of literals of \(C_i \) that are true per \(A\). A full proof of Proposition~\ref{proposition-restricted-reachability-clause-gadget} can be found in Appendix~\ref{app-proposition-restricted-reachability-clause-gadget-proof}.

Let $\mach$ be the acyclic 1-CVASS we obtain by combining $\mach_A$ and $\mach_C$ by fusing together the states $a_n$ and $b_0$ (See Figure~\ref{fig-mach-CVASS}). Notice that any $\qn$-run in $\mach$ from $a_0(0)$ to $b_m(0)$ must also be a $\qnz$-run: Indeed,
in any such run, the counter is only incremented in the stretch between $a_0$ to $a_n$
and only decremented in the stretch between $a_n$ and $b_m$.  By the above properties of $\mach_A$ and $\mach_C$ we hence get that

\begin{theorem}\label{thm-2cvass-restricted-NP-hard}
	There is a restricted $\qn$-run (resp. restricted $\qnz$-run) from $a_0(0)$ to $b_m(0)$ in $\mach$ iff $\varphi$ is satisfiable.
\end{theorem}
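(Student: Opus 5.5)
The proof composes the two gadget properties already established and then checks non-negativity separately for the $\qnz$ case. The combined machine $\mach$ is acyclic and every run from $a_0$ to $b_m$ passes through the fused state $a_n = b_0$. So every restricted $\qn$-run from $a_0(0)$ to $b_m(0)$ splits uniquely as a restricted run from $a_0(0)$ to $a_n(k)$, followed by a restricted run from $b_0(k)$ to $b_m(0)$, for some intermediate value $k$. Conversely, any two such runs that agree on $k$ concatenate to a restricted run of $\mach$.

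For the forward direction, take a restricted $\qn$-run from $a_0(0)$ to $b_m(0)$ and split it at $a_n$ with value $k$. By the stated property of the assignment gadget $\mach_A$, we get $k = f(A)$ for some assignment $A$. The remaining segment is then a restricted $\qn$-run from $b_0(f(A))$ to $b_m(0)$. Proposition~\ref{proposition-restricted-reachability-clause-gadget} gives that $A$ satisfies $\varphi$, so $\varphi$ is satisfiable.

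For the backward direction, let $A$ be a satisfying assignment. The assignment gadget property supplies a restricted run from $a_0(0)$ to $a_n(f(A))$. Concretely, at step $i$ one takes the transition for the literal that $A$ makes true. Proposition~\ref{proposition-restricted-reachability-clause-gadget} supplies a restricted run from $b_0(f(A))$ to $b_m(0)$. Concatenating the two gives the required restricted $\qn$-run.

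For the $\qnz$ variant, one implication is trivial, since every restricted $\qnz$-run is in particular a restricted $\qn$-run. For the other implication, we show that any restricted $\qn$-run from $a_0(0)$ to $b_m(0)$ stays non-negative. All transitions of $\mach_A$ have positive labels and all transitions of $\mach_C$ have negative labels. So along the run the counter first increases monotonically from $0$ up to $f(A)$, and then decreases monotonically from $f(A)$ down to the final value $0$. Every intermediate value therefore lies in $[0, f(A)]$, and the run is a restricted $\qnz$-run.

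The only nontrivial ingredient is the uniqueness of the Egyptian prime fraction encoding, which is already packaged inside Proposition~\ref{proposition-restricted-reachability-clause-gadget}. What remains is bookkeeping about splitting runs at the fused state.
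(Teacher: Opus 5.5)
Your proposal is correct and follows the paper's argument: split the run at the fused state $a_n=b_0$, apply the assignment-gadget property and Proposition~\ref{proposition-restricted-reachability-clause-gadget}, and get non-negativity because the counter is non-decreasing in $\mach_A$ and non-increasing in $\mach_C$. One small correction: the labels in $\mach_A$ are only non-negative rather than positive, since $\#(\ell)$ can be $0$ for a literal that does not occur in $\varphi$, but non-negativity is all your monotonicity argument needs.
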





\subsection*{Stage 2: Reducing Restricted Runs to Coverability}

We now show how to get rid of the restricted run assumption in the 1-CVASS $\mach$ from the previous subsection.
First, by construction, notice that any run in $\mach$ from $a_0(0)$ to $b_m(0)$ takes exactly $K := n+m$
steps. Furthermore, any single transition of $\mach$ can update the counter by a number whose absolute value is at most $3$. Indeed, the largest possible absolute value appearing in the transitions is either of the form $\frac{1}{p} + \frac{1}{q} + \frac{1}{r}$  or of the form $\frac{d}{p}$ for some $d \le 4$. 
Hence, it follows that the maximum absolute value attainable in the counter in any run is at most $3K$.

We exploit this observation and modify $\mach$ into an acyclic 2-CVASS $\mach'$ by adding one extra counter to  $\mach$, and changing each transition $(p,w,q)$ of \(\mach \) into $(p,(w,1-w),q)$  in \(\mach' \).
We can now prove that

\begin{restatable}{lemma}{lemmacoverabilitycvassnpcomplete}
\label{lemma-coverability-2cvass-npcomplete}
	There is a restricted $\qn$-run (resp. $\qnz$-run) from $a_0(0)$ to $b_m(0)$ in $\mach$ iff there is a $\qn$-run (resp. $\qnz$-run) from $a_0(0,3K)$ covering $b_m(0,4K)$ in $\mach'$.
\end{restatable}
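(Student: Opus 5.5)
The key invariant is that every transition of $\mach'$ has update $(w,1-w)$, so its two components sum to exactly $1$. Hence a step fired with fraction $\alpha$ raises the sum of the two counters by exactly $\alpha$. Since $\mach$ is acyclic and layered, every path from $a_0$ to $b_m$ has exactly $K=n+m$ transitions. Thus along any $\qn$-run of $\mach'$ from $a_0(0,3K)$ to some $b_m(c_1,c_2)$, we have $c_1+c_2 = 3K + \sum_{i=1}^K \alpha_i$, where $\alpha_1,\dots,\alpha_K\in(0,1]$ are the fractions used.

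\textbf{($\Rightarrow$)} Take the restricted run of $\mach$ and fire the corresponding transitions of $\mach'$ with fraction $1$. The first counter follows the run of $\mach$ and ends at $0$. The second counter ends at $3K + \sum_i (1-w_i) = 3K + K - 0 = 4K$, so it reaches $b_m(0,4K)$ and in particular covers it.

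For the $\qnz$ case we also need nonnegativity. The first counter is the $\qnz$-run of $\mach$ (by the preceding observation, any run from $a_0(0)$ to $b_m(0)$ is nonnegative). After $j$ steps the second counter equals $3K + j - x_j$, where $x_j$ is the current first-counter value. Since $|x_j|\le 3K$ (each $|w|\le 3$ and there are at most $K$ steps), this value is $\ge j \ge 0$.

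\textbf{($\Leftarrow$)} Suppose a run reaches $b_m(c_1,c_2)$ with $c_1\ge 0$ and $c_2\ge 4K$. Then $c_1+c_2\ge 4K$, while the invariant gives $c_1+c_2 = 3K+\sum\alpha_i \le 4K$. Hence $\sum \alpha_i = K$, which forces every $\alpha_i=1$, and also $c_1=0$ and $c_2=4K$. The projection onto the first counter is therefore a restricted run of $\mach$ from $a_0(0)$ to $b_m(0)$. In the $\qnz$ case this projection is moreover a restricted $\qnz$-run, since the first coordinate stays nonnegative.

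The only delicate point is the nonnegativity of the second counter in the $\qnz$ direction ($\Rightarrow$). That is exactly where the bound $3K$ on the counter magnitude, derived just before the lemma, is used, and it explains the choice of initial value $3K$.
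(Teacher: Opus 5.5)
Your proof is correct and follows the paper's argument essentially step for step. In the backward direction, the invariant that each step raises the sum of the two counters by exactly its fraction forces every fraction to be $1$ and the endpoint to be exactly $b_m(0,4K)$. In the forward direction, the bound $|x_j|\le 3K$ gives nonnegativity of the second counter, $3K+j-x_j\ge 0$.
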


Intuitively, the reason behind this lemma is that each transition of $\mach'$ increases the sum of the two counters by $1$. Hence, by requiring that the sum of the two counters at the very end has increased by at least $K$, we force
the run in $\mach'$ satisfying that requirement to fire each step with the fraction \(1\), thereby inducing a restricted run in $\mach$. 
The full proof can be found in Appendix~\ref{app-lemma-coverability-2cvass-npcomplete-proof}.

Combined with Theorem~\ref{thm-2cvass-restricted-NP-hard}, this then allows us to conclude that 2-$\Cover_\qn$ and 2-$\Cover_{\qnz}$ are both $\NP$-hard over unary encoding. This proves the second part of Theorem~\ref{thm:2-CVASS}, which completes the main result of this section.
\section{Lower Bounds for 3-CVASS (with Integer Updates)}
\label{sec-3CVASS}
In this section, we prove Theorem~\ref{thm:main-result-integer-updates-3-CVASS}. To this end, we prove the following main result.
\begin{theorem}\label{thm:3-CVASS}
    3-$\Cover_{\qnz}$ for acyclic unary 3-CVASS with integer updates is $\NP$-hard.
\end{theorem}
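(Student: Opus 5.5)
We plan to reduce from the hardness for acyclic unary 1-CVASS with rational updates under restricted runs (Theorem~\ref{thm-2cvass-restricted-NP-hard}). The new difficulty is that updates must be integers written in unary, so the fractional updates $\pm\frac{c}{p}$ with $p$ a small prime can no longer appear on transitions. Instead, we will manufacture them through the firing fractions $\alpha$ themselves, using two auxiliary counters. The first counter $x$ plays the role of the counter of $\mach$ from Section~\ref{sec-2CVASS}. The second counter $z$ is a ``fuel'' counter. The third counter $s$ is a potential counter in the spirit of Lemma~\ref{lemma-coverability-2cvass-npcomplete}.

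\textbf{Simulating one rational step.} Each transition $(q,\pm\frac{c}{p},q')$ of $\mach$ is replaced by a two-step gadget through a fresh intermediate state $r$. The first step is a refill $(q,(0,1,k),r)$ for a suitable weight $k$; it puts at most one unit of fuel into $z$. The second step is a consume $(r,(\pm c,-p,k'),q')$. Under the $\qnz$-semantics we have $z\ge 0$, so the consume fraction satisfies $\alpha\le\beta/p\le 1/p$ relative to the fuel $\beta$ just added (plus any leftover). Hence it adds at most $\pm c/p$ to $x$, with equality exactly when $\beta=1$ and all fuel is used up. All numbers involved are small integers, so the encoding is unary and the machine stays acyclic. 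For the final equality $x=0$ at $b_m$, which coverability cannot test directly, we follow the idea of Section~\ref{sec-2CVASS}. Every decrement of $x$ in the clause part is mirrored by an equal increment of $s$ (via the integer coefficient $c$ on the consume transition), and the target demands a large enough value of $s$.

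\textbf{Forcing exactness.} The weights $k,k'$ must be chosen so that the total gain of $s$ along any run of the (fixed-length) acyclic machine is at most some bound $M$, with equality iff every refill fires with fraction $1$, every consume fully empties $z$, and $x$ ends at $0$. We then ask to cover $b_m(0,0,M)$ from $a_0(0,0,0)$. The intended correctness argument follows the proof of Lemma~\ref{lemma-coverability-2cvass-npcomplete}: a potential function that increases by at most a fixed amount per step, with slack exactly when some step is fired fractionally or fuel is left in $z$. Reaching the maximum then yields a restricted run of $\mach$, and Theorem~\ref{thm-2cvass-restricted-NP-hard} finishes the proof. Conversely, a satisfying assignment gives a run with all refills at fraction $1$ and all consumes at fraction $1/p$, which attains $M$.

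\textbf{Main obstacle.} The hard step is preventing fuel from being carried between rounds. A run could under-consume in one round and over-consume in a later round with a different prime, shifting the $x$-contribution from $c/p$ to some other rational. This would break the uniqueness argument of Proposition~\ref{prop:Egyptian-prime-encoding}. My first attempt is to make the $s$-reward of a consume proportional to the fuel it burns, $k'=p$, and give the refill a negative $s$-weight. Leftover fuel is then strictly penalized, and the only way to reach $M$ is to burn exactly the unit of fuel in every round. If this accounting fails to rule out cross-round transfers, the fallback is to isolate each round: add a drain self-transition $(r',(0,-1,-L),r'')$ with a large penalty $L$ after each consume. Any positive leftover then costs more than any later round could recoup, since every per-round gain is bounded by a constant in unary. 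Verifying this inequality carefully is where I expect most of the work to lie.
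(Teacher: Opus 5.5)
Your architecture can be made to work, but the step you flag as the main obstacle is a genuine gap, and neither of your two mechanisms closes it.

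\textbf{The first attempt fails.} Take consume rewards proportional to the fuel burned ($k'=p$) and refill weights that do not depend on the round. Then the final value of $s$ is an affine function of three quantities: $\sum_j\beta_j$, the total fuel burned $\sum_j\alpha_j p_j$, and the $x$-terms. The first two are unchanged if a run burns only part of its unit in round $j$ and the rest in a later round. This is possible, since a consume with $z$-coefficient $D\ge 5$ can burn up to $D$ units. So only fuel left at the very end is penalized, not fuel carried across rounds.

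Concretely, take the toy formula $x_1\wedge\overline{x}_1$ with $P(x_1)=5$ and $P(\overline{x}_1)=11$, so the updates are $+\frac{1}{5}$, $-\frac{1}{5}$, $-\frac{1}{11}$. Refill fully each time and burn $1,\frac{1}{6},\frac{11}{6}$ units in the three rounds. This gives $x$-values $\frac{1}{5},\frac{1}{6},0$ and fuel $0,\frac{5}{6},0$. Round $2$ does not empty $z$, yet the run attains the same $s$ an exact run would, for an unsatisfiable formula.

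\textbf{The drain fails too.} Fractions are chosen by the run and must be strictly positive. A drain that is the only way forward cannot be fired after an honest consume, which leaves $z=0$. A drain that can be bypassed is simply bypassed. A penalty on a transition punishes using it, not the presence of leftover fuel. Forcing out ``whatever is left'' in $z$ would be a zero test, which is exactly what you lack.

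\textbf{The missing idea} is a round-dependent weighting: the controlling-counter idea the paper uses in its Stage~2 (the tester of Proposition~\ref{prop:cc}). Let there be $R=n+m$ rounds and set $\lambda_j=R-j+1$. Give the round-$j$ refill the $s$-weight $W-\lambda_j$, for any $W>0$. Give a round-$j$ consume with $(x,z)$-update $(e,-D)$ the $s$-weight $\lambda_j D-e$. Then the quantity $s+\lambda_j z+x$ behaves as follows:
\begin{itemize}
\item it gains $W\beta_j$ at each refill;
\item it is unchanged by each consume;
\item it loses $z_j$, the fuel after round $j$, when $\lambda$ drops at the end of round $j$.
\end{itemize}
So $s$ changes in total by $W\sum_j\beta_j-\sum_{j=1}^R z_j-x_{\mathrm{final}}\le WR$. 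Equality holds iff all $\beta_j=1$, all $z_j=0$ (hence each $\alpha_j=1/D$), and $x_{\mathrm{final}}=0$.

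\textbf{Smaller repairs.}
\begin{itemize}
\item Mirror \emph{all} $x$-updates with opposite sign. Mirroring only the clause decrements makes the honest maximum depend on $f(A)$.
\item Start $s$ at a polynomial offset, since it is decremented under the $\qnz$-semantics.
\item Encode the clause updates $-\sum_{\ell\in S}1/P(\ell)$ as $(-N,-D)$ with $D=\prod_{\ell\in S}P(\ell)$, not as $c/p$.
\end{itemize}

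\textbf{Comparison with the paper.} So repaired, your proof is a genuine alternative. You reuse the additive encoding and Theorem~\ref{thm-2cvass-restricted-NP-hard}, synthesizing the rational steps from firing fractions. The paper instead switches to the multiplicative encoding $1/\prod_i r_i^{f_i}$ in a 2-CVASS with zero tests, where transfers between two counters realise exact division and multiplication. It then removes the zero tests with the same controlling-counter weighting.
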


By Remark~\ref{rem:cov-to-reach}, it is easily seen that this implies all of the $\NP$ lower bounds mentioned in Theorem~\ref{thm:main-result-integer-updates-3-CVASS}. 
As before, all of the $\NP$ upper bounds in Theorem~\ref{thm:main-result-integer-updates-3-CVASS} follow from~\cite[Section IV.B]{BlondinH17}. So the rest of this section is dedicated to proving Theorem~\ref{thm:3-CVASS}, which we do in two stages: First, we show that
coverability for acyclic unary 2-CVASS \emph{with zero-tests} over the $\qnz$-semantics
is $\NP$-hard. In the next stage, we show how to reduce this problem to coverability for acyclic unary 3-CVASS. 

Throughout this section, we shall exclusively consider the $\qnz$-semantics.

\subsection*{Stage 1: Hardness for Acyclic Unary 2-CVASS with Zero-Tests}

In this stage, we show $\NP$-hardness of coverability for an extended model of CVASS, namely CVASS with zero-tests. A $d$-CVASS with zero-tests is a triple $\mach = (Q,\transitions,Z)$ where $(Q,\transitions)$ is a $d$-CVASS and $Z : Q \to 2^{[d]}$ is a mapping which mandates that in a configuration $p(\bx)$, every counter in $Z(p)$ must be zero. Formally, a configuration of $\mach$ is a pair $p(\bx)$ where $p \in Q, \bx \in \qnz^d$ and $\bx[i] = 0$ for every $i \in Z(p)$. 
Given two configurations $p(\bx)$ and $q(\by)$, a transition
$(p,\bw,q)$ and a non-zero fraction $\alpha \in (0,1]$, we say that $p(\bx) \act{\alpha \bw} q(\by)$
if $\bx[i] = 0$ for all $i \in Z(p)$, $\by = \bx + \alpha \bw$ and $\by[i] = 0$ for all $i \in Z(q)$.
Similarly to CVASS, we define the notions of steps, $\qnz$-runs, and $\qnz$-reachability for CVASS with zero-tests. Our main result in this stage is to show that

\begin{theorem}
	Reachability for acyclic unary 2-CVASS with integer updates and zero-tests over the $\qnz$-semantics is $\NP$-hard.
\end{theorem}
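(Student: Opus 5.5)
The plan is to mimic the reduction of Theorem~\ref{thm-2cvass-restricted-NP-hard}, again from 3-SAT in which every literal occurs at most $4$ times, with primes $P(\ell)\ge 5$. Since updates must now be integers, I would use zero-tests to generate the Egyptian prime fractions. The first counter $x$ plays the role of the single counter of the 1-CVASS $\mach$ from Stage 1 of Section~\ref{sec-2CVASS}, and the second counter $y$ serves as a ``unit'' register.

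\textbf{Division gadget.} The basic gadget adds $c/p$ to $x$ for integers $c\le 4$ and $p$ written in unary. It consists of two consecutive transitions $(\mathbf 0,(0,+1))$ and $((c,-p))$, where the target state of the second has $Z=\{2\}$. If the first transition is fired with fraction $\beta$, the zero-test forces the second to be fired with fraction exactly $\beta/p$. Hence $x$ increases by exactly $\beta c/p$ and $y$ returns to $0$.

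\textbf{Clause gadget.} Decrementing $x$ by $\sum_{\ell\in S}1/P(\ell)$ works the same way. Write this sum as $a_S/b_S$ with $b_S=\prod_{\ell\in S}P(\ell)$. This product of at most three polynomially bounded primes is polynomial, so it can be written in unary. The gadget then uses the transitions $(0,+1)$ and $(-a_S,-b_S)$, followed by a zero-test on $y$.

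\textbf{Assembling the CVASS.} Chaining these gadgets along the acyclic skeleton of Fig.~\ref{fig-mach-CVASS} gives an acyclic unary 2-CVASS with integer updates and zero-tests. The intended run fires every unit transition with $\beta=1$. It then reaches $b_m(0,0)$ from $a_0(0,0)$ iff $\varphi$ is satisfiable, by Proposition~\ref{proposition-restricted-reachability-clause-gadget}. Non-negativity is automatic, because $x$ only increases before $b_0$ and only decreases after it.

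\textbf{Main obstacle: forcing $\beta=1$.} The hard step is the converse direction: the unit transitions $(0,+1)$ may be fired with any fraction $\beta\in(0,1]$, so a run may realise $\sum_i\beta_i c_i/p_i=\sum_j\gamma_j a_j/b_j$ with non-unit $\beta_i,\gamma_j$. Proposition~\ref{prop:Egyptian-prime-encoding} then no longer applies.

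My first attempt would transplant the budget trick of Lemma~\ref{lemma-coverability-2cvass-npcomplete} into the zero-test world. Each unit transition would become $(+N,+1)$ for a large integer $N$, and every clause division transition would also carry a compensating $-N$ on $x$ in such a way that each gadget's net effect on $x$ is $N\beta$ plus the encoded fraction. The target value of $x$ would then be chosen as $(n+m)N$. Since all fractions are at most $1$, and with $N>3(n+m)$, reaching the target forces every $\beta$ to equal $1$; this mirrors the sum-of-counters argument.

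If the signs do not cooperate, since clause gadgets subtract, I would instead route the budget through $y$. The idea is to keep $y$ non-zero throughout, holding a ``reservoir'' $R$. Each gadget would first move mass from $R$ and then zero-test only a transient part of it. For this I would reuse $x$ temporarily at known checkpoints where it is tested to be zero, for example by encoding the assignment value in $y$ and the unit in $x$ in alternating phases. A single final test would then require the reservoir to be fully spent, which again forces all fractions to be $1$.

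Once the unit fractions are forced to equal $1$, every run corresponds to a restricted run of $\mach$, and Theorem~\ref{thm-2cvass-restricted-NP-hard} finishes the proof.
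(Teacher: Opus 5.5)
There is a genuine gap at exactly the step you flag as the main obstacle, and neither fix closes it. In your division gadget the zero-test acts on $y$, which the unit transition filled \emph{from nothing}. It ties the second fraction to the first ($\gamma=\beta/p$) but leaves $\beta\in(0,1]$ free. So each gadget $g$ contributes exactly $\beta_g E_g$ to $x$, where $E_g$ is its effect at fraction $1$; with $(+N,+1)$ this is $\beta_g(N+c/p)$, not ``$N\beta$ plus the encoded fraction''. You only check the final value of $x$. For a fixed sequence of discrete choices, the reachable final values therefore form, by continuity, an interval whose top $\sum_g E_g$ is attained only when all $\beta_g=1$. Hence the exact target $T=(n+m)N$ is reached by \emph{any} discrete choice with $\sum_g E_g\ge T$, after scaling some fractions down, however large $N$ is. In your $(+N,+1)$ design, $\sum_g E_g\ge T$ amounts to $\sum_j a_{S_j}/b_{S_j}\le f(A)$. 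This is a threshold, not an equation, so Proposition~\ref{prop:Egyptian-prime-encoding} no longer helps. Concretely, suppose $\varphi$ is unsatisfiable but some $A$ violates only one clause. Pick a false literal $\ell'$ there and a single true literal in every other clause. As soon as some clause has a second true literal $\ell''$ with $P(\ell'')\le P(\ell')$, the inequality holds and your machine answers yes. This is where the analogy with Lemma~\ref{lemma-coverability-2cvass-npcomplete} breaks: there the tested quantity $x+y=3K+\sum\alpha$ depends only on the fractions, whereas yours mixes the budget with the encoded residue. The reservoir variant is not a construction. Once $y$ holds a non-zero reservoir, you lose the very zero-test that makes your division gadget work. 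Two counters would then have to play three roles at once: accumulated sum, zero-tested scratch, and budget.

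The missing idea is to let the zero-test act on the counter being \emph{drained}, so that the current counter value pins the fraction and no free parameter remains. The paper does exactly this. From $a_i$ (testing $y=0$) it fires $(-P(\ell)^{\#(\ell)},1)$ into $a_i'$, which tests $x=0$; this forces $\alpha=g_i/P(\ell)^{\#(\ell)}$. It then fires $(1,-1)$ into $a_{i+1}$, which tests $y=0$; this forces the second fraction to equal $\alpha$ (Proposition~\ref{prop:assignment-gadget}). Such gadgets scale $x$ rather than add a constant to it. The assignment is therefore encoded multiplicatively as $f(A)=1/\prod_i r_i^{f_i}$ instead of as an Egyptian fraction. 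The clause gadget moves $x$ out with $(-1,1)$, which forces the fraction to be $g_i\le 1$, and back with $(\prod_{\ell\in S}P(\ell),-1)$. Reaching $b_m(1,0)$ then means $\prod_j\prod_{\ell\in S_j}P(\ell)=1/f(A)$. By unique factorisation this holds iff every clause contains a true literal (Lemma~\ref{lemma-clauseapplication}).
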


Similar to the previous section, we will prove this theorem by giving a reduction from a constrained version of 3-SAT, in which each variable appears at most 4 times in a formula. To this end, let $\varphi$ be a 3-CNF formula over $n$ variables $x_1,\dots,x_n$ and $m$ clauses $C_1, \dots, C_m$ where each $C_i = \ell_i^1 \lor \ell_i^2 \lor \ell_i^3$ with $\ell_i^1, \ell_i^2, \ell_i^3 \in \{x_1,\dots,x_n\} \cup \{\overline{x}_1,\dots,\overline{x}_n\}$. Furthermore, each literal $x_i, \overline{x}_i$ appears at most 4 times in $\varphi$. For each $1 \le i \le n$, let $\#(x_i), \#(\overline{x}_i)$ be the number of times the literals $x_i, \overline{x}_i$ appear in the formula $\varphi$, respectively. Additionally, to each literal \(\ell\in \{x_1,\dots,x_n\} \cup \{\overline{x}_1,\dots,\overline{x}_n\} \) we assign a distinct prime number $P(\ell) \ge 5$, such that $P(\ell)=P(\ell')$.
We can now uniquely encode an assignment \(A \) to the variables as the fraction $f(A) := \frac{1}{r_1^{f_1} \cdot r_2^{f_2} \cdot \dots \cdot r_n^{f_n}}$ where $(r_i, f_i) = (P(x_i),\#(x_i))$ if $A(x_i)$ is true and $(r_i,f_i) = (P(\overline{x}_i),\#(\overline{x}_i))$ if $A(x_i)$ is false.

\begin{figure}[htb]
	\centering
	\begin{tikzpicture}[scale=.95, every node/.style={scale=0.95}, x=2.1cm, y=2.1cm, font=\footnotesize]
		\begin{scope}[shift={(0,0)}]
			\node[state] (a1) at (0,0) {\shortstack{\(a_{i}\)\\ \(y=0\)}};
			\node[state] (a2) at (2.25,0) {\shortstack{\(a_{i+1}\)\\ \(y=0\)}};

						\node[state, minimum width=1.075cm, minimum height=1.075cm] (b1) at (1.25,0) {};
						\node[anchor=north, yshift = 0pt,font=\large] at (b1.north) {\(a_i'\)};
						\node[anchor=south, yshift = 5pt,font=\large] at (b1.south) {\(x=0\)};

			\draw [tran,bend left] (a1) to node[above, align=center, text width=8em] { \(\big(-P(x_i)^{\#(x_i)},1\big)\)} (b1);
			
			\draw [tran,bend right] (a1) to node[below, align=center, text width=8em] { \(\big(-P(\overline{x}_i)^{\#(\overline{x}_i)},1\big)\)} (b1);
			
			\draw [tran] (b1) to node[above] {\parbox{3.3em}{\centering \((1,-1)\)}} (a2);
			
		\end{scope}
		\begin{scope}[shift={(3.25,0)}]
			
			\pgfmathsetmacro{\xtikz}{0.34}
			\pgfmathsetmacro{\ytikz}{1.5}
			
			\pgfmathsetmacro{\heighttikz}{1}
			\pgfmathsetmacro{\steptikz}{\heighttikz/3}
			
			\pgfmathsetmacro{\xfirstnode}{0}
			\pgfmathsetmacro{\xmidnodes}{1}
			\pgfmathsetmacro{\xlastnode}{3.2}

			\node[state] (s1) at (\xfirstnode,0) {\shortstack{\(b_{i}\)\\ \(y=0\)}};
			\node[state] (s2) at (\xlastnode,0) {\shortstack{\(b_{i+1}\)\\ \(y=0\)}};
			

\node[state, minimum width=1.1cm, minimum height=1.1cm] (u) at (\xmidnodes,0) {};
\node[anchor=north, yshift = -1.45pt,font=\large] at (u.north) {\(b_i'\)};
\node[anchor=south, yshift = 5pt,font=\large] at (u.south) {\(x=0\)};
			
			\draw[tran] (s1) -- node[above] {\parbox{3.3em}{\centering \((-1,1)\)}} (u);

			\foreach \i/\ival in {
				0/{$(P(\ell_i^1) \cdot P(\ell_i^2)\cdot P(\ell_i^3),-1)$},
				1/{$\big(P(\ell_i^1) \cdot P(\ell_i^2),-1\big)$},
				2/{$\big(P(\ell_i^1) \cdot P(\ell_i^3),-1\big)$},
				3/{$\big(P(\ell_i^2) \cdot P(\ell_i^3),-1\big)$},
				4/{$\big(P(\ell_i^1),-1\big)$},
				5/{$\big(P(\ell_i^2),-1\big)$},
				6/{$\big(P(\ell_i^3),-1\big)$}
			} {
				\draw[tran,rounded corners] (u) -- (\xmidnodes+\xtikz,\heighttikz-\steptikz*\i) -- node[above] {\parbox{12em}{\centering \ival}} (\xlastnode-\xtikz,\heighttikz-\steptikz*\i) -- (s2);
			}
			
		\end{scope}
	\end{tikzpicture}
	\caption{A section of the assignment gadget (left) and the clause gadget (right). Here we use \(x=0\) (resp. \(y=0\)) inside a state to denote that that state zero-tests the first (resp. second) counter.}
	\label{fig-2CVASS-zero-tests-gadgets}
\end{figure}

We now construct an acyclic unary 2-CVASS with integer updates and zero-tests $\mach$ which consists of two gadgets, the \emph{assignment gadget} $\mach_A$ which encodes an assignment, and a \emph{clause gadget} $\mach_C$ which checks if this assignment satisfies all the  clauses

\subparagraph*{The assignment gadget.}
 $\mach_A$ has states \(a_0,\dots,a_n,a_0',\dots,a_{n-1}' \) with corresponding zero-tests given by 
 $Z(a_0)=\dots=Z(a_n) = \{2\}$ and $Z(a_0')=\dots=Z(a_{n-1}') = \{1\}$. Furthermore, for each $0 \le i < n$, there are three transitions $(a_i, (-P(x_i)^{\#(x_i)},1), a_i'), (a_i, (-P(\overline{x}_i)^{\#(\overline{x}_i)},1), a_i')$
and $(a_i', (1,-1), a_{i+1})$. (See Fig.~\ref{fig-2CVASS-zero-tests-gadgets}).

Intuitively, the first two transitions divide the current 
value of the first counter by either $P(x_i)^{\#(x_i)}$ or  $P(\overline{x}_i)^{\#(\overline{x}_i)}$ while transferring this divided value to the second counter. The third transition then transfers this divided value back to the first counter. This effect of these transitions is formalized by the following proposition, whose proof follows by a case analysis of the possible fractions that can be used, under the influence of the zero-tests on the states.

\begin{restatable}{proposition}{assignmentgadget}\label{prop:assignment-gadget}
	For any $0 \le i < n$, there is a $\qnz$-run from $a_i(g_i,h_i)$ to $a_{i+1}(g_{i+1},h_{i+1})$ iff $h_i = h_{i+1} = 0$ and either $g_{i+1} = g_i/P(x_i)^{\#(x_i)}$ or $g_{i+1} = g_i/P(\overline{x}_i)^{\#(\overline{x}_i)}$.
\end{restatable}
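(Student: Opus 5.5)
The plan is to exploit the rigid shape of the gadget. Every path in $\mach_A$ from $a_i$ to $a_{i+1}$ has exactly two edges. The first edge is one of the two transitions $(a_i,(-N,1),a_i')$ with $N \in \{P(x_i)^{\#(x_i)}, P(\overline{x}_i)^{\#(\overline{x}_i)}\}$. The second edge is the unique transition $(a_i',(1,-1),a_{i+1})$. So any $\qnz$-run from $a_i(g_i,h_i)$ to $a_{i+1}(g_{i+1},h_{i+1})$ has the form
\[
a_i(g_i,h_i) \act{\alpha (-N,1)} a_i'(g_i-\alpha N,\ h_i+\alpha) \act{\beta (1,-1)} a_{i+1}(g_i-\alpha N+\beta,\ h_i+\alpha-\beta)
\]
for some $\alpha,\beta \in (0,1]$. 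I would first state this structural observation explicitly, since the gadget is acyclic and $a_i'$ has a single outgoing transition. The whole proof then reduces to solving the constraints that the zero-tests impose on $\alpha$ and $\beta$.

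For the left-to-right direction, I would read off the constraints one state at a time.
\begin{itemize}
\item Since $2 \in Z(a_i)$ and $2 \in Z(a_{i+1})$, we immediately get $h_i = 0$ and $h_{i+1}=0$.
\item Since $1 \in Z(a_i')$, we need $g_i - \alpha N = 0$, i.e.\ $\alpha = g_i/N$. The second counter at $a_i'$ then equals $\alpha$.
\item The zero-test on the second counter at $a_{i+1}$ forces $\alpha - \beta = 0$, i.e.\ $\beta=\alpha$.
\end{itemize}
Hence $g_{i+1} = 0 + \beta = \alpha = g_i/N$, which is one of the two claimed values.

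For the right-to-left direction, I take $h_i=h_{i+1}=0$ and $N$ the appropriate prime power with $g_{i+1}=g_i/N$. I then fire the corresponding first transition with $\alpha = g_i/N$ and the second with $\beta=\alpha$. It remains to check three things: all intermediate configurations are non-negative, all zero-tests hold, and the fractions lie in $(0,1]$.

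The only real subtlety, which I expect to be the main obstacle, is this last point, the admissibility of the fractions. We need $\alpha = g_i/N \in (0,1]$. That is, we need $0 < g_i \le N$. If $g_i = 0$, no step out of $a_i$ via a non-zero fraction can land on $x=0$ at $a_i'$ without going negative. So the equivalence must be read under the standing invariant of the reduction, namely $0 < g_i \le 1$. This invariant holds because the run starts at $a_0(1,0)$ and the first counter only gets divided by integers $\ge 1$, and under it $\alpha \le 1/N \le 1$ automatically. I would make this invariant explicit, either as a hypothesis of the proposition or by noting it is preserved inductively along the gadget. Non-negativity is then immediate: the intermediate values are $(0,\alpha)$ and $(\alpha,0)$.
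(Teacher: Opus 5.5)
Your proof is correct and follows the paper's own argument. Both write the run in the forced two-step form with fractions $\alpha,\beta$, and read off $h_i = 0$, $\alpha = g_i/N$ and $\beta = \alpha$ from the zero-tests at $a_i$, $a_i'$ and $a_{i+1}$.

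Your additional point about the right-to-left direction is a genuine edge case that the paper's statement and proof silently skip. That direction needs $\alpha = g_i/N \in (0,1]$, i.e.\ $0 < g_i \le N$; for example, $g_i = 0$ or $g_i > N$ makes the stated equivalence false. This is harmless in the application, since every run from $a_0(1,0)$ keeps $g_i \in (0,1]$.
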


The proof of Proposition~\ref{prop:assignment-gadget} can be found in Appendix~\ref{app-prop:assignment-gadget-proof}. Repeated application of this proposition implies that there is a $\qnz$-run from $a_0(1,0)$
to $a_n(g,h)$ for any $g,h$ iff $g = f(A)$ for some assignment $A$ and $h = 0$. Further,
since each $P(\ell) \in O(n \log n), \#(\ell) \le 4$, it follows that all of the numbers in the transitions can be encoded in unary in polynomial time.

%


\subparagraph*{The clause gadget.}
 The clause gadget $\mach_C$ checks whether the assignment $A$ generated by the assignment gadget $\mach_A$ satisfies the formula $\varphi$.  It does this, by guessing for each clause $C_i$, the subset of literals $\ell_i^1, \ell_i^2, \ell_i^3$ that are satisfied by $A$ and then executing a gadget that \emph{multiplies} the counter by the corresponding primes. More precisely $\mach_C$ has the states \( b_0,\dots,b_m, b_0',\dots,b_{m-1}'\) with corresponding zero-tests given by \(Z(b_0)=\dots=Z(b_m)=\{2 \}\)  and \(Z(b_0')=\dots=Z(b_{m-1}')=\{1\} \).  Furthermore, for each $0 \le i < m$, there is one transition  $(b_i,(-1,1),b_i')$ and seven other transitions given by $(b_i',(\prod_{\ell \in S_i} P(\ell), -1), b_{i+1})$ for
every non-empty subset $S_i\subseteq \{\ell_i^1,\ell_i^2,\ell_i^3\}$ (See Fig.~\ref{fig-2CVASS-zero-tests-gadgets}). 
We now have the following proposition, whose proof is similar to Proposition~\ref{prop:assignment-gadget}.

\begin{restatable}{proposition}{clausegadget}\label{prop:clause-gadget}
    For any $0 \le i < m$, there is a $\qnz$-run from $b_i(g_i,h_i)$ to $b_{i+1}(g_{i+1},h_{i+1})$ iff
    $h_i = h_{i+1} = 0, g_i \le 1$ 
    and $g_{i+1} = g_i \cdot p$ where $p$ is the product of some non-empty
    subset of $\{P(\ell_i^1),P(\ell_i^2),P(\ell_i^3)\}$. 
\end{restatable}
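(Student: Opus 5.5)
Since $\mach_C$ is acyclic and the only paths from $b_i$ to $b_{i+1}$ consist of exactly two transitions (first $(b_i,(-1,1),b_i')$, then one of the seven transitions $(b_i',(p_S,-1),b_{i+1})$ with $p_S := \prod_{\ell \in S} P(\ell)$ for a non-empty $S \subseteq \{\ell_i^1,\ell_i^2,\ell_i^3\}$), any $\qnz$-run from $b_i(g_i,h_i)$ to $b_{i+1}(g_{i+1},h_{i+1})$ has the form
\[
b_i(g_i,h_i) \act{\alpha (-1,1)} b_i'(g_i-\alpha,\, h_i+\alpha) \act{\beta (p_S,-1)} b_{i+1}(g_i-\alpha+\beta p_S,\, h_i+\alpha-\beta)
\]
for some $\alpha,\beta \in (0,1]$. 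The plan is to pin down $\alpha$ and $\beta$ using the zero-tests, exactly as in Proposition~\ref{prop:assignment-gadget}, and then read off both directions.

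For the forward direction: the zero-test $Z(b_i)=\{2\}$ gives $h_i = 0$. The zero-test $Z(b_i')=\{1\}$ forces $g_i - \alpha = 0$, i.e.\ $\alpha = g_i$; since $\alpha \in (0,1]$, this yields $g_i \le 1$ (and in fact $g_i > 0$). The intermediate configuration is therefore $b_i'(0,g_i)$. Next, the zero-test $Z(b_{i+1}) = \{2\}$ forces $g_i - \beta = 0$, so $\beta = g_i$, whence $h_{i+1}=0$ and $g_{i+1} = 0 + g_i \cdot p_S$. Non-negativity of all intermediate values is automatic here. This gives exactly the claimed conditions, with $p = p_S$.

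For the backward direction: given $h_i = h_{i+1} = 0$, $g_i \le 1$ and $g_{i+1} = g_i \cdot p_S$ for some non-empty $S$, fire the first transition with fraction $\alpha = g_i$ and the transition for $S$ with fraction $\beta = g_i$. Then check that all zero-tests and the non-negativity constraints hold along the run.

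The only delicate point is the boundary case $g_i = 0$. There, no fraction $\alpha \in (0,1]$ satisfies the zero-test at $b_i'$, so strictly speaking one needs $0 < g_i \le 1$. I would handle this by noting that in the intended use the value $g_i$ is always positive: it starts as $f(A) > 0$ and only gets multiplied by primes. Accordingly, I would either read the condition $g_i \le 1$ as $0 < g_i \le 1$ or add this remark. Apart from this, the argument is a routine two-step case analysis, and I expect no real obstacle.
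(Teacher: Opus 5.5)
Your proof is correct and follows the paper's argument: you split the run into the two steps through $b_i'$ and use the zero-tests at $b_i$, $b_i'$ and $b_{i+1}$ to force $h_i=0$ and $\alpha=\beta=g_i$, then read off both directions. Your boundary remark is also right. For $g_i=0$ the right-hand side holds but no fraction $\alpha\in(0,1]$ exists, so the statement, like the paper's proof (which takes $\alpha=g_i$ without noting that this needs $g_i>0$), really requires $0<g_i\le 1$. This is harmless for the later use, where $g_i$ is always a positive multiple of $f(A)$.
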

The proof of Proposition~\ref{prop:clause-gadget} can be found in Appendix~\ref{app-prop:clause-gadget-proof}. Hence, repeated applications of this proposition means that any run from $b_0$ to $b_m$ essentially multiplies
the first counter by a subset of the primes $\{P(\ell_i^1),P(\ell_i^2),P(\ell_i^3)\}$ for each $i$. This then implies that,

%

\begin{restatable}{lemma}{clauseapplication}\label{lemma-clauseapplication}
    There is a $\qnz$-run from $b_0(f(A),0)$ to $b_m(1,0)$ iff $A$ satisfies $\varphi$.
\end{restatable}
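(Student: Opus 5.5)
The plan is to chain Proposition~\ref{prop:clause-gadget} over $i=0,\dots,m-1$ and then use unique prime factorisation. Any $\qnz$-run from $b_0$ to $b_m$ in the acyclic gadget passes through $b_1,\dots,b_{m-1}$ in order. Let $b_i(g_i,h_i)$ be the configuration at which it visits $b_i$. Applying Proposition~\ref{prop:clause-gadget} to each segment $b_i \to b_{i+1}$ gives $h_i=0$, $g_i\le 1$ and $g_{i+1}=g_i\cdot p_i$, where $p_i=\prod_{\ell\in S_i}P(\ell)$ for some non-empty $S_i\subseteq\{\ell_i^1,\ell_i^2,\ell_i^3\}$. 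Conversely, given such a sequence with all intermediate $g_i\le 1$, the runs supplied by the proposition concatenate into a run from $b_0(g_0,0)$ to $b_m(g_m,0)$. Hence a run from $b_0(f(A),0)$ to $b_m(1,0)$ exists iff there are non-empty sets $S_0,\dots,S_{m-1}$ satisfying
\[
f(A)\cdot\prod_{i<m}\prod_{\ell\in S_i}P(\ell)=1
\]
and
\[
f(A)\cdot\prod_{i<j}\prod_{\ell\in S_i}P(\ell)\le 1 \quad\text{for all } j<m .
\]

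\textbf{($\Leftarrow$).} Suppose $A$ satisfies $\varphi$. Take $S_i$ to be the set of occurrences of literals in $C_i$ that are true under $A$; it is non-empty because $C_i$ is satisfied. Each true literal $\ell$ then occurs exactly $\#(\ell)$ times across all clauses, so $\prod_i\prod_{\ell\in S_i}P(\ell)=\prod_{\ell\text{ true}}P(\ell)^{\#(\ell)}=1/f(A)$. The partial products divide this total, so every intermediate value $f(A)\cdot(\text{partial product})$ is the reciprocal of a positive integer and hence $\le 1$. (If a literal is repeated within a clause, I would treat the $S_i$ as sets of occurrence positions; the counting works the same way.)

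\textbf{($\Rightarrow$).} Given such a run, the final equation gives $\prod_i\prod_{\ell\in S_i}P(\ell)=\prod_{j}r_j^{f_j}$, where $r_j$ ranges over the primes of the literals that are true under $A$. By unique prime factorisation, and since the primes $P(\ell)$ are distinct, every prime appearing on the left is $P(\ell)$ for some true literal $\ell$. So every $\ell\in S_i$ is true under $A$. Since each $S_i$ is non-empty and $S_i\subseteq C_i$, every clause contains a true literal, and $A$ satisfies $\varphi$.

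The main subtlety is in the $\Leftarrow$ direction: we must check the side condition $g_i\le 1$ of Proposition~\ref{prop:clause-gadget} and get the exponent counting $\#(\ell)$ exactly right. Both are handled by the divisibility argument above. The $\Rightarrow$ direction needs only the inclusion of prime supports, so it does not depend on the exact counts.
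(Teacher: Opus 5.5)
Your proposal is correct and follows the paper's proof: chain Proposition~\ref{prop:clause-gadget}, pick the true literals of each clause for $(\Leftarrow)$ so that the total product is $1/f(A)$, and use unique prime factorisation for $(\Rightarrow)$ to conclude that every chosen literal is true. You also check the side condition $g_i \le 1$ explicitly, by noting that each partial product divides $1/f(A)$; the paper's proof leaves this step implicit.
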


The proof of Lemma~\ref{lemma-clauseapplication} can be found in Appendix~\ref{app-lemma-clauseapplication-proof}. However, since our goal is to show hardness for coverability, we have to modify the condition in the above lemma so that it becomes a coverability condition. To that end, we add a state \(b_m' \) with zero test $Z(b_m') = \{1\}$ and a transition $(b_m,(-1,1),b_m')$ to $\mach_C$. It immediately follows that

\begin{restatable}{lemma}{coverability}\label{lemma-coverability}
    There is a $\qnz$-run from $b_0(f(A),0)$ covering $b_m'(0,1)$ iff $A$ satisfies $\varphi$. 
\end{restatable}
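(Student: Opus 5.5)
We reduce Lemma~\ref{lemma-coverability} to Lemma~\ref{lemma-clauseapplication} by analysing the single added step from $b_m$ to $b_m'$. The key observation is that the zero-test $Z(b_m')=\{1\}$, together with the fraction bound $\alpha\le 1$, forces the last step to be a full transfer of a counter value equal to $1$.

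For the ($\Leftarrow$) direction, assume $A$ satisfies $\varphi$. By Lemma~\ref{lemma-clauseapplication} there is a $\qnz$-run from $b_0(f(A),0)$ to $b_m(1,0)$. We extend it by firing $(b_m,(-1,1),b_m')$ with fraction $\alpha=1$. This reaches $b_m'(0,1)$. The first counter is $0$ there, so the zero-test $Z(b_m')=\{1\}$ holds, and both counters stay non-negative. Hence the configuration $b_m'(0,1)$ is covered.

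For the ($\Rightarrow$) direction, take a $\qnz$-run from $b_0(f(A),0)$ to some $b_m'(g',h')$ with $(g',h')\ge(0,1)$.
\begin{itemize}
\item The zero-test at $b_m'$ gives $g'=0$.
\item The only transition entering $b_m'$ is $(b_m,(-1,1),b_m')$, and $b_m'$ has no outgoing transitions. So the run has the form $b_0(f(A),0)\act{*} b_m(g,h)\act{\alpha(-1,1)} b_m'(0,h')$.
\item The zero-test $Z(b_m)=\{2\}$ gives $h=0$.
\item Therefore $g=\alpha$ and $h'=\alpha$. Coverage requires $h'\ge 1$, and $\alpha\le 1$, so $\alpha=1$ and $g=1$.
\end{itemize}
The prefix is then a $\qnz$-run from $b_0(f(A),0)$ to $b_m(1,0)$ that uses only states of $\mach_C$, since the gadget is acyclic and entered at $b_0$. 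Lemma~\ref{lemma-clauseapplication} now yields that $A$ satisfies $\varphi$.

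The one step needing care is confirming that the prefix is a run of exactly the form Lemma~\ref{lemma-clauseapplication} speaks about. This follows from the acyclic structure of $\mach_C$: every run from $b_0$ to $b_m$ passes through $b_0, b_0', b_1, \dots, b_m$ in order. The remaining steps are direct consequences of the zero-tests and of $\alpha\le 1$.
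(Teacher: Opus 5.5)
Your proof is correct and is the same argument as the paper's. For ($\Leftarrow$) you extend the run given by Lemma~\ref{lemma-clauseapplication} by firing $(b_m,(-1,1),b_m')$ with fraction $1$; for ($\Rightarrow$) you use $Z(b_m)=\{2\}$, $Z(b_m')=\{1\}$ and $\alpha\le 1$ to force $\alpha=g=1$, then apply Lemma~\ref{lemma-clauseapplication} to the prefix ending in $b_m(1,0)$, and your remark that this prefix stays inside the acyclic clause gadget just makes explicit what the paper leaves implicit.
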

The proof of Lemma~\ref{lemma-coverability} can be found in the Appendix~\ref{app-lemma-coverability-proof}. It is now clear that if we let $\mach$ be the acyclic 2-CVASS with zero-tests obtained by combining $\mach_A$ and $\mach_C$ by fusing together $a_n$ and $b_0$, we get

%

\begin{theorem}\label{thm:satisfiable-with-zero-tests}
    There is a $\qnz$-run from $a_0(1,0)$ covering $b_m'(0,1)$ iff $\varphi$ is satisfiable.
\end{theorem}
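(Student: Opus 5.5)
The proof composes the two gadgets at the fused state $a_n = b_0$, using Proposition~\ref{prop:assignment-gadget} for the first half and Lemma~\ref{lemma-coverability} for the second.

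The key structural observation is that every run from $a_0$ to $b_m'$ must pass through $a_n = b_0$. The underlying graph of $\mach$ is a single chain of gadgets: $a_0 \to a_0' \to a_1 \to \dots \to a_n = b_0 \to b_0' \to \dots \to b_m \to b_m'$. Hence any $\qnz$-run from $a_0(1,0)$ to some $b_m'(g',h')$ factors uniquely as
\[
a_0(1,0) \act{*} a_n(g,h) = b_0(g,h) \act{*} b_m'(g',h').
\]
The first part is a run of $\mach_A$. The second part is a run of $\mach_C$ extended by the final transition to $b_m'$.

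For the direction ($\Leftarrow$), suppose $\varphi$ is satisfied by an assignment $A$. Applying Proposition~\ref{prop:assignment-gadget} $n$ times, choosing at step $i$ the transition matching $A(x_i)$, yields a run from $a_0(1,0)$ to $a_n(f(A),0)$. Lemma~\ref{lemma-coverability} then provides a run from $b_0(f(A),0)$ that covers $b_m'(0,1)$. Concatenating the two runs at the fused state gives the required run.

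For the direction ($\Rightarrow$), take a run covering $b_m'(0,1)$ and split it at the visit to $a_n = b_0$ as above. The zero-test $Z(a_n) = \{2\}$ forces $h = 0$. By induction on $i$ using Proposition~\ref{prop:assignment-gadget}, the value on reaching $a_i$ is $1/\prod_{j<i} r_j^{f_j}$ for a choice of literal at each index. Consequently $g = f(A)$ for the assignment $A$ read off from these choices. The remaining suffix is a $\qnz$-run from $b_0(f(A),0)$ covering $b_m'(0,1)$, so Lemma~\ref{lemma-coverability} gives that $A$ satisfies $\varphi$. In particular, $\varphi$ is satisfiable.

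The only delicate point is the uniqueness of the factorization. Because the graph is acyclic and every path from $a_0$ to $b_m'$ goes through $a_n$ exactly once, no interleaving of the two gadgets can occur, and the zero-tests at the boundary states pin down the configuration there. Beyond this, the argument is a routine composition of the previously stated propositions and lemma, so I expect no real obstacle.
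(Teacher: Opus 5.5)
Your proof is correct and follows the paper's own argument. The paper also fuses $a_n$ with $b_0$, uses repeated application of Proposition~\ref{prop:assignment-gadget} to show that the configurations reachable at $a_n$ from $a_0(1,0)$ are exactly those of the form $a_n(f(A),0)$, and then invokes Lemma~\ref{lemma-coverability}; your remark that acyclicity forces every run through the fused state exactly once just makes explicit what the paper calls ``clear''.
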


This completes Stage 1 of the reduction. We now move on to Stage 2.

\subsection*{Stage 2: Removing Zero-Tests by One Extra Counter}

We  now show how to get rid of the zero-tests in $\mach$ by adding one extra counter. To this end,
we adapt the idea of the \emph{controlling counter} technique  from VASS~\cite{CzerwinskiO21}. The idea is as follows: Consider the CVASS $\overline{\mach}$ that we get from $\mach$ by simply removing its zero-tests.
As with $\mach$, any $\qnz$-run in $\overline{\mach}$ from $a_0(1,0)$ covering $b_m'(0,1)$ must be of the form 
\begin{multline}\label{eq:run-without-zero-tests}
    a_0(1,0) := a_0(g_0,h_0) \actqnz{\alpha_0 v_0} a_0'(g_0',h_0') \actqnz{\alpha_0' v_0'} a_1(g_1,h_1)  \cdots \actqnz{\alpha_{n-1}' v_{n-1}'}  a_n(g_n,h_n) := b_0(k_0,r_0)  \\  \actqnz{\beta_0 w_0} b_0'(k_0',r_0') \actqnz{\beta_0' w_0'} b_1(k_1,r_1) \dots \actqnz{\beta_m w_m} b_m(k_m,r_m) \actqnz{\beta_m' w_m'} b_m' (k_m',r_m') 
\end{multline}

with $k_m' \ge 0, r_m' \ge 1$.  This $\qnz$-run in $\overline{\mach}$ is also a $\qnz$-run in $\mach$ iff each $h_i, r_i = 0$ and each $g_i', k_i' = 0$. Let us call such $\qnz$-runs as \emph{good $\qnz$-runs} of $\overline{\mach}$. Hence a good $\qnz$-run of $\overline{\mach}$ exists iff there is a $\qnz$-run
of $\mach$ from $a_0(1,0)$ covering $b_m'(0,1)$.
Since we consider the \(\qnz \)-semantics it holds that $h_i, r_i, g_i', k_i' \geq  0$ for all \(i\). Hence all of them are 0 iff their combined sum is 0, i.e., a $\qnz$-run of the form Eq.~\ref{eq:run-without-zero-tests} is a good run iff 
\begin{equation}\label{eq:control}
 \sum_{i=0}^{n-1} g_i'  + \sum_{i=1}^n h_i + \sum_{i=0}^m k_i' + \sum_{i=1}^m r_i = 0   
\end{equation}

Now, we claim that the above equation can be described differently by expressing the LHS of Equation \ref{eq:control} in a different way. For instance, we know that each $g_i'$ is essentially the value of the first counter in the sum $(1,0) + \sum_{j = 0}^i \alpha_i v_i + \sum_{j=0}^{i-1} \alpha_i' v_i'$. Similarly, $h_i, k_i', r_i$ can also be written as such sums. Combining all of these sums into one, we can rewrite the above identity as
the one given by the following proposition.
\begin{restatable}{proposition}{cc}\label{prop:cc}
    A $\qnz$-run of the form Eq.~\ref{eq:run-without-zero-tests} is a good $\qnz$-run iff
\begin{multline}\label{eq-tester}
    n + (m+1) + \left(\sum_{i=0}^{n-1} \alpha_i \big((n-i+m) (v_i[1] + v_i[2]) + v_i[1]\big) \right) + 
    \left(\sum_{i=0}^{n-1} \alpha_i' (n-i+m) (v_i'[1] + v_i'[2]) \right) + \\  
    \left(\sum_{i=0}^{m} \beta_i \big((m-i) (w_i[1] + w_i[2]) + w_i[1]\big) \right) +
    \left(\sum_{i=0}^{m-1} \beta_i' (m-i) (w_i'[1] + w_i'[2]) \right) = 0
\end{multline}
\end{restatable}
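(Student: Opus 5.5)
The plan is to show that the left-hand side of Equation~\ref{eq-tester} is the left-hand side of Equation~\ref{eq:control}, rewritten as a linear expression in the fractions and transition vectors. The proposition then follows from the observation already made before it. In a $\qnz$-run all of $h_i, r_i, g_i', k_i'$ are non-negative, so the run is good iff the sum in Equation~\ref{eq:control} is $0$.

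First, I express every intermediate configuration of the run in Eq.~\ref{eq:run-without-zero-tests} as $(1,0)$ plus the sum of the effects of all earlier steps. The steps, in order, are $\alpha_0 v_0, \alpha_0' v_0', \dots, \alpha_{n-1}' v_{n-1}', \beta_0 w_0, \beta_0' w_0', \dots, \beta_m w_m, \beta_m' w_m'$. This gives
\[ g_j' = 1 + \sum_{i \le j} \alpha_i v_i[1] + \sum_{i<j} \alpha_i' v_i'[1], \qquad h_j = \sum_{i<j} \left(\alpha_i v_i[2] + \alpha_i' v_i'[2]\right). \]
The same pattern gives $k_j'$ and $r_j$, except that those prefixes additionally contain every step of the assignment gadget. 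One must keep in mind that $b_0 = a_n$, so $r_0 = h_n$ is already counted through $h_n$ and the sum over $r$ starts at $i=1$.

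Next, I substitute these expressions into Equation~\ref{eq:control} and collect the coefficient of each term.
\begin{itemize}
\item The constant: the initial value $1$ of the first counter appears in each $g_j'$ ($n$ terms) and in each $k_j'$ ($m+1$ terms). This gives $n+(m+1)$.
\item The step $\alpha_i v_i$: its first coordinate appears in $g_j'$ for $j \ge i$ ($n-i$ terms) and in all $m+1$ values $k_j'$. Its second coordinate appears in $h_j$ for $j \ge i+1$ ($n-i$ terms) and in all $m$ values $r_j$. So the coefficients are $(n-i+m)+1$ and $n-i+m$, i.e.\ $(n-i+m)(v_i[1]+v_i[2]) + v_i[1]$.
\item The step $\alpha_i' v_i'$: the counts are $(n-i-1)+(m+1)$ for the first coordinate and $(n-i)+m$ for the second, both equal to $n-i+m$.
\item The step $\beta_i w_i$: the first coordinate appears in $k_j'$ for $j \ge i$, giving $m-i+1$. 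The second appears in $r_j$ for $j \ge i+1$, giving $m-i$.
\item The step $\beta_i' w_i'$ with $i<m$: both coordinates get coefficient $m-i$.
\item The final step $\beta_m' w_m'$ contributes nothing, since no tracked quantity comes after it.
\end{itemize}
These coefficients match Equation~\ref{eq-tester} exactly.

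The only real obstacle is this bookkeeping. I expect the off-by-one boundaries to need the most care: whether the sum over $r$ starts at index $0$ or $1$, that $k'$ runs up to $m$ while $g'$ runs only to $n-1$, and that the last step is excluded. I would verify each coefficient by the counting above rather than by manipulating nested sums, and then conclude the proposition by the non-negativity argument.
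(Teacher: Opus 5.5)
Your proposal is correct and follows essentially the same route as the paper: both write each of $g_i', h_i, k_i', r_i$ as a prefix sum of step effects, substitute into Equation~\ref{eq:control} and collect coefficients, and your per-step appearance counts (which all check out) are just a tidier way of doing the paper's ``bunching together common terms''. One cosmetic point: there is no step $\beta_m' w_m'$ (the displayed run in Eq.~\ref{eq:run-without-zero-tests} mislabels its last two steps); the final step $b_m \to b_m'$ is $\beta_m w_m$, which your $\beta_i w_i$ bullet with $i=m$ already handles correctly, so your last bullet is vacuous but harmless.
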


The proof of Proposition~\ref{prop:cc} can be found in Appendix~\ref{app-prop:cc-proof}.
We call the number on the LHS of Eq.~\ref{eq-tester} as the \emph{tester} of the run. 
(We stress that this tester is the same as the value $\sum_{i=0}^{n-1} g_i'  + \sum_{i=1}^n h_i + \sum_{i=0}^m k_i' + \sum_{i=1}^m r_i$ and is hence always non-negative). By \Cref{prop:cc} it follows that
\begin{lemma}\label{lem:satisfiable-with-tester}
    There is a $\qnz$-run from $a_0(1,0)$ covering $b_m'(1,0)$ in $\mach$ iff there is a $\qnz$-run from
    $a_0(1,0)$ covering $b_m'(1,0)$ in $\overline{\mach}$ whose tester equals 0.
\end{lemma}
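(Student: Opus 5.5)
The lemma follows from Proposition~\ref{prop:cc} and the observation, made just before the statement of Eq.~\ref{eq:run-without-zero-tests}, that good $\qnz$-runs of $\overline{\mach}$ are exactly the $\qnz$-runs of $\mach$. (The covering target in the lemma is written $b_m'(1,0)$, whereas Theorem~\ref{thm:satisfiable-with-zero-tests} uses $b_m'(0,1)$. The argument below is the same for either, because it only uses the fact that both machines have the same state graph and the same coverability target.)

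For the left-to-right direction, take a $\qnz$-run of $\mach$ from $a_0(1,0)$ covering the target. Forgetting the zero-tests, it is a $\qnz$-run of $\overline{\mach}$ with the same transitions and fractions. Because $\mach$ is acyclic with a fixed layered structure $a_0,a_0',a_1,\dots,b_m,b_m'$, every run between these states has the shape of Eq.~\ref{eq:run-without-zero-tests}. Since the run satisfies all zero-tests, every $h_i,r_i,g_i',k_i'$ equals $0$. So it is a good run, and by Proposition~\ref{prop:cc} its tester is $0$.

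For the right-to-left direction, take a $\qnz$-run of $\overline{\mach}$ covering the target whose tester is $0$. Again it has the form of Eq.~\ref{eq:run-without-zero-tests}, so the tester is defined. Proposition~\ref{prop:cc} then says the run is good, i.e.\ all the zero-tested quantities vanish. Equivalently: the tester equals the sum in Eq.~\ref{eq:control}, and each summand is non-negative under the $\qnz$-semantics, so each must be $0$. Every configuration therefore satisfies the zero-test constraints $Z$ of its state. One must also check the endpoints $a_0$ and $b_m'$, since Eq.~\ref{eq:control} does not include them: $a_0(1,0)$ has second counter $0$, and $b_m'$ must have first counter $0$, which is forced by the state $b_m$ having second counter $0$ together with the transfer shape of $(b_m,(-1,1),b_m')$. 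If that last check is not already absorbed in Proposition~\ref{prop:cc}, one adds the summand $k_m'$ accordingly. Thus the run is a $\qnz$-run of $\mach$ covering the same target.

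The only delicate point is bookkeeping: making sure the quantities summed in Eq.~\ref{eq:control} cover exactly the zero-tested counters of $\mach$ (including $b_m'$) and that the runs considered always traverse the full acyclic chain. Everything else is immediate from Proposition~\ref{prop:cc}.
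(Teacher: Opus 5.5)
Your proof is correct and follows the paper's argument: $\qnz$-runs of $\mach$ from $a_0(1,0)$ to $b_m'$ are exactly the good $\qnz$-runs of $\overline{\mach}$, and Proposition~\ref{prop:cc} equates goodness with tester $0$. Both machines have the same acyclic chain and the same final configuration, so the $b_m'(1,0)$ versus $b_m'(0,1)$ typo you spotted is immaterial.

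One correction: Eq.~\ref{eq:control} already contains the zero-tested counter at $b_m'$ as the summand $k_m'$, since that sum runs up to $i=m$. This summand is essential, because your alternative justification fails. With $r_m=0$, covering forces the fraction on $(b_m,(-1,1),b_m')$ to be $1$, which leaves first counter $k_m-1$. That is zero only if $k_m=1$, and the remaining zero-tests do not enforce this.
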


In order to remove the tester condition, we now introduce another counter to
keep track of the current value of the tester encountered so far. More precisely, for any prefix $P$ of a run of the form Eq.~\ref{eq:run-without-zero-tests}, we assign a number $C(P)$ as follows: For the empty prefix, $C(P)$ is simply $n+m+1$.
Now, for any prefix $P = P' s$, $C(P) = C(P') + C(s)$ where
\begin{itemize}
    \item $C(s) = \alpha_i \big((n-i+m) (v_i[1] + v_i[2]) + v_i[1]\big)$ if $s \equiv a_i(g_i,h_i) \act{\alpha_i v_i} a_i'(g_i',h_i')$.
    \item $C(s) = \alpha_i' (n-i+m) (v_i'[1] + v_i'[2])$ if $s \equiv a_i'(g_i',h_i') \act{\alpha_i' v_i'} a_{i+1}(g_{i+1},h_{i+1})$.
    \item $C(s) = \beta_i \big((m-i) (w_i[1] + w_i[2]) + w_i[1]\big)$ if $s \equiv b_i(k_i,r_i) \act{\beta_i w_i} b_i'(k_i',r_i')$.
    \item $C(s) = \beta_i' (m-i) (w_i'[1] + w_i'[2])$ if $s \equiv b_i'(k_i',r_i') \act{\beta_i' w_i'} b_{i+1}(k_{i+1},r_{i+1})$.
\end{itemize}

By construction, it is easily seen that $C(R)$ for a $\qnz$-run $R$ is precisely its tester. Hence, in order to compute $C(R)$, we add a third counter to $\overline{\mach}$ with initial value $n+m+1$, and at each step $s$, we simply add $C(s)$ to the third counter. We then finally require this third counter to have value 0. Whilst correct, this only gives a reachability condition on the third counter. Furthermore, while adding $C(s)$ at each step, the third counter might go below 0.

We now show how to avoid both these problems in one shot. To this end, let $K$ be the maximum \emph{absolute value} of all of the numbers in the transitions of $\overline{\mach}$ and let $L$ be 
$n+(m+1) + (n \cdot (n+m) \cdot 3K) + (n \cdot (n+m) \cdot 2K) + ((m+1) \cdot m \cdot 3K) + (m \cdot m \cdot 2K)$. Note that $L$ can be written down in unary in polynomial time.
By construction of $L$, it immediately follows that $C(P)$ for any prefix  has to lie in the range $[-L,L]$. Hence, if we now begin the third counter with the initial value $L - (n+m+1)$ and \emph{subtract} $C(s)$ at each step $s$, we are guaranteed that it will never go below zero. Furthermore, by construction, at the end of a run $R$,
the value of the third counter will be $L - C(R)$. As mentioned before, $C(R)$ is always non-negative and so the only way the third counter can reach a value $\ge L$ at the end is if $C(R) = 0$. This gives a coverability condition on the third counter that is equivalent to checking if the tester of the run is zero. If we call this new machine (with three counters) as $\mach'$ it then follows that

\begin{restatable}{lemma}{controllingcounter}\label{lem:satisfiable-with-controlling-counter}
    There is a $\qnz$-run from $a_0(1,0)$ covering $b_m'(0,1)$ in $\overline{\mach}$ whose tester equals 0 iff there is a $\qnz$-run from $a_0\big(1,0,L-(n+m+1)\big)$ covering $b_m'(0,1,L)$ in $\mach'$.
\end{restatable}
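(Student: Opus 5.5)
The plan is to build the obvious correspondence between runs of $\overline{\mach}$ and runs of $\mach'$. By construction, each transition $(p,(w_1,w_2),q)$ of $\overline{\mach}$ becomes in $\mach'$ the transition $(p,(w_1,w_2,-c_s),q)$, where $c_s$ is the coefficient of $\alpha$ (resp. $\alpha'$, $\beta$, $\beta'$) in the corresponding definition of $C(s)$. Note that $c_s$ depends only on the transition and its position in the acyclic structure, which the state determines. Hence firing it with fraction $\alpha$ subtracts exactly $C(s)$ from the third counter. The first two counters of $\mach'$ therefore evolve exactly as in $\overline{\mach}$. Along any run prefix $P$ from $a_0(1,0,L-(n+m+1))$, the third counter equals $L-(n+m+1)-(C(P)-(n+m+1)) = L - C(P)$.

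For the forward direction, take a $\qnz$-run $R$ of $\overline{\mach}$ from $a_0(1,0)$ covering $b_m'(0,1)$ with tester $0$. Fire the same transitions with the same fractions in $\mach'$. The first two coordinates remain non-negative. For the third, I bound $|C(P)|\le L$ for every prefix $P$. Each summand has the form fraction $\times$ (coefficient at most $n+m$) $\times$ (sum of at most two transition entries of absolute value at most $K$, plus possibly one more). This gives at most $(n+m)\cdot 3K$ per step in the $\alpha_i,\beta_i$ terms and $(n+m)\cdot 2K$ in the primed terms. Summing over the at most $n$, $n$, $m+1$, $m$ steps, plus the initial $n+m+1$, stays below $L$ as defined. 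So $L-C(P)\ge 0$, and the run is a $\qnz$-run. At the end the third counter is $L-C(R)=L-0=L$, so $b_m'(0,1,L)$ is covered.

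For the backward direction, take a $\qnz$-run of $\mach'$ covering $b_m'(0,1,L)$ and project away the third counter. This gives a $\qnz$-run $R$ of $\overline{\mach}$ from $a_0(1,0)$ covering $b_m'(0,1)$. The final third-counter value is $L-C(R)\ge L$, so $C(R)\le 0$. By Proposition~\ref{prop:cc} and the remark after it, $C(R)$ is the tester, which equals $\sum g_i'+\sum h_i+\sum k_i'+\sum r_i$. This is non-negative because $R$ is a $\qnz$-run. Hence the tester is $0$.

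The main obstacle is the bookkeeping. First, I must confirm that every $\qnz$-run of $\overline{\mach}$ from $a_0$ reaching $b_m'$ has the shape of Eq.~\ref{eq:run-without-zero-tests}. This holds since the control graph is a fixed chain of layered states, so $C(s)$ is well defined per step. Second, I must carefully verify the per-step bound, which holds because all fractions lie in $(0,1]$. Both points are routine once the layering is noted.
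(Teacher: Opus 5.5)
Your proposal is correct and follows essentially the same argument as the paper: the third counter holds $L-C(P)$ along every prefix, the choice of $L$ keeps it non-negative, and the non-negativity of the tester from Proposition~\ref{prop:cc} turns the final condition $L-C(R)\ge L$ into tester $=0$. One small bookkeeping fix: for the clause-gadget steps the coefficient is $m-i\le m$, not merely $\le n+m$, and you need this sharper bound (giving at most $3mK$ and $2mK$ per step) to stay within the paper's specific $L$; your cruder per-step bound would overshoot it.
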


The proof of Lemma~\ref{lem:satisfiable-with-controlling-counter} (along with the formal construction of $\mach'$) can be found in Appendix~\ref{app-lem:satisfiable-with-controlling-counter-proof}.
By combining Theorem~\ref{thm:satisfiable-with-zero-tests} and Lemmas~\ref{lem:satisfiable-with-tester}~and~\ref{lem:satisfiable-with-controlling-counter} we get the following theorem.
\begin{theorem}
    There is a $\qnz$-run from $a_0\big(1,0,L-(n+m+1)\big)$ covering $b_m'(1,0,L)$ in $\mach'$ iff
    $\varphi$ is satisfiable.
\end{theorem}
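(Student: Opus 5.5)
This theorem is a direct composition of the three results already established in this stage, so the plan is to chain the equivalences in order. First, Theorem~\ref{thm:satisfiable-with-zero-tests} gives that $\varphi$ is satisfiable iff there is a $\qnz$-run of the zero-test machine $\mach$ from $a_0(1,0)$ covering $b_m'(0,1)$. Second, Lemma~\ref{lem:satisfiable-with-tester} (built on Proposition~\ref{prop:cc}) turns this into the existence of a $\qnz$-run of $\overline{\mach}$ from $a_0(1,0)$ covering the target whose tester equals $0$. Third, Lemma~\ref{lem:satisfiable-with-controlling-counter} turns that into the existence of a $\qnz$-run of $\mach'$ from $a_0\big(1,0,L-(n+m+1)\big)$ covering $b_m'(0,1,L)$. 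Stringing the three biconditionals together yields the claim.

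The one point I would be careful about is notational consistency of the target configuration. The statements of Lemma~\ref{lem:satisfiable-with-tester} and of the present theorem write the target as $b_m'(1,0)$ (resp.\ $b_m'(1,0,L)$), whereas the gadget analysis (Lemma~\ref{lemma-coverability} and Theorem~\ref{thm:satisfiable-with-zero-tests}) and Lemma~\ref{lem:satisfiable-with-controlling-counter} use $b_m'(0,1)$ (resp.\ $b_m'(0,1,L)$). Since $b_m'$ zero-tests the first counter and is reached by the transfer $(-1,1)$ from $b_m(1,0)$, the intended target is the latter. I would read $b_m'(1,0,L)$ as $b_m'(0,1,L)$, so that all three cited results are about literally the same configurations and the chaining is sound. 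I would note this identification explicitly rather than re-prove anything.

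No step is really an obstacle here, since all the work (the gadget propositions, the tester identity, and the bound $C(P)\in[-L,L]$ ensuring the third counter stays non-negative and can only end at value at least $L$ when the tester is $0$) has been delegated to the cited lemmas. The only things to double-check are routine. First, $\mach'$ is acyclic, because it has the same control graph as $\mach$. Second, $\mach'$ has integer updates, since each $C(s)$ at full fraction is an integer combination of integer vectors, and a fraction $\alpha$ just scales it. Third, $L$ and all update entries are polynomial in unary, because each $P(\ell)\in O(n\log n)$ and $\#(\ell)\le 4$. These facts make the reduction logspace-computable and deliver Theorem~\ref{thm:3-CVASS}.
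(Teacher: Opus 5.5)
Your proposal is correct and matches the paper's argument: the theorem is obtained by chaining Theorem~\ref{thm:satisfiable-with-zero-tests} with Lemmas~\ref{lem:satisfiable-with-tester} and~\ref{lem:satisfiable-with-controlling-counter}. Reading the target as $b_m'(0,1,L)$ is the right fix, and it is needed: a tester of $0$ forces the first counter to be $0$ at $b_m'$, so $b_m'(1,0,L)$ could never be covered.
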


This completes the proof of Theorem~\ref{thm:3-CVASS}, which is the main result of this section.

\section{Conclusion}

In this paper, we studied the complexity of reachability and coverability for fixed-dimensional CVASS over both unary/binary encoding as well as over both $\qn$-/$\qnz$- semantics. As our main result, we have completely characterized the complexity of all of these problems for all dimensions. Furthermore, we have also given an almost-complete classification in the case when the update vectors are restricted to be over the integers. One possible direction for future work is to study these problems for unary 2-CVASS with integer updates over $\qnz$ semantics as well as for unary $d$-CVASS with integer updates over $\qn$-semantics for any $d \ge 2$.



\bibliography{refs}

@article{siglog/Schmitz16,
  author       = {Sylvain Schmitz},
  title        = {The complexity of reachability in vector addition systems},
  journal      = {{ACM} {SIGLOG} News},
  volume       = {3},
  number       = {1},
  pages        = {4--21},
  year         = {2016},
  url          = {https://doi.org/10.1145/2893582.2893585},
  doi          = {10.1145/2893582.2893585},
  bibsource    = {dblp computer science bibliography, https://dblp.org}
}

@article{EsparzaGLM17,
  author       = {Javier Esparza and
                  Pierre Ganty and
                  J{\'{e}}r{\^{o}}me Leroux and
                  Rupak Majumdar},
  title        = {Verification of population protocols},
  journal      = {Acta Informatica},
  volume       = {54},
  number       = {2},
  pages        = {191--215},
  year         = {2017},
  url          = {https://doi.org/10.1007/s00236-016-0272-3},
  doi          = {10.1007/S00236-016-0272-3},
  bibsource    = {dblp computer science bibliography, https://dblp.org}
}

@article{GermanS92,
  author       = {Steven M. German and
                  A. Prasad Sistla},
  title        = {Reasoning about Systems with Many Processes},
  journal      = {J. {ACM}},
  volume       = {39},
  number       = {3},
  pages        = {675--735},
  year         = {1992},
  url          = {https://doi.org/10.1145/146637.146681},
  doi          = {10.1145/146637.146681},
  bibsource    = {dblp computer science bibliography, https://dblp.org}
}

@article{Aalst98,
  author       = {Wil M. P. van der Aalst},
  title        = {The Application of {P}etri Nets to Workflow Management},
  journal      = {J. Circuits Syst. Comput.},
  volume       = {8},
  number       = {1},
  pages        = {21--66},
  year         = {1998},
  url          = {https://doi.org/10.1142/S0218126698000043},
  doi          = {10.1142/S0218126698000043},
  bibsource    = {dblp computer science bibliography, https://dblp.org}
}

@inproceedings{Leroux21,
  author       = {J{\'{e}}r{\^{o}}me Leroux},
  title        = {The Reachability Problem for {P}etri Nets is Not Primitive Recursive},
  booktitle    = {62nd {IEEE} Annual Symposium on Foundations of Computer Science, {FOCS}
                  2021, Denver, CO, USA, February 7-10, 2022},
  pages        = {1241--1252},
  publisher    = {{IEEE}},
  year         = {2021},
  url          = {https://doi.org/10.1109/FOCS52979.2021.00121},
  doi          = {10.1109/FOCS52979.2021.00121},
  bibsource    = {dblp computer science bibliography, https://dblp.org}
}

@inproceedings{CzerwinskiO21,
  author       = {Wojciech Czerwinski and
                  Lukasz Orlikowski},
  title        = {Reachability in Vector Addition Systems is Ackermann-complete},
  booktitle    = {62nd {IEEE} Annual Symposium on Foundations of Computer Science, {FOCS}
                  2021, Denver, CO, USA, February 7-10, 2022},
  pages        = {1229--1240},
  publisher    = {{IEEE}},
  year         = {2021},
  url          = {https://doi.org/10.1109/FOCS52979.2021.00120},
  doi          = {10.1109/FOCS52979.2021.00120},
  bibsource    = {dblp computer science bibliography, https://dblp.org}
}

@article{Blondin20,
  author       = {Michael Blondin},
  title        = {The {ABC}s of {P}etri net reachability relaxations},
  journal      = {{ACM} {SIGLOG} News},
  volume       = {7},
  number       = {3},
  pages        = {29--43},
  year         = {2020},
  url          = {https://doi.org/10.1145/3436980.3436984},
  doi          = {10.1145/3436980.3436984},
  bibsource    = {dblp computer science bibliography, https://dblp.org}
}

@article{BlondinFHH17,
  author       = {Michael Blondin and
                  Alain Finkel and
                  Christoph Haase and
                  Serge Haddad},
  title        = {The Logical View on Continuous {P}etri Nets},
  journal      = {{ACM} Trans. Comput. Log.},
  volume       = {18},
  number       = {3},
  pages        = {24:1--24:28},
  year         = {2017},
  url          = {https://doi.org/10.1145/3105908},
  doi          = {10.1145/3105908},
  bibsource    = {dblp computer science bibliography, https://dblp.org}
}

@inproceedings{EsparzaLMMN14,
  author       = {Javier Esparza and
                  Rusl{\'{a}}n Ledesma{-}Garza and
                  Rupak Majumdar and
                  Philipp J. Meyer and
                  Filip Niksic},
  editor       = {Armin Biere and
                  Roderick Bloem},
  title        = {An {SMT}-Based Approach to Coverability Analysis},
  booktitle    = {Computer Aided Verification - 26th International Conference, {CAV}
                  2014, Held as Part of the Vienna Summer of Logic, {VSL} 2014, Vienna,
                  Austria, July 18-22, 2014. Proceedings},
  series       = {Lecture Notes in Computer Science},
  pages        = {603--619},
  publisher    = {Springer},
  year         = {2014},
  url          = {https://doi.org/10.1007/978-3-319-08867-9\_40},
  doi          = {10.1007/978-3-319-08867-9\_40},
  bibsource    = {dblp computer science bibliography, https://dblp.org}
}

@book{david2005discrete,
  title={Discrete, continuous, and hybrid {P}etri nets},
  author={David, Ren{\'e} and Alla, Hassane},
  year={2005},
  publisher={Springer},
  doi={10.1007/978-3-642-10669-9}
}

@inproceedings{BlondinH17,
  author       = {Michael Blondin and
                  Christoph Haase},
  title        = {Logics for continuous reachability in {P}etri nets and vector addition
                  systems with states},
  booktitle    = {32nd Annual {ACM/IEEE} Symposium on Logic in Computer Science, {LICS}
                  2017, Reykjavik, Iceland, June 20-23, 2017},
  pages        = {1--12},
  publisher    = {{IEEE} Computer Society},
  year         = {2017},
  url          = {https://doi.org/10.1109/LICS.2017.8005068},
  doi          = {10.1109/LICS.2017.8005068},
  bibsource    = {dblp computer science bibliography, https://dblp.org}
}

@INPROCEEDINGS{client-server,
  author={Mahulea, C. and Recalde, L. and Silva, M.},
  booktitle={2006 8th International Workshop on Discrete Event Systems}, 
  title={On performance monotonicity and basic servers semantics of continuous {P}etri nets}, 
  year={2006},
  volume={},
  number={},
  pages={345-351},
  doi={10.1109/WODES.2006.382530}}

@inproceedings{HaarK25,
  author       = {Stefan Haar and
                  Juraj Kolc{\'{a}}k},
  editor       = {Fran{\c{c}}ois Fages and
                  Sabine P{\'{e}}r{\`{e}}s},
  title        = {Continuous Petri Nets Faithfully Fluidify Most Permissive Boolean
                  Networks},
  booktitle    = {Computational Methods in Systems Biology - 23rd International Conference,
                  {CMSB} 2025, Lyon, France, September 10-12, 2025, Proceedings},
  series       = {Lecture Notes in Computer Science},
  pages        = {89--105},
  publisher    = {Springer},
  year         = {2025},
  url          = {https://doi.org/10.1007/978-3-032-01436-8\_6},
  doi          = {10.1007/978-3-032-01436-8\_6},
  bibsource    = {dblp computer science bibliography, https://dblp.org}
}

@inproceedings{HeinerGD08,
  author       = {Monika Heiner and
                  David R. Gilbert and
                  Robin Donaldson},
  editor       = {Marco Bernardo and
                  Pierpaolo Degano and
                  Gianluigi Zavattaro},
  title        = {Petri Nets for Systems and Synthetic Biology},
  booktitle    = {Formal Methods for Computational Systems Biology, 8th International
                  School on Formal Methods for the Design of Computer, Communication,
                  and Software Systems, {SFM} 2008, Bertinoro, Italy, June 2-7, 2008,
                  Advanced Lectures},
  series       = {Lecture Notes in Computer Science},
  pages        = {215--264},
  publisher    = {Springer},
  year         = {2008},
  url          = {https://doi.org/10.1007/978-3-540-68894-5\_7},
  doi          = {10.1007/978-3-540-68894-5\_7},
  bibsource    = {dblp computer science bibliography, https://dblp.org}
}

@article{HerajyH18,
  author       = {Mostafa Herajy and
                  Monika Heiner},
  title        = {Adaptive and Bio-semantics of Continuous Petri Nets: Choosing the
                  Appropriate Interpretation},
  journal      = {Fundam. Informaticae},
  volume       = {160},
  number       = {1-2},
  pages        = {53--80},
  year         = {2018},
  url          = {https://doi.org/10.3233/FI-2018-1674},
  doi          = {10.3233/FI-2018-1674},
  bibsource    = {dblp computer science bibliography, https://dblp.org}
}

@article{BalasubramanianER23,
  author       = {A. R. Balasubramanian and
                  Javier Esparza and
                  Mikhail A. Raskin},
  title        = {Finding Cut-Offs in Leaderless Rendez-Vous Protocols is Easy},
  journal      = {Log. Methods Comput. Sci.},
  volume       = {19},
  number       = {4},
  year         = {2023},
  url          = {https://doi.org/10.46298/lmcs-19(4:2)2023},
  doi          = {10.46298/LMCS-19(4:2)2023},
  bibsource    = {dblp computer science bibliography, https://dblp.org}
}

@inproceedings{Balasubramanian20,
  author       = {A. R. Balasubramanian},
  editor       = {Nitin Saxena and
                  Sunil Simon},
  title        = {Parameterized Complexity of Safety of Threshold Automata},
  booktitle    = {40th {IARCS} Annual Conference on Foundations of Software Technology
                  and Theoretical Computer Science, {FSTTCS} 2020, {BITS} Pilani, {K}
                  {K} Birla Goa Campus, Goa, India (Virtual Conference), December 14-18,
                  2020},
  series       = {LIPIcs},
  pages        = {37:1--37:15},
  publisher    = {Schloss Dagstuhl - Leibniz-Zentrum f{\"{u}}r Informatik},
  year         = {2020},
  url          = {https://doi.org/10.4230/LIPIcs.FSTTCS.2020.37},
  doi          = {10.4230/LIPICS.FSTTCS.2020.37},
  bibsource    = {dblp computer science bibliography, https://dblp.org}
}

@article{BlondinEFGHLMT21,
  author       = {Michael Blondin and
                  Matthias Englert and
                  Alain Finkel and
                  Stefan G{\"{o}}ller and
                  Christoph Haase and
                  Ranko Lazic and
                  Pierre McKenzie and
                  Patrick Totzke},
  title        = {The Reachability Problem for Two-Dimensional Vector Addition Systems
                  with States},
  journal      = {J. {ACM}},
  volume       = {68},
  number       = {5},
  pages        = {34:1--34:43},
  year         = {2021},
  url          = {https://doi.org/10.1145/3464794},
  doi          = {10.1145/3464794},
  bibsource    = {dblp computer science bibliography, https://dblp.org}
}

@inproceedings{CzerwinskiJ0O25,
  author       = {Wojciech Czerwinski and
                  Isma{\"{e}}l Jecker and
                  Slawomir Lasota and
                  Lukasz Orlikowski},
  editor       = {Keren Censor{-}Hillel and
                  Fabrizio Grandoni and
                  Jo{\"{e}}l Ouaknine and
                  Gabriele Puppis},
  title        = {Reachability in 3-VASS Is Elementary},
  booktitle    = {52nd International Colloquium on Automata, Languages, and Programming,
                  {ICALP} 2025, Aarhus, Denmark, July 8-11, 2025},
  series       = {LIPIcs},
  pages        = {153:1--153:20},
  publisher    = {Schloss Dagstuhl - Leibniz-Zentrum f{\"{u}}r Informatik},
  year         = {2025},
  url          = {https://doi.org/10.4230/LIPIcs.ICALP.2025.153},
  doi          = {10.4230/LIPICS.ICALP.2025.153},
  bibsource    = {dblp computer science bibliography, https://dblp.org}
}

@article{GurariI81,
  author       = {Eitan M. Gurari and
                  Oscar H. Ibarra},
  title        = {The Complexity of the Equivalence Problem for Simple Programs},
  journal      = {J. {ACM}},
  volume       = {28},
  number       = {3},
  pages        = {535--560},
  year         = {1981},
  url          = {https://doi.org/10.1145/322261.322270},
  doi          = {10.1145/322261.322270},
  bibsource    = {dblp computer science bibliography, https://dblp.org}
}

@inproceedings{BaumannDGIMSZ23,
  author       = {Pascal Baumann and
                  Flavio D'Alessandro and
                  Moses Ganardi and
                  Oscar H. Ibarra and
                  Ian McQuillan and
                  Lia Sch{\"{u}}tze and
                  Georg Zetzsche},
  editor       = {Orna Kupferman and
                  Pawel Sobocinski},
  title        = {Unboundedness Problems for Machines with Reversal-Bounded Counters},
  booktitle    = {Foundations of Software Science and Computation Structures - 26th
                  International Conference, FoSSaCS 2023, Held as Part of the European
                  Joint Conferences on Theory and Practice of Software, {ETAPS} 2023,
                  Paris, France, April 22-27, 2023, Proceedings},
  series       = {Lecture Notes in Computer Science},
  pages        = {240--264},
  publisher    = {Springer},
  year         = {2023},
  url          = {https://doi.org/10.1007/978-3-031-30829-1\_12},
  doi          = {10.1007/978-3-031-30829-1\_12},
  bibsource    = {dblp computer science bibliography, https://dblp.org}
}

@article{BlondinLMOP23,
  author       = {Michael Blondin and
                  Tim Leys and
                  Filip Mazowiecki and
                  Philip Offtermatt and
                  Guillermo A. P{\'{e}}rez},
  title        = {Continuous One-counter Automata},
  journal      = {{ACM} Trans. Comput. Log.},
  volume       = {24},
  number       = {1},
  pages        = {3:1--3:31},
  year         = {2023},
  url          = {https://doi.org/10.1145/3558549},
  doi          = {10.1145/3558549},
  bibsource    = {dblp computer science bibliography, https://dblp.org}
}

@article{Rackoff78,
  author       = {Charles Rackoff},
  title        = {The Covering and Boundedness Problems for Vector Addition Systems},
  journal      = {Theor. Comput. Sci.},
  volume       = {6},
  pages        = {223--231},
  year         = {1978},
  url          = {https://doi.org/10.1016/0304-3975(78)90036-1},
  doi          = {10.1016/0304-3975(78)90036-1},
  bibsource    = {dblp computer science bibliography, https://dblp.org}
}

@book{Vollmer,
  author       = {Heribert Vollmer},
  title        = {Introduction to Circuit Complexity - {A} Uniform Approach},
  series       = {Texts in Theoretical Computer Science. An {EATCS} Series},
  publisher    = {Springer},
  year         = {1999},
  url          = {https://doi.org/10.1007/978-3-662-03927-4},
  doi          = {10.1007/978-3-662-03927-4},
  isbn         = {978-3-540-64310-4},
  bibsource    = {dblp computer science bibliography, https://dblp.org}
}

@article{HesseAB02,
  author       = {William Hesse and
                  Eric Allender and
                  David A. Mix Barrington},
  title        = {Uniform constant-depth threshold circuits for division and iterated
                  multiplication},
  journal      = {J. Comput. Syst. Sci.},
  volume       = {65},
  number       = {4},
  pages        = {695--716},
  year         = {2002},
  url          = {https://doi.org/10.1016/S0022-0000(02)00025-9},
  doi          = {10.1016/S0022-0000(02)00025-9},
  bibsource    = {dblp computer science bibliography, https://dblp.org}
}

@inproceedings{ShakibaSZ25,
  author       = {Yousef Shakiba and
                  Henry Sinclair{-}Banks and
                  Georg Zetzsche},
  title        = {A Complexity Dichotomy for Semilinear Target Sets in Automata with
                  One Counter},
  booktitle    = {40th Annual {ACM/IEEE} Symposium on Logic in Computer Science, {LICS}
                  2025, Singapore, June 23-26, 2025},
  pages        = {594--608},
  publisher    = {{IEEE}},
  year         = {2025},
  url          = {https://doi.org/10.1109/LICS65433.2025.00051},
  doi          = {10.1109/LICS65433.2025.00051},
  bibsource    = {dblp computer science bibliography, https://dblp.org}
}

@article{tovey,
title = {A simplified NP-complete satisfiability problem},
journal = {Discrete Applied Mathematics},
volume = {8},
number = {1},
pages = {85-89},
year = {1984},
issn = {0166-218X},
doi = {https://doi.org/10.1016/0166-218X(84)90081-7},
url = {https://www.sciencedirect.com/science/article/pii/0166218X84900817},
author = {Craig A. Tovey}
}

\appendix
\section{Proofs for Section~\ref{sec:1-CVASS}}

\subsection{Proof of Theorem~\ref{thm:inte-to-qn}}\label{app-subsec:inte-to-qn}
\intetoqn*
\begin{proof}
	Let $\mach = (Q,T)$ be an instance of reachability in 1-CVASS with two configurations $p(x)$ and $q(y)$ and let $\inte$ be an interval. The same argument mentioned in Remark~\ref{rem:source-target} allows us to assume that $x \le y$. We shall now construct another 1-CVASS $\mach'$ with two configurations $p'(x')$ and $q'(y')$ such that $p(x) \act{*}_{\inte} q(y)$ in $\mach$ iff $p'(x') \actqn{*} q'(y')$ in $\mach'$.
    
    First note that if at least one of $x,y \notin \inte$, then there can be no $\inte$-run
    from $p(x)$ to $q(y)$ in $\mach$. In that case, we can simply take $(\mach',p'(x'),q'(y'))$ to be some trivial no-instance of 1-$\Reach_\qn$. Hence, we can assume that $x, y \in \inte$. The required construction of $\mach'$ is now obtained by means of a case analysis on $x,y$. To this end, we say that $x$ (resp. $y$) is an extreme point of $\inte$ if $\inte = [x,b)$ or $\inte = [x,b]$ for some $b$ (resp. if $\inte = (a,y]$ or $\inte = [a,y]$ for some $a$). Our case analysis  considers all four possibilities.

 	\subparagraph*{Case 1: $x,y$ are not extreme points of $\inte$.} 
    Therefore, there exists an $\epsilon > 0$ such that $[x-\epsilon,y+\epsilon] \subseteq \inte$. In this case, we show that $p(x) \act{*}_\inte q(y)$ is possible in $\mach$ iff
 	$p(x) \actqn{*} q(y)$ is possible in $\mach$, which then trivially gives the desired reduction.
 	The left-to-right implication is immediate. For the other side, suppose $p(x) := p_0(x_0) \actqn{\alpha_1 t_1} p_1(x_1) \actqn{\alpha_2 t_2} p_2(x_2) \dots \actqn{\alpha_n t_n} p_n(x_n) := q(y)$ is  a run over the $\qn$-semantics.
 	We say that the transition $t_i := (p_{i-1},w_i,p_i)$ is positive, negative or neutral, if $w_i > 0, w_i < 0$ or $w_i = 0$, respectively.
 	
 	Now, if all the intermediate counter values are already in $[x-\epsilon/2,y+\epsilon/2] \subseteq \inte$, then we are done. Otherwise, (since $x_0 = x_0$ and $x_n = y$), let $i$ be the first position and $j$ be the last position such that $x_{i+1}, x_{j-1} \notin [x-\epsilon/2,y+\epsilon/2]$. Hence, 
 	all the counter values up till $x_i$ and all the counter values from $x_j$ onwards belong to $[x-\epsilon/2,y+\epsilon/2]$.
 	
 	We now give an alternate run between $p_i(x_i)$ and $p_j(x_j)$ of the form
 	$p_i(x_i) \actqn{\beta_{i+1} t_{i+1}} p_{i+1}(x'_{i+1}) \actqn{\beta_{i+2} t_{i+2}} p_{i+2}(x'_{i+2}) \dots p_{j-1}(x'_{j-1}) \actqn{\beta_j t_j} p_j(x_j)$ so that all the intermediate counter values remain in $[x-\epsilon,y+\epsilon]$. Since $[x-\epsilon,y+\epsilon] \subseteq \inte$, plugging this new run between $p_i(x_i)$ and $p_j(x_j)$ into the old run will then give a $\inte$-run from $p(x)$ to $q(y)$, thereby completing the proof of this case.
 	
 	To do this, let $T_+, T_-$ and $T_0$ be the set of positive, negative and neutral transitions from the set $\{t_{i+1},\dots,t_j\}$. We now consider three further sub-cases:
 	\begin{itemize}
 		\item \textbf{Subcase 1:} $x_i = x_j$. By assumption, we have $x_j - x_i = 0 = \sum_{k=i+1}^j \alpha_k w_k$.
 		Let $W = \sum_{k=i+1}^j \alpha_k |w_k|$. We shall now modify the run so that each  transition $t_k$ is now fired by the fraction $\alpha_k/N$ for a suitably chosen $N \in \nn$ which will always allow us to stay in the desired interval
 		$[x-\epsilon,y+\epsilon]$.
 		
 		To this end, we first find an $N \in \nn$ such that $W < N \cdot \epsilon/2$. Then we set each $\beta_k = \alpha_k/N$, i.e., we fire each $t_k$ with fraction $\beta_k$. 
 		It is easy to see that that at any point in the run, the counter value lies in the range $[x_i-W/N,x_i+W/N]$.
 		Indeed, in the worst case, either all the $w_k$ are positive (resp. or all of them are negative) and in that case, the counter value increases (resp. decreases) by at most $\sum_{k=i+1}^j \beta_k |w_k| = W/N$.
 		Since $W/N < \epsilon/2$, it follows that all of the counter values along the run lie in the range
 		$ [x_i-\epsilon/2,x_i+\epsilon/2] \subseteq [x-\epsilon,y+\epsilon]$. 
 		Furthermore, it is clear that the total effect of this run is $\sum_{k=i+1}^j \beta_k w_k = \sum_{k=i+1}^j (\alpha_k/N) w_k = 0$. Hence, this indeed gives a run from $p_i(x_i)$ to $p_j(x_j)$ where all the intermediate counter values lie in $[x-\epsilon,y+\epsilon]$.
 		
 		\item \textbf{Subcase 2:} $x_i < x_j$. By assumption, we have $\Delta := x_j - x_i = \sum_{t_k \in T_+ \cup T_-
 		} \alpha_k w_k$. Let $W_+ := \sum_{t_k \in T_+} \alpha_k w_k$ and $W_- := \sum_{t_k \in T_-} \alpha_k |w_k|$. Note that $W_+ = \Delta + W_-$. We shall now modify the run so that
 		each negative and neutral transition $t_k$ is now fired by the fraction $\alpha_k/N$ (for a suitably chosen $m \in (0,1]$ ) and each positive transition $t_k$ is now fired by the fraction $\alpha_k \cdot m$. 
 		The $N,m$ will be chosen so that the constructed run will always allow us to stay in the desired interval
 		$[x-\epsilon,y+\epsilon]$.
 		
 		To this end, we first find an $N \in \nn$ such that $W_- < N \cdot \epsilon/2$ and then 
 		we find a non-zero fraction $m \in (0,1]$ such that $W_+ \cdot m = (\Delta + W_-/N)$. (The latter is possible since $W_+ = \Delta + W_- > \Delta + W_-/N$).
		Now, we set $\beta_k = \alpha_k/N$ if $k \in T_- \cup T_0$ and $\beta_k = \alpha_k \cdot m$ if $k \in T_+$
		By the same argument as the previous subcase, at any point in the run, the counter value is 
		in the range $[x_i-\epsilon/2,x_j+\epsilon/2] \subseteq [x-\epsilon,y+\epsilon]$
		and the total effect of the run is $W_+ \cdot m - W_-/N = \Delta$. Hence, this indeed gives a run from $p_i(x_i)$ to $p_j(x_j)$ where all the intermediate counter values lie in $[x-\epsilon,y+\epsilon]$.
		
		\item \textbf{Subcase 3:} $x_i > x_j$. This is similar to subcase 2, except we reverse the roles of $T_+$ and $T_-$. More precisely, we have $\Delta := x_i - x_j = \sum_{t_k \in T_+ \cup T_- } \alpha_k v_k$.
		Let $W_+ := \sum_{t_k \in T_+} \alpha_k w_k$ and $W_- := \sum_{t_k \in T_-} \alpha_k |w_k|$. Note that $W_+ = \Delta + W_-$. 
		
		We first find an $N \in \nn$ such that $W_+ < N \cdot \epsilon/2$ and then 
		we find a non-zero fraction $m \in (0,1]$ such that $W_- \cdot m = (\Delta + W_+/N)$.
		Now, we set $\beta_k = \alpha_k/N$ if $k \in T_+ \cup T_0$ and $\beta_k = \alpha_k \cdot m$ if $k \in T_-$.
		Similar to the previous subcase, this gives the required run.
 	\end{itemize}
 
	This completes the proof of all the three subcases and also the proof of this case.
	
	\subparagraph*{Case 2: $x$ is an extreme point, $y$ is not.} Therefore, $\inte$ is of the form $[x,b)$ or $[x,b]$
	with $y < b$. This means that in any $\inte$-run from $p(x)$ and $q(y)$ in $\mach$, the first non-neutral
	transition must be a positive transition, as otherwise the counter value would have to leave the interval $\inte$.
	Hence, any $\inte$-run from $p(x)$ to $q(y)$ in $\mach$ must necessarily be of the form
	$p(x) \act{*}_\qn d(x) \act{}_\qn e(z) \act{*}_\inte q(y)$ where $z > x$ and the 
	run between $p(x)$ and $d(x)$ consists only of neutral transitions. Furthermore, since $z > x$, it follows that there is an $\epsilon > 0 $ such that $[z-\epsilon,y+\epsilon] \subseteq \inte$. Applying the previous case to the run
	$e(z) \act{*}_\inte q(y)$, we can conclude that this can happen iff $e(z) \actqn{*} q(y)$. Hence, $p(x) \act{*}_\inte q(y)$ is possible iff there is a run of the form $p(x) \act{*}_\qn d(x) \act{}_\qn e(z) \act{*}_\qn q(y)$ where $z > x$ and the run between $p(x)$ and $d(x)$ consists only of neutral transitions.
	We now construct another 1-CVASS $\mach'$ that explicitly tracks runs of this form.
	
	The 1-CVASS $\mach'$ will have all the states of $\mach$ as well as another copy $\{d' : d \in Q\}$.
	The first copy will track all the neutral transitions fired from $p(x)$ and the moment a positive transition is fired, it is simulated by moving to the second copy of states, where $\mach$ is simulated freely.
	Formally, the transitions of $\mach'$ are as follows: Suppose $(d,w,e) \in \delta$ is a transition of $\mach$.
	Then, corresponding to it $\mach'$ has the following transitions:
	$(d,w,e)$ if $w = 0$, $(d,w,e')$ if $w > 0$ and $(d',w,e')$. 
	
	From the construction of $\mach'$, it is easy to verify that
	$p(x) \act{*}_\qn d(x) \act{}_\qn e(z) \act{*}_\qn q(y)$ is a $\qn$-run in $\mach$ with $z > x$ and the 
	run between $p(x)$ and $d(x)$ consisting only of neutral transitions iff 
	$p(x) \actqn{*} q'(y)$ is a $\qn$-run in $\mach'$. By the above discussion 
	this allows us to conclude that $p(x) \act{*}_\inte q(y)$ is an $\inte$-run in $\mach$ iff
	$p(x) \actqn{*} q'(y)$ is a $\qn$-run in $\mach'$.
	
    \subparagraph*{Case 3: $x$ is not an extreme point, $y$ is. } Therefore, $\inte$ is of the form $(a,y]$ or $[a,y]$
	with $a < x$. This means that in any $\inte$-run from $p(x)$ and $q(y)$ in $\mach$, the last non-neutral
	transition must be a positive transition, as otherwise the counter value would have to have left the interval $\inte$ to reach $y$. Similar to the previous case,
	we can show that $p(x) \act{*}_\inte q(y)$ is possible in $\mach$ iff
	 iff there is a run of the form $p(x) \act{*}_\qn d(z) \act{}_\qn e(y) \act{*}_\qn q(y)$ where $z < y$ and the run between $e(y)$ and $q(y)$ consists only of neutral transitions.

	Hence, the construction for this case is similar to the previous case, except we have to track the last non-neutral fired transitions instead of the first non-neutral fired transition. More precisely, we let $\mach'$ have all the states of $\mach$ as well as another copy $\{d' : d \in Q\}$. Furthermore, its transitions are as follows: Suppose $(d,w,e) \in \delta$ is a transition of $\mach$. Then, corresponding to it $\mach'$ has the following transitions: $(d,w,e)$, 
	$(d,w,e')$ if $w > 0$ and $(d',w,e')$ if $w = 0$. 
	Similar to the previous case, we can show that $p(x) \act{*}_\inte q(y)$ is an $\inte$-run in $\mach$ iff $p(x) \actqn{*} q'(y)$ is a  $\qn$-run in $\mach'$.
	
	\subparagraph*{Case 4: $x,y$ are both extreme points. } Therefore, $\inte = [x,y]$.
	If $x = y$, it follows that any $\inte$-run between $p(x)$ and $q(y)$ must use only neutral transitions of $\mach$, and so we simply delete all the positive and negative transitions of $\mach$ to get $\mach'$.
	
	On the other hand if $x \neq y$, this case is a combination of the above two cases and so we have to track both the first and last fired non-neutral transitions. So, we have to simply combine the constructions from the above two cases. More precisely, we let $\mach'$ have all the states of $\mach$ as well as two other copies $\{d' , d'' : d \in Q\}$. Furthermore, its transitions are as follows:  Suppose $(d,w,e) \in \delta$ is a transition of $\mach$.
	Then, corresponding to it $\mach'$ has the following transitions: $(d',w,e')$, 
	$(d,w,e')$, if $w > 0$, $(d',w,e'')$ if $w > 0$ and $(d,w,e), (d'',w,e'')$ if $w = 0$.
	Similar to the previous two cases, it can be shown that 
	$p(x) \act{*}_\inte q(y)$ is an $\inte$-run in $\mach$ iff
	$p(x) \actqn{*} q''(y)$ is a $\qn$-run in $\mach'$.
\end{proof}

\subsection{Proof of Theorem~\ref{thm:charac-equality}}\label{app-subsec:charac-equality}
\characequality*
\begin{proof}	
	Suppose $p(x) := p_0(x_0) \actqn{\alpha_1 t_1} p_1(x_2) \actqn{\alpha_2 t_2} \dots p_{n-1}(x_{n-1}) \actqn{\alpha_n t_n} p_n(x_n) := q(x)$ is a $\qn$-run where each $t_i = (p_{i-1},w_i,p_i)$.
	This gives rise to a corresponding path $P$ in $\graph_\mach$ given by $p := p_0 \act{w_1} p_1 \act{w_2} \dots \act{w_n} p_n := q$. Now, if $T_{+}(P) \neq \emptyset$ and $T_{-}(P) = \emptyset$ (or vice versa),
	it would follow that $\sum_{i=1}^n \alpha_i w_i > 0$ (or $\sum_{i=1}^n \alpha_i w_i < 0$, respectively), which would be a contradiction. Hence, $P$ is a desired path in $\graph_\mach$.
	
	Suppose, $P$ is a path in $\graph_\mach$ from $p$ to $q$
	such that either $T_+(P), T_-(P) \neq \emptyset$ or $T_+(P) = T_-(P) = \emptyset$.
	Let $P = p:= p_0 \act{w_1} p_1 \act{w_2} \dots p_{n-1} \act{w_n} p_n := q$.
	Correspondingly, we have the transitions $t_i := (p_{i-1},w_i,p_i$ in $\mach$ for each $i$. By assumption, either both $T_+(P)$ and $T_{-}(P)$ are empty or both of them are non-empty.
	Suppose both of them are empty. Then it is clear that $w_i = 0$ for each $i$ and so if we fire each $t_i$ with any arbitrary non-zero fraction, we have a $\qn$-run from $p(x)$ to $q(x)$ in $\mach$.
	On the other hand, suppose both $T_+(P)$ and $T_{-}(P)$ are non-empty. In this case, we fire each $t_i$ with a fraction $\alpha_i$ as follows: If $i \in T_+(P)$ then we set $\alpha_i := \frac{1}{w_i |T_{+}(P)|}$; if $i \in T_{-}(P)$ then we set $\alpha_i := \frac{-1}{w_i|T_{-}(P)|}$; if $i \in T_0(P)$ then we set $\alpha_i$ to be any arbitrary fraction. By construction, the
	net effect of this run on the counter is exactly $|T_{+}(P)| \cdot \frac{1}{|T_{+}(P)|} - |T_{-}(P)| \cdot \frac{1}{|T_{-}(P)|} = 0$. Hence, this gives a $\qn$-run from $p(x)$ to $q(x)$.
\end{proof}

\subsection{Proof of Proposition~\ref{prop:pumpable-cycle}}\label{app-subsec:pumpable-cycle}
\pumpablecycle*
\begin{proof}
	Let $C$ be a pumpable cycle given by $r := r_0 \act{w_1} r_1 \act{w_2} \dots r_{n-1} \act{w_n} r_n := r$. 
	Let $t_i := (r_{i-1},w_i,r_i)$ be the transitions in $\mach$ corresponding to this path.	
	We will now fire each $t_i$ with some appropriately-chosen fraction $\alpha_i$ so that the net effect 
	of the fired run on the counter will be some non-zero fraction $\Delta := \frac{y-x}{m}$ for some $m \in \nn$. Intuitively, we will do this, by firing each
	negative transition corresponding to $T_-(C)$ with a very tiny fraction so that their combined effect is of the form $-\epsilon/m$ for a small $\epsilon$ and large $m$. We then fire each positive transition corresponding to $T_+(C)$ with an appropriately chosen fraction so that their
	combined effect is exactly equal to $(v-u+\epsilon)/m$. 
	This enables us to construct a run from $r(u)$ to $r(u+\Delta)$. 
	Repeating this same run over and over, we can reach 
	$r(u+2\Delta), r(u+3\Delta)$ and so on till $r(u+m\Delta) := r(v)$. We now formalize this idea.

	Let $W_+(C) := \sum_{w_i > 0} w_i$ and $W_-(C) := \sum_{w_i < 0} |w_i|$. By assumption, $W_+(C) > 0$. 
	If $W_-(C) = 0$, then we let $\epsilon = 0$; otherwise, we let $\epsilon \in (0,1]$ such that $\epsilon \le W_-(C)$. We then let $\delta := v - u + \epsilon$ with $m = \lceil \delta \rceil$.
	We now fire each $t_i$ with a fraction $\alpha_i$ as follows: If $w_i \ge 0$, then we set 
	$\alpha_i := \frac{\delta}{m W_{+}(C)}$; otherwise, we set $\alpha_i := \frac{\epsilon}{m W_{-}(C)}$. 
	By construction, it is seen that this is a $\qn$-run from $r(u)$ to $r(u+\Delta)$
	where $\Delta = \sum_{w_i > 0} \frac{v - u + \epsilon}{m W_{+}(C)} \cdot w_i - \sum_{w_i < 0} \frac{\epsilon}{m W_{-}(C)} \cdot |w_i| = \frac{v-u+\epsilon}{m} - \frac{\epsilon}{m} = \frac{v-u}{m}$.
	Repeating this same run from $r(u+\Delta)$ we can then reach $r(u+2\Delta)$, then $r(u+3\Delta)$ and so on,
	till $r(u+m\Delta) = r(v)$.
\end{proof}

\subsection{Proof of Theorem~\ref{thm:charac-inequality}}\label{app-subsec:charac-inequality}
\characinequality*
\begin{proof}
	Let us first prove the left-to-right implication. Suppose $p(x) := p_0(x_0) \actqn{\alpha_1 t_1} p_1(x_2) \actqn{\alpha_2 t_2} \dots p_{n-1}(x_{n-1}) \actqn{\alpha_n t_n} p_n(x_n) := q(y)$ is a $\qn$-run where each $t_i = (p_{i-1},w_i,p_i)$. This gives rise to a corresponding path $P$ in $\graph_\mach$ given by $p := p_0 \act{w_1} p_1 \act{w_2} \dots \act{w_n} p_n := q$. If there is a pumping cycle (at some state $p_i$), then we set $I$ to be the path from $p_0$ to $p_i$, $C$ to be the pumping cycle from $p_i$ to $p_i$ and $F$ to be the path from $p_i$ to $p_n$. This gives rise to a knot $K = (I,C,F)$ of $\graph_\mach$. Hence, in this case, we are done.
	
	Suppose there is no pumping cycle at any of the states $p_i$. Then, any cycle that might appear in $P$ must not have any strictly positive labels. Let $P'$ be the path obtained from $P$ by removing all of its cycles. 
	$P'$ is a simple path from $p$ to $q$ in $\graph_\mach$. Furthermore, since we obtained $P'$ by only removing 
    edges from $P$ with non-positive labels, it follows that $W(P') \ge W(P) = \sum_{1 \le i \le n} w_i \ge \sum_{1 \le i \le n} \alpha_i w_i = y-x$. Hence, in this case too, we are done.

    Let us now prove the right-to-left implication. If there is a simple path $P$ from $p$ to $q$ in $\graph_\mach$
    such that $W(P) \ge y-x$, then by Proposition~\ref{prop:simple-path}, we are done.
    Otherwise, suppose $K = (I,C,F)$ is a knot from $p$ to $q$ in $\graph_\mach$.
    Let $r$ be the state at which the cycle $C$ begins and ends.
    Since $x < y$, we can pick some $\epsilon, \delta \in (0,1]$ such that $x+\epsilon W(I) < y-\delta W(F)$. Now, by Proposition~\ref{prop:simple-path} we have a
    $\qn$-run from $p(x)$ to $r(x+\epsilon W(I))$. By Proposition~\ref{prop:pumpable-cycle}, we have
    a $\qn$-run from $r(x+\epsilon W(I))$ to $r(y - \delta W(F))$. Finally, once again by Proposition~\ref{prop:simple-path}, we have a path from $r(y-\delta W(F))$ to $q(y)$. Stringing together these three runs, we get a $\qn$-run from $p(x)$ to $q(y)$ and so in this case too, we are done.
\end{proof}

\subsection{Proof of Theorem~\ref{lem:testing-conditions}}\label{app-subsec:testing-conditions}

\testingconditions*
\begin{proof}
    We begin by recalling these three conditions, which we have to test in $\graph_\mach$:
    \begin{itemize}
    \item Condition 1: $x = y$ and there is a path $P$ in $\graph_\mach$ from $p$ to $q$ such that either $T_+(P), T_-(P) \neq \emptyset$ or $T_+(P) = T_-(P) = \emptyset$.
    \item Condition 2:  $x < y$ and there is a knot $K$ in $\graph_\mach$ from $p$ to $q$.
    \item Condition 3: $x < y$ and there is a path $P$ of length at most $|Q|$ in $\graph_\mach$ from $p$ to $q$ such that $W(P) \ge y-x$.
\end{itemize}

\paragraph*{Testing Conditions 1 and 2}  We first describe our $\NL$ algorithms for Conditions 1 and 2. For both of these conditions, we shall show that whenever a desired path or knot exists, there is one of size at most $4|Q|$.

Indeed, suppose there is a path $P$ in $\graph_\mach$ from $p$ to $q$ such that either $T_+(P), T_-(P) \neq \emptyset$ or $T_+(P) = T_-(P) = \emptyset$. Suppose the latter condition is true. Then, we can safely remove any cycle from $P$ whilst still maintaining the property that both the set of positive and negative edges along the path is empty. Hence, this way, we can shrink the length of $P$ until it becomes at most $|Q|$.

Similarly, suppose the former condition is true. Without loss of generality, we can assume that a transition from $T_+(P)$ is encountered first in $P$ before any transition from $T_-(P)$ (The proof for the other case is similar). Let $(p',w',q')$ and $(p'',w'',q'')$ be the first occurrence of an edge from $T_+(P)$ and $T_-(P)$ along the path $P$. 
Then, we can safely remove any cycle in $P$ that is i) either between $p$ and $p'$, ii) either between $q'$ and $p''$ or iii) either between $q''$ and $q$. Hence, this way, we can shrink the length of $P$ until it becomes at most $3|Q|$.

The above argument immediately gives an $\NL$ algorithm for testing Condition 1: Simply guess a path of length at most $3|Q|$ from $p$ in a step-by-step manner and accept if this path ends at $q$ and either we encountered both a positive and a negative edge along the path or we encountered neither along the path. This completes the algorithm for Condition 1.

For testing Condition 2, suppose $K = (I,C,F)$ is a knot from $p$ to $q$ in $\graph_\mach$. 
Let $r$ be the state at which the cycle $C$ begins and ends. By definition of a knot, we can safely remove
any cycles in $I$ and $F$ and we would still be left with a knot. Hence, this way, we can shrink the lengths of $I, F$ until they become at most $|Q|$. Similarly, for the cycle $C$, let $(p',w',q')$ be the first occurrence of an edge from $T_+(C)$. Then, we can safely remove any cycle in $C$ that is i) either between $r$ and $p'$ or ii) between $q'$ and $r$. 
Hence, this way, we can shrink the length of $C$ until it becomes at most $2|Q|$. Overall,
this means that the knot $K$ can be taken to be of length at most $4|Q|$. Similar to the previous condition, this immediately gives an $\NL$ algorithm for testing Condition 2.

\paragraph*{Testing Condition 3} We now show how to test condition 3. First, suppose the 
update vectors are over the integers and unary encoding is used. Then, to test Condition 3, we simply need to guess a path $P$ of length at most $|Q|$ in a step-by-step manner, maintain the sum of the labels encountered along the path and then accept if this sum is ultimately $\ge y-x$. Since
all the update vectors are integers encoded in unary, the sum of the labels at each step can be stored in logarithmic space (by maintaining this sum in binary). This allows us to do the entire computation in $\NL$.

Now, suppose the update vectors are integers and binary encoding is used. In this case,
we first make the graph $\graph_\mach$ acyclic as follows: We expand the set of vertices to $\{(d,i) : d \in Q, 0 \le i \le n\}$. Then, for every edge $(d,w,e)$ of $\graph_\mach$, we will have edges $((d,i),w,(e,i+1))$ for every $0 \le i \le n-1$. We will also have the edges $((d,i),0,(d,i+1))$ for every $d \in Q, 0 \le i \le n-1$. It is immediately clear that there is a path $P$ in $\graph_\mach$ from $p$ to $q$ of length at most $|Q|$ with $W(P) \ge y-x$ iff there is a path $P'$ in this new graph $\graph'$ from $(p,0)$ to $(q,n)$ with $W(P') \ge y-x$. The latter problem is precisely the problem of coverability in 1-dimensional integer VASS, i.e., 1-CVASS over the $\qn$-semantics where all the fractions have to be 1. This latter problem is known to be in $\AC^1$~\cite[Theorem III.2]{ShakibaSZ25}. Since $\graph'$ can be constructed from $\graph_\mach$ in logarithmic space and since $\mathsf{L} \subseteq \NL \subseteq \AC^1$, it follows that we have an $\AC^1$ algorithm in this case.

Finally, suppose the update vectors are rational numbers. We show that we can reduce this case to the previous case of integer update vectors over the binary encoding. Indeed, given the graph $\graph_\mach$, we first multiply all of its labels with the produce $B$ of all of its denominators, as well as the denominators of $x,y$. It immediately follows that there is a path $P$ in $\graph_\mach$ from $p$ to $q$ of length at most $|Q|$ with $W(P) \ge y-x$ iff there is a path $P'$ in this new graph $\graph'$ of length at most $|Q$ from $p$ to $q$ with $W(P') \ge B(y-x)$. Furthermore, $\graph'$ only has integer labels
over its edges and so by the previous paragraph, we can solve this in $\AC^1$.
Since the product of a given collection of numbers
can be computed in $\AC^1$~\cite[Corollary 4.1]{HesseAB02}, this proves that the third condition can be tested in $\AC^1$ in this case as well. This completes the description of the algorithms for Condition 3.
\end{proof}

\section{Proofs of Section~\ref{sec-2CVASS}}

%

\subsection{Proof of Proposition~\ref{proposition-restricted-reachability-clause-gadget}}
\label{app-proposition-restricted-reachability-clause-gadget-proof}
\propositionrestrictedreachabilityclausegadget*
\begin{proof}
	\(\Rightarrow\): Suppose there is a restricted $\qn$-run from $b_0(f(A))$ to $b_m(0)$. By construction of $\mach_C$, in the $i^{th}$ step along this run the counter is decremented by  $\sum_{\ell \in S} \frac{1}{P(\ell)}$ for some non-empty subset $S\subseteq \{\ell_i^1, \ell_i^2, \ell_i^3\}$. This implies
	that $f(A) = \sum_{\ell \in S'} \frac{f_\ell}{P(\ell)}$ for some subset $S' \subseteq \{x_1,\dots,x_n,\overline{x}_1,\dots,\overline{x}_n\}$ such that $S'$ contains at least one literal from every clause and each $f_\ell \le \#(\ell) \le 4$.
	By Proposition~\ref{prop:Egyptian-prime-encoding},  $S'$ is precisely the set of literals set to true by $A$, which implies that $A$ satisfies $\varphi$.
	
	\(\Leftarrow\): Suppose $A$ satisfies the formula $\varphi$. Then, we can construct a restricted $\qn$-run from $b_0(f(A))$
	to $b_m(0)$ as follows: Initially we begin at the configuration $b_0(f(A)) := b_0(f(A)$. 
	At any point, if we are at $b_i(k_i)$
	for some $0 \le i < m$, we do the following: We consider the non-empty subset $S$ of literals $\ell_i^1, \ell_i^2, \ell_i^3$ that are satisfied by $A$ and we pick the transition from $b_i$ to $b_{i+1}$ that decrements the counter by $\sum_{\ell \in S} \frac{1}{P(\ell)}$. 
	Note that along the run the counter is decremented exactly by $\sum_{\ell : A(\ell) \text{ is true} } \frac{\#(\ell)}{P(\ell)} = f(A)$ and so the counter will be exactly 0 when the state $b_m$ is reached.
\end{proof}

\subsection{Proof of Lemma~\ref{lemma-coverability-2cvass-npcomplete}}
\label{app-lemma-coverability-2cvass-npcomplete-proof}
\lemmacoverabilitycvassnpcomplete*
\begin{proof}
	We will prove the claim for the case of $\qnz$-runs; the proof for the case of $\qn$-runs is similar (and in fact easier).
	
	\(\Rightarrow \): Suppose there is a restricted $\qnz$-run from $a_0(0)$ to $b_m(0)$ in $\mach$ given by
	$$a_0(0) := a_0(x_0) \act{t_1} a_1(x_1) \dots \act{t_n} a_n(x_n) \act{s_1} b_1(y_1) \dots \act{s_m} b_m(y_m) := b_m(0)$$ By the definition of $K$ it follows that each $x_i, y_j \le 3K$. Hence, if we fire the exact same run in $\mach'$, we will get a $\qnz$-run of the form 	\begin{multline*}
		a_0(0,3K) := a_0(x_0,3K-x_0) \act{t_1} a_1(x_1,3K-x_1+1) \dots \act{t_n} a_n(x_n,3K-x_n+n) \\\act{s_1} b_1(y_1,3K-y_1+n+1) \dots \act{s_m} b_m(y_m,3K-y_m+K) := b_m(0,4K)	
	\end{multline*}

	\(\Leftarrow \): Suppose there is a $\qnz$-run from $a_0(0,3K)$ covering $b_m(0,4K)$ in $\mach'$. Let the run be given by	\begin{multline*}a_0(0,3K) := a_0(x_0,x_0') \act{\alpha_1 t_1} a_1(x_1,x_1') \dots \act{\alpha_n t_n} a_n(x_n,x_n') \\\act{\beta_1 s_1} b_1(y_1,y_1') \dots \act{\beta_m s_m} b_m(y_m,y_m')\end{multline*} 	with $y_m \ge 0$ and $ y_m' \ge 4K$. Notice that for each $i$, we have that $$x_i + x_i' = 3K + \sum_{j=1}^i \alpha_j$$ and $$y_i + y_i' = 3K + \sum_{j=1}^n \alpha_j + \sum_{j=1}^i \beta_j$$ Hence, $$0+ 4K \le y_m + y_m' = 3K + \sum_{j=1}^n \alpha_j + \sum_{j=1}^m \beta_j \le 3K + (n+m) = 4K$$ Therefore
	$y_m + y_m' = 4K$ and $\sum_{j=1}^n \alpha_j + \sum_{j=1}^m \beta_j = K$, which can happen
	iff $y_m = 0, y_m' = 4K$ and each $\alpha_j, \beta_j$ is 1.
	This then immediately implies that we have a restricted $\qnz$-run $a_0(0) := a_0(x_0) \act{t_1} a_1(x_1) \dots \act{t_n} a_n(x_n) \act{s_1} b_1(y_1) \dots \act{s_m} b_m(y_m) := b_m(0)$ in $\mach$.
\end{proof}
\section{Proofs of Section~\ref{sec-3CVASS}}

\subsection{Proof of Proposition~\ref{prop:assignment-gadget}}
\label{app-prop:assignment-gadget-proof}
\assignmentgadget*
\begin{proof}
	By construction a $\qnz$-run from $a_i(g_i,h_i)$ to $a_{i+1}(g_{i+1},h_{i+1})$ is possible iff it is either of the form $$a_i(g_i,h_i) \act{\alpha (-P(x_i)^{\#(x_i)},1)} a_i'(g_i-\alpha P(x_i)^{\#(x_i)}, h_i+\alpha) \act{\beta (1,-1)}  a_{i+1}(g_i-\alpha P(x_i)^{\#(x_i)}+ \beta, h_i+\alpha-\beta)$$ or of the form 
	$$a_i(g_i,h_i) \act{\alpha (-P(\overline{x}_i)^{\#(\overline{x}_i)},1)} a_i'(g_i-\alpha P(\overline{x}_i)^{\#(\overline{x}_i)}, h_i+\alpha) \act{\beta (1,-1)} a_{i+1}(g_i-\alpha P(\overline{x}_i)^{\#(\overline{x}_i)}\alpha + \beta,  h_i+\alpha-\beta)$$

	Since $Z(a_i) = Z(a_{i+1}) = \{2\}$ and $Z(a_i') = \{1\}$, it follows that the first case can happen iff $h_i = 0, h_i+\alpha - \beta = 0$ and $g_i-\alpha P(x_i)^{\#(x_i)} = 0$. This is true iff 
	$\alpha = \beta = g_i/P(x_i)^{\#(x_i)}$ and so $h_i = h_{i+1} = 0$ and $g_{i+1} = \beta = g_i/P(x_i)^{\#(x_i)}$.
	Analogously, we can show that the second case can happen iff $h_i = h_{i+1} = 0$
	and $g_{i+1} = g_i/P(\overline{x}_i)^{\#(\overline{x}_i)}$.
\end{proof}

	\subsection{Proof of Proposition~\ref{prop:clause-gadget}}
\label{app-prop:clause-gadget-proof}
\clausegadget*
\begin{proof}
	By construction a $\qnz$-run from $b_i(g_i,h_i)$ to $b_{i+1}(g_{i+1},h_{i+1})$ is possible iff it is of the form $$b_i(g_i,h_i) \act{\alpha(-1,1)} b_i'(g_i-\alpha, h_i+\alpha) \act{\beta(p,-1)} b_{i+1}(g_i-\alpha + p\beta, h_i+\alpha-\beta)$$ where $p$ is the product of some non-empty
	subset of $\{P(\ell_i^1),P(\ell_i^2),P(\ell_i^3)\}$.
	
	Since $Z(b_i) = Z(b_{i+1}) = \{2\}$ and $Z(b_i') = \{1\}$, it follows that this can happen iff $h_i = 0, h_i+\alpha - \beta = 0$ and $g_i-\alpha = 0$. This is true iff $\alpha = \beta = g_i$ and so
	$h_i = h_{i+1} = 0, g_i = \alpha \le 1$ and $g_{i+1} = p \beta = g_i \cdot p$.
\end{proof}

\subsection{Proof of Lemma~\ref{lemma-clauseapplication}}\label{app-lemma-clauseapplication-proof}
\clauseapplication*
\begin{proof}
	\(\Leftarrow \): Suppose $A$ satisfies $\varphi$. Then we can construct a $\qnz$-run from $b_0(f(A),0)$ to $b_m(1,0)$ as follows:
	Initially we begin at $b_0(f(A),0)$. 
	At any point, if we are at $b_i(g_i,0)$ for some $0 \le i < m$, 
	we do the following: We consider the non-empty subset $S$ of literals in $\ell_i^1,\ell_i^2,\ell_i^3$
	that are satisfied by $A$ and we let $p = \prod_{\ell \in S} P(\ell)$. By Proposition~\ref{prop:clause-gadget}, there is a $\qnz$-run from $b_i(g_i,0)$ to $b_{i+1}(g_i \cdot p, 0)$ and we extend the run from $b_i$ accordingly. 
	By construction, it is clear that, during the entire run from $b_0$ to $b_m$, we multiply the first counter by exactly $\prod_{\ell: A(\ell) \text{ is true} } P(\ell)^{\#(\ell)} = 1/f(A)$. Hence, the final value of the first counter will be exactly 1 and so we reach $b_m(1,0)$ at the end of the run.
	
\(\Rightarrow\):	Suppose there is a $\qnz$-run from $b_0(f(A),0)$ to a configuration $b_m(g,h)$ such that $g \ge 1, h \ge 0$. This run must be of the form
	$$b_0(g_0,h_0) \actqnz{*} b_1(g_1,h_1) \dots \actqnz{*} b_m(g_m,h_m) := b_m(1,0) $$	
	
	Repeated applications of Proposition~\ref{prop:clause-gadget} imply that, for each $1 \le i \le m$, there is a number $j_i$ which is the product of some non-empty
	subset of $\{P(\ell_i^1),P(\ell_i^2),P(\ell_i^3)\}$ such that $g_i = j_i \cdot g_{i-1}$
	This implies that, $g_m := 1 =  g_0 \cdot \prod_{1 \le i \le m} j_i = f(A) \cdot \prod_{1 \le i \le m} j_i$. By definition, this can happen iff for each $1 \le i \le m$, $1/f(A)$ is divisible by 
	at least one of $P(\ell_i^1), P(\ell_i^2), P(\ell_i^3)$. By definition of $f(A)$, this implies that 
	$A$ sets at least one of $\ell_i^1, \ell_i^2, \ell_i^3$ to true, thereby proving that $A$ satisfies $\varphi$.
\end{proof}

	\subsection{Proof of Lemma~\ref{lemma-coverability}}\label{app-lemma-coverability-proof}
\coverability*
\begin{proof}
	\(\Leftarrow\): Suppose $A$ satisfies $\varphi$. By Lemma~\ref{lemma-clauseapplication}, there is a $\qnz$-run from $b_0(f(A),0)$ to
	$b_m(1,0)$. If we then fire the transition $(b_m,(-1,1),b_m')$ with fraction 1, we can reach $b_m'(0,1)$.
	
\(\Rightarrow \):	Suppose there is a $\qnz$-run from $b_0(f(A),0)$ covering $b_m'(0,1)$. The last step in this run
	must be of the form $b_m(g,h) \act{\alpha (-1,1)} b_m'(g',h')$ where $\alpha \in (0,1], g' \ge 0, h' \ge 1$.
	Since $Z(b_m) = \{2\}$ and $Z(b_m') = \{1\}$, it follows that $h = g' = 0$.
	Hence, we have $g' := 0 = g - \alpha, h' = h + \alpha = \alpha$.
	Since $h' \ge 1$ by assumption, $\alpha = 1$ and so $h' = g = 1$ as well.
	Hence, if we consider the prefix of this run until the last step, we actually have a $\qnz$-run from $b_0(f(A),0)$ to $b_m(1,0)$. By Lemma~\ref{lemma-clauseapplication}, $A$ must satisfy $\varphi$.
\end{proof}

\subsection{Proof of Proposition~\ref{prop:cc}}\label{app-prop:cc-proof}
\cc*
\begin{proof}
	Remember that since we consider only \(\qnz \)-semantics, for a $\qnz$-run of the form Eq.~\ref{eq:run-without-zero-tests},  the values $h_i, r_i, g_i', k_i' \geq  0$ for all \(i\). Hence all of them are 0 iff their combined sum is 0, i.e., a $\qnz$-run of the form Eq.~\ref{eq:run-without-zero-tests} is a good run iff 
	\begin{equation}\tag{\ref{eq:control}} 
		\sum_{i=0}^{n-1} g_i'  + \sum_{i=1}^n h_i + \sum_{i=0}^m k_i' + \sum_{i=1}^m r_i = 0   
	\end{equation}
	
	Let us now try and expand out each summand in this equation. By definition, for each $0 \le i \le n-1$, $g_i'$ is the effect of the run on the first counter until the point the state $a_i'$ is reached. 
	This means that
	\begin{equation*}
		g_i' = 1 + \sum_{j=0}^{i-1} (\alpha_j v_j[1] + \alpha_j' v_j'[1]) + \alpha_i v_i[1]   
	\end{equation*}
	Hence $\sum_{i=0}^{n-1} g_i' = \sum_{i=0}^{n-1} \big(1 + \sum_{j=0}^{i-1} (\alpha_j v_j[1] + \alpha_j' v_j'[1]) + \alpha_i v_i[1]\big)$. Simplifying this sum by bunching together common terms, we get
	\begin{equation}\label{eq:g_i}
		\sum_{i=0}^{n-1} g_i' = n + \sum_{i=0}^{n-1} (n-i) \alpha_i v_i[1] + \sum_{i=0}^{n-2} (n-1-i) \alpha_i' v_i'[1]    
	\end{equation}
	
	Similarly, each $h_i, k_i', r_i$ satisfies
	\begin{equation*}
		h_i = \sum_{j=0}^{i-1} (\alpha_j v_j[2] + \alpha_j' v_j'[2])   
	\end{equation*}
	\begin{equation*}
		k_i' = 1 + \sum_{j=0}^{n-1} (\alpha_j v_j[1] + \alpha_j' v_j'[1]) + 
		\sum_{j=0}^{i-1} (\beta_j w_j[1] + \beta_j' w_j'[1]) + \beta_i w_i[1]   
	\end{equation*}
	\begin{equation*}
		r_i =\sum_{j=0}^{n-1} (\alpha_j v_j[2] + \alpha_j' v_j'[2]) + 
		\sum_{j=0}^{i-1} (\beta_j w_j[2] + \beta_j' w_j'[2])
	\end{equation*}
	
	Hence, similar to the $g_i'$ we get,
	\begin{equation}\label{eq:h_i}
		\sum_{i=1}^n h_i = \sum_{i=0}^{n-1} (n-i) \alpha_i v_i[2] + \sum_{i=0}^{n-1} (n-i) \alpha_i' v_i'[2]  
	\end{equation}
	\begin{equation}\label{eq:k_i}
		\sum_{i=0}^{m} k_i' = (m+1) + \big((m+1) \cdot \sum_{i=0}^{n-1} (\alpha_i v_i[1] + \alpha_i' v_i'[1])\big) + 
		\sum_{i=0}^{m} (m-i+1) \beta_i w_i[1] + \sum_{i=0}^{m-1} (m-i) \beta_i' w_i'[1]    
	\end{equation}
	\begin{equation}\label{eq:r_i}
		\sum_{i=1}^{m} r_i = \big(m \cdot \sum_{i=0}^{n-1} (\alpha_i v_i[2] + \alpha_i' v_i'[2])\big) + 
		\sum_{i=0}^{m-1} (m-i) \beta_i w_i[2] + \sum_{i=0}^{m-1} (m-i) \beta_i' w_i'[2]    
	\end{equation}
	
	Plugging in Equations~\ref{eq:g_i},~\ref{eq:h_i},~\ref{eq:k_i}~and~\ref{eq:r_i} into Equation~\ref{eq:control}
	and bunching together common terms, we get that a $\qnz$-run of the form Eq. ~\ref{eq:run-without-zero-tests} is a good $\qnz$-run iff
	\begin{multline*}\tag{\ref{eq-tester}} 
		n + (m+1) + \left(\sum_{i=0}^{n-1} \alpha_i \big((n-i+m) (v_i[1] + v_i[2]) + v_i[1]\big) \right) + 
		\left(\sum_{i=0}^{n-1} \alpha_i' (n-i+m) (v_i'[1] + v_i'[2]) \right) + \\  
		\left(\sum_{i=0}^{m} \beta_i \big((m-i) (w_i[1] + w_i[2]) + w_i[1]\big) \right) +
		\left(\sum_{i=0}^{m-1} \beta_i' (m-i) (w_i'[1] + w_i'[2]) \right) = 0
	\end{multline*}
\end{proof}

\subsection{Proof of Lemma~\ref{lem:satisfiable-with-controlling-counter}}\label{app-lem:satisfiable-with-controlling-counter-proof}

We first begin by formally constructing $\mach'$. $\mach'$ is constructed from $\mach$ by adding a third counter to $\overline{\mach}$, keeping all of its states and changing its transitions as follows:
\begin{itemize}
    \item The transition $(a_i,v_i,a_i')$ is modified as $(a_i,(v_i,-(n-i+m) (v_i[1] + v_i[2]) - v_i[1]), a_i')$.
    \item The transition $(a_i',v_i',a_{i+1})$ is modified as
    $(a_i',(v_i',-(n-i+m) (v_i'[1] + v_i'[2])), a_{i+1})$.
    \item The transition $(b_i,w_i,b_i')$ is modified as 
    $(b_i,(w_i, -(m-i) (w_i[1] + w_i[2]) - w_i[1]),b_i')$.
    \item The transition $(b_i',w_i',b_{i+1})$ is modified as 
    $(b_i',(w_i', -(m-i) (w_i'[1] + w_i'[2])),b_{i+1})$.
\end{itemize}

Recall that $L = n+(m+1) + (n \cdot (n+m) \cdot 3K) + (n \cdot (n+m) \cdot 2K) + ((m+1) \cdot m \cdot 3K) + (m \cdot m \cdot 2K)$ is chosen so that $C(P)$ for any prefix $P$ of a run always stays in the range $[-L,L]$. We are now ready to prove the following lemma.

\controllingcounter*

\begin{proof}
	By construction of $\mach'$ there is a $\qnz$-run in $\overline{\mach}$ of the form 
	\begin{multline*}
		a_0(1,0) := a_0(g_0,h_0) \actqnz{\alpha_0 v_0} a_0'(g_0',h_0') \actqnz{\alpha_0' v_0'} a_1(g_1,h_1)  \cdots \actqnz{\alpha_{n-1}' v_{n-1}'}  a_n(g_n,h_n) := \\ b_0(k_0,r_0)   \actqnz{\beta_0 w_0} b_0'(k_0',r_0') \actqnz{\beta_0' w_0'} b_1(k_1,r_1) \dots \actqnz{\beta_m w_m} b_m(k_m,r_m) \actqnz{\beta_m' w_m'} b_m' (k_m',r_m') 
	\end{multline*}
	with $k_m' \ge 0, r_m' \ge 1$ iff there is a run in $\mach'$ of the form
	\begin{multline*}
		a_0\big(1,0,L-(n+m+1)\big) := a_0\big(g_0,h_0,L-(n+m+1)\big) \actqnz{\alpha_0 v_0} a_0'\big(g_0',h_0', L-C(P_1')\big)\\ \actqnz{\alpha_0' v_0'} a_1\big(g_1,h_1,L-C(P_1)\big)  \cdots \actqnz{\alpha_{n-1}' v_{n-1}'}  a_n\big(g_n,h_n,L-C(P_n)\big) := b_0\big(k_0,r_0,L-C(Q_0)\big)  \\  \actqnz{\beta_0 w_0} b_0'\big(k_0',r_0',L-C(Q_0')\big) \actqnz{\beta_0' w_0'} b_1\big(k_1,r_1,L-C(Q_1)\big) \cdots \\ \actqnz{\beta_m w_m} b_m\big(k_m,r_m,L-C(Q_m)\big) \actqnz{\beta_m' w_m'} b_m' \big(k_m',r_m',L-C(Q_m')\big) 
	\end{multline*}
	
	where $P_i, P_i', Q_i, Q_i'$ represent the prefix of the run in $\overline{\mach}$ till
	the steps corresponding to the updates $\alpha_i v_i, \alpha_i' v_i', \beta_i w_i, \beta_i w_i'$
	respectively.
	By definition of $L$, $L-C(P_i), L-C(P_i'), L-C(Q_i), L-C(Q_i')$ are all non-negative and 
	so the run in $\mach'$ is indeed a $\qnz$-run.
	Furthermore, since $C(Q_m')$ is the tester of the original $\qnz$-run in $\overline{\mach}$,
	$C(Q_m')$ is always non-negative and furthermore $C(Q_m') = 0$
	iff the final value of the third counter in the $\qnz$-run in $\mach'$ is at least $L$.
	
	Hence, we can conclude that there is a $\qnz$-run 
	from $a_0(1,0)$ covering $b_m'(0,1)$ in $\overline{\mach}$ whose tester equals 0 iff there is a $\qnz$-run from $a_0\big(1,0,L-(n+m+1)\big)$ covering $b_m'(0,1,L)$ in $\mach'$.
\end{proof}

\end{document}